\keywords{Integrity constraints, the  implication problem}%TODO mandatory; please add comma-separated list of keywords
\newcolumntype{Y}{>{\centering\arraybackslash}X}
\definecolor{mygreen}{rgb}{0,0.6,0}
\definecolor{mygray}{rgb}{0.5,0.5,0.5}
\definecolor{mymauve}{rgb}{0.58,0,0.82}
\definecolor{cadmiumgreen}{rgb}{0.0, 0.42, 0.24}
\ttfamily\color{mygreen}\bfseries,
\newcommand{\pr}{{\tt \mathrm{Pr}}}
\newcommand{\ignore}[1]{}
\newcommand{\dan}[1]{{\texttt{\color{red} Dan: [{#1}]}}}
\newcommand{\added}[1]{{\color{black}{#1}}}
\newcommand{\sx}{\mathbf{x}}
\newcommand{\sy}{\mathbf{y}}
\newcommand{\sz}{\mathbf{z}}
\newcommand{\su}{\mathbf{u}}
\newcommand{\sv}{\mathbf{v}}
\newcommand{\se}{\mathbf{e}}
\newcommand{\bs}{\mathbf{s}}
\newcommand\mvd{\twoheadrightarrow}
\newcommand{\fd}{\rightarrow}
\newcommand{\real}{{\mathbb{R}}}
\newcommand{\realPos}{{\mathbb{R}_+}}
\newcommand*{\rom}[1]{\expandafter\@slowromancap\romannumeral #1@}
\newcommand{\RNum}[1]{\uppercase\expandafter{\romannumeral #1\relax}}
\newtheorem{notation}[thm]{Notation}
\newcommand{\sel}[1]{{\sigma}}
\newcommand{\cut}[1]{}
\newcommand{\eat}[1]{}
\newcommand{\defeq}{\stackrel{\text{def}}{=}}
\newcommand{\setof}[2]{\{{#1}\mid{#2}\}}        % Set (as in \setof{x}{x>0}).
\newcommand{\R}{{\mathbb R}}
\def\closure{\mathbf{cl}}
\def\conhull{\mathbf{conhull}}
\def\set#1{\mathord{\{#1\}}}
\def\conehull#1{\mathbf{conhull}\left(#1\right)}
\def\cl#1{\mathbf{cl}\left(#1\right)}
\def\iset{\mathrm{m}}
\def\isetc{\mathrm{m}^c}
\def\measure{\mu}
\def\imeasure{\mu^*}
\def\eqdef{\mathrel{\stackrel{\textsf{\tiny def}}{=}}}
\def\I{\mathcal{I}}
\def\D{\mathcal{D}}
\def\e#1{\emph{#1}}
\def\implies{\Rightarrow}
\newcommand{\pow}[1]{2^{{#1}}} % \mathbf{Pow}
\def\vplus{{+}}
\newenvironment{repeatresult}[2]
{\vskip0.5em\par\textsc{#1} #2.\em}
{\vskip1em}
\newenvironment{reptheorem}[1]{\begin{repeatresult}{Theorem}{#1}}{\end{repeatresult}}
\def\appendix{\par
	\section*{APPENDIX}
	\setcounter{section}{0}
	\setcounter{subsection}{0}
	\def\thesection{\Alph{section}} }
\def\entropicFunctions{\Gamma^*}
\def\entropicPlhdrl{\Gamma}
\def\positiveConen{\mathcal{P}_n}
\def\stepfn{\mathcal{S}_n}
\def\iset{\mathrm{m}}
\def\isetc{\mathrm{m}^c}
\def\measure{\mu}
\def\imeasure{\mu^*}
\def\eqdef{\mathrel{\stackrel{\textsf{\tiny def}}{=}}}
\def\I{\mathcal{I}}
\def\B{\mathcal{B}}
\def\e#1{\emph{#1}}
\def\vneg{\mathbf{neg}}
\def\dotprod{{\cdot}}
\def\impliedCI{\tau}
\def\APXI{{\tt \textsc{ApxI}}}
\def\EI{\tt \textsc{EI}}
\begin{document}
\title[Integrity Constraints Revisited]{Integrity Constraints Revisited:\texorpdfstring{\\}{} From Exact to Approximate Implication}
%\runningtitle{Integrity Constraints Revisited: From Exact to Approximate Implication}
% \titlecomment{{\lsuper*}OPTIONAL comment concerning the title, \eg, if a variant or an extended abstract of the paper has appeared elsewhere.}

\author[B.~Kenig]{Batya Kenig\rsuper{a}}	%required

\author[D.~Suciu]{Dan Suciu\rsuper{b}}	%optional

\address{Technion}	%required
\email{batyak@technion.ac.il}  %optional
%\thanks{thanks 1, optional.}	%optional

\address{University of Washington}	%optional
\email{suciu@cs.washington.edu}  %optional
%\thanks{thanks 2, optional.}	%optional

%\maketitle

%TODO mandatory: add short abstract of the document
\begin{abstract}
	Integrity constraints such as functional dependencies (FD) and
multi-valued dependencies (MVD) are fundamental in database schema
design.  Likewise, probabilistic conditional independences (CI) are
crucial for reasoning about multivariate probability distributions.
The implication problem studies whether a set of constraints
(antecedents) implies another constraint (consequent), and has been
investigated in both the database and the AI literature, under the
assumption that all constraints hold {\em exactly}.  However, many
applications today consider constraints that hold only {\em
approximately}.  In this paper we define an approximate implication as
a linear inequality between the degree of satisfaction of the
antecedents and consequent, and we study the {\em relaxation
problem}: when does an exact implication relax to an approximate
implication?  We use information theory to define the degree of
satisfaction, and prove several results.  First, we show that any
implication from a set of data dependencies (MVDs+FDs)
can be relaxed to a simple linear inequality with a factor at most
quadratic in the number of variables; when the consequent is an FD,
the factor can be reduced to 1.
Second, we prove that there exists an
implication between CIs that does not admit any relaxation; however,
we prove that every implication between CIs relaxes ``in the limit''.
Then, we show that the implication problem for
differential constraints in market basket analysis also admits a
relaxation with a factor equal to 1. Finally, we show how some of the results in the paper can be derived using the {\em I-measure} theory, which relates between information theoretic measures and set theory.
Our results recover, and
sometimes extend, several previously known results about the
implication problem: the implication of MVDs and the implication of
differential constraints for frequent item sets can be checked by
considering only 2-tuple relations.

\eat{
implication
problem between CI's has a unit relaxation (under some additional
assumptions).  Finally, we relax the information theoretic
inequalities to rank-inequalities, and show that under certain restrictions to the antecedents and consequent, every implication
problem between CI's has a unit relaxation.
}

\eat{
Data dependencies that include functional dependences, and
multi-valued dependences are fundamental in database schema design.
Likewise, probabilistic conditional independences are crucial for
efficiently representing and reasoning with multivariate probability
distributions.  In both cases the independences arise from the
application of a set of inference rules (i.e., axioms) to an initial
set of independences, where an independence relation is \e{implied} by
the initial set.  \e{Polymatroids} are positive, non-decreasing,
submodular set functions.  The inequalities that hold for all
polymatroids, also called Shannon-inequalities, have been heavily
investigated in the information theory community.  It easily follows
that every Shannon inequality corresponds to an implication (of data
or probabilistic independence).  In this paper we study the other
direction: the properties of the inequalities that correspond to the
implications. Understanding these inequalities would enable us to
bound the magnitude of the implied independence by the magnitude of
its forebears, allowing us to infer \e{approximate} data independences
that almost hold in the data, from an initial set of approximate
independences.  While not all implications correspond to an inequality
(some correspond to an infinite set of inequalities), we show that for
almost all cases of interest each implication corresponds to a linear
inequality.  This, however, is not enough to bound the magnitude of
the implied independence. We also show that for a large subset of
these inequalities the coefficients of the inequality are at most 1,
providing a tight bound on the magnitude of the implied independence.
}

\eat{
 Namely, what can we say about 
the family of inequalities that are relevant to inference of conditional independences.
Understanding these linear inequalities would enable us to bound
the derived independence by a linear function of its predecessors.
In this paper we show that for all most all cases of interest such a linear 
inequality always exists. We also show that for a large subset of these
inequalities the coefficients of the inequality are at most 1, providing 
a tight bound on the implied independence.
}

\end{abstract}
\maketitle

\section{Introduction}

\label{sec:intro}

\begin{table*}[t]
  \begin{tabularx}{\linewidth}{cYYYc}
			\toprule
			\multirow{3}{*}{Cone}&\multicolumn{4}{c}{Relaxation Bounds}\\
			\cmidrule{2-5}
			&\multirow{2}{*}{General} &{MVDs+FDs}&{MVDs+FDs}&Disjoint MVDs+FDs\\		
			&&$\implies$ FD&$\implies$ any&$\implies$ MVD/FD\\
			\midrule
			$\entropicPlhdrl_n$&$(2^n)!$~(Thm.~\ref{thm:RelaxationInGamman})&$1$~(Thm. \ref{thm:MainInGamman})&$\frac{n^2}{4}$~(Thm. \ref{thm:MainInGamman})&$1$~(Thm. \ref{thm:DisjointSaturated})\\
			$\entropicFunctions_n$&$\infty$~~(Thm. \ref{th:no:approx})&$1$~(Thm. \ref{thm:MainInGamman})&$\frac{n^2}{4}$~(Thm. \ref{thm:MainInGamman})&$1$~(Thm. \ref{thm:DisjointSaturated})\\	
			$\positiveConen$&$1$~(Thm. \ref{thm:PositiveConeUnitRelaxation})&$1\hspace{0.1cm}$(Thm. \ref{thm:PositiveConeUnitRelaxation})&$1$~(Thm. \ref{thm:PositiveConeUnitRelaxation})&$1$~(Thm. \ref{thm:PositiveConeUnitRelaxation})\\
			\bottomrule
		\end{tabularx}
		\caption{Summary of results: relaxation bounds for the implication $\Sigma \implies \tau$ for the sub-cones of $\entropicPlhdrl_n$  under various restrictions.
                % R: punctuation consistency
                (1) \textit{General}; no restrictions to either $\Sigma$ or $\impliedCI$. (2) $\Sigma$ is a set of saturated CIs and conditional entropies (i.e., MVDs+FDs in databases), and  $\impliedCI$ is a conditional entropy. (3) $\Sigma$ is a set of saturated CIs and conditional entropies,  $\impliedCI$ is \e{any} CI. (4) \textit{Disjoint} integrity constraints
                % R: do not give the impression that the following restriction applies throughout
                where the terms in $\Sigma$ are both saturated and \e{disjoint} (see Definition~\ref{def:disjoint} in Sec.~\ref{sec:PBoundedRelaxations}), and $\impliedCI$ is saturated.}
		\label{tab:AIResults}
\end{table*}

Traditionally, integrity constraints are assertions about a database that are stated by the database administrator and enforced by the system during updates.  However, in 
several applications of Big Data, integrity constraints are discovered, or mined in a database
instance, as opposed to being asserted by the administrator~\cite{DBLP:journals/is/GiannellaR04,DBLP:conf/vldb/SismanisBHR06,DBLP:conf/icde/ChuIPY14,DBLP:conf/cikm/BleifussBFRW0PN16,DBLP:journals/pvldb/0001N18}.
For example, data cleaning can be done by first learning conditional
functional dependencies in some reference data, then using them to
identify inconsistencies in the test
data~\cite{DBLP:journals/ftdb/IlyasC15,DBLP:conf/icde/ChuIPY14}.
Causal
reasoning~\cite{spirtes2000causation,DBLP:journals/jmlr/PelletE08,DBLP:conf/sigmod/SalimiGS18}
and learning sum-of-product
networks~\cite{DBLP:conf/uai/PoonD11,pmlr-v48-friesen16,DBLP:conf/aaai/0001VMNEK18}
repeatedly discover conditional independencies in the data.
Constraints also arise in many other domains, for example in the
\e{frequent itemset problem}~\cite{DBLP:conf/pods/2005,10.1007/3-540-44503-X_14}, or as
\e{measure based constraints}~\cite{DBLP:journals/is/SayrafiGG08} in
applications like Dempster-Shafer theory, possibilistic theory, and
game theory (see discussion in~\cite{DBLP:journals/is/SayrafiGG08}).
In all these applications, quite often the constraints are learned
from the data, and are not required to hold exactly, but it suffices
if they hold only to a certain degree.

The classical \e{implication problem} asks whether a set of
constraints, called the {\em antecedents}, logically imply another
constraint called the {\em consequent}.  In this setting, both
antecedents and consequent are required to hold exactly, hence we
refer to it as an \e{exact implication} ($\EI$).  The database literature
has extensively studied the $\EI$ problem for integrity constraints and
shown that the implication problem is decidable and axiomatizable for
Functional Dependencies (FDs) and Multivalued Dependencies
(MVDs)~\cite{Maier:1983:TRD:1097039,10.1007/978-3-642-39992-3_17,DBLP:journals/tods/ArmstrongD80,DBLP:conf/sigmod/BeeriFH77},
and undecidable for Embedded Multivalued Dependencies
(EMVDs)~\cite{DBLP:journals/iandc/Herrmann06}. The Armstrong axioms, for instance, provide a sound and complete system for inferring FDs when the antecedents are FDs~\cite{DBLP:journals/tods/ArmstrongD80}, and Beeri et al.~\cite{DBLP:conf/sigmod/BeeriFH77} present a set of axioms which are sound and complete for inferring MVDs and FDs when the antecedents are a set of FDs and MVDs.

\added{
Conditional Independencies
(CI) are assertions of the form $X \perp Y \mid Z$, stating
that $X$ is independent of $Y$ conditioned on $Z$. A CI is \e{saturated} if it covers all variables in the joint distribution (i.e., $XYZ=$ all variables); it is called a \e{conditional} if it has the form $X \fd Y$ and $Y$ is a function of $X$.}
The AI community has
extensively studied the $\EI$ problem for Conditional Independencies (CI), and has shown that
the implication problem is decidable and axiomatizable for saturated
CIs~\cite{GeigerPearl1993} (where $XYZ=$ all variables). On the negative side, it has been shown that in the general case, there exists no finite, sound and complete inference system for the implication of CIs~\cite{StudenyCINoCharacterization1990}. In the Frequent Itemset Problem, a
constraint like $X \rightarrow Y\vee Z \vee U$ means that every basket
that contains $X$ also contains at least one of $Y, Z, U$, and the
implication problem here is also decidable and
axiomatizable~\cite{Sayrafi:2005:DC:1065167.1065213}.

\paragraph{The Relaxation Problem.} In this paper we consider a new problem,
called the {\em relaxation problem}: if an exact implication holds,
does an approximate implication hold too?  For example, suppose we
prove that a given set of FDs implies another FD, but the input data
satisfies the antecedent FDs only to some degree: to what degree does
the consequent FD hold in the database?  
The relaxation
problem asks whether we can convert an exact implication into an \e{approximate implication}.  In other words, we ask how the error in the antecedents propagates to the error in its consequent.
When relaxation
holds, then the error to the consequent can be bounded, and any inference system for proving exact implication,
e.g. using a set of axioms or some algorithm, can be used to infer an
approximate implication.

In order to study the relaxation problem we need to measure the degree
of satisfaction of a constraint.  In this paper we use Information
Theory. This is the natural semantics for modeling CIs of
multivariate distributions, because $X \perp Y \mid Z$ iff
$I(X;Y|Z)=0$ where $I$ is the \e{conditional mutual information}. FDs
and MVDs are special cases of
CIs~\cite{DBLP:journals/tse/Lee87,Dalkilic:2000:ID:335168.336059,DBLP:journals/tsmc/WongBW00}
(reviewed in Section~\ref{subsec:ci:ic}), and thus they are naturally
modeled using the information theoretic measure $I(X;Y|Z)$ or $H(Y|X)$; in contrast, EMVDs do not appear
to have a natural interpretation using information theory, and we will
not discuss them further in this work.
An \e{approximate
	implication} ($\APXI$) is an inequality that (numerically) bounds the information-theoretic measure of the
consequent (e.g., conditional entropy $H(\cdot|\cdot)$ if it is an FD, or the conditional mutual information $I(\cdot;\cdot|\cdot)$ if it is an MVD) by a linear combination of the information-theoretic measures of the antecedents.
The link between integrity constraints and information theoretic measures is, in itself, not new. Several papers have argued that information
theory is a suitable tool to express integrity
constraints~\cite{DBLP:journals/tse/Lee87,Dalkilic:2000:ID:335168.336059,DBLP:journals/tsmc/WongBW00,DBLP:journals/is/Malvestuto86,DBLP:journals/is/GiannellaR04}.

An exact implication ($\EI$) becomes an assertion of the form
$(\sigma_1=0 \wedge \sigma_2=0 \wedge \ldots) \Rightarrow (\tau=0)$,
while an approximate implication ($\APXI$) is a linear inequality
$\tau \leq \lambda\cdot \left(\sum \sigma_i\right)$, where $\lambda \geq 0$, and
$\tau, \sigma_1, \sigma_2, \ldots$ are information theoretic
measures.  We say that a class of constraints can be
{\em relaxed} if every $\EI$ where the antecedents are from this class, implies the corresponding $\APXI$;
we also say that this $\EI$ admits a $\lambda$-relaxation, when we want to specify the factor $\lambda$ in the $\APXI$.  By the non-negativity of the Shannon information measures (described in Section~\ref{sec:notations}), we get that an
$\APXI$ always implies the corresponding~$\EI$.

\paragraph{Results.} We make several contributions, summarized in
Table~\ref{tab:AIResults}.  We start by showing in
Section~\ref{sec:PBoundedRelaxations} that MVDs$+$FDs admit an
$n^2/4$-relaxation, where $n$ is the number of variables.  When the
consequent is an FD, we show that implication admits a 1-relaxation.
Thus, whenever an exact implication holds between MVD$+$FDs, a simple
linear inequality also holds between their associated information
theoretic terms.  In fact, we prove a stronger result that holds for
CIs in general, which implies the result for MVDs$+$FDs. In Section~\ref{sec:imeasure}, we further show that
under some additional syntactic restrictions to the antecedents, the bound can be tightened: from a $n^2/4$-relaxation to a 1-relaxation,
even when the consequent is not an FD. We leave open the question of whether
a 1-relaxation exists in general.

So far, we have restricted ourselves to saturated or conditional CIs
(which correspond to MVDs or FDs).  In
Section~\ref{sec:gamman} we remove all restrictions, and
prove a negative result: there exists an $\EI$ that does not relax
(Eq.~\eqref{eq:kr:sigma:tau}, based on an example
in~\cite{DBLP:journals/tit/KacedR13}).  Nevertheless, we show that
every $\EI$ can be relaxed to its corresponding  $\APXI$ plus an error term, which can be made
arbitrarily small, at the cost of increasing the factor $\lambda$.
This result implies that {\em every} $\EI$ is a consequence of its corresponding inequality $\APXI$, plus an error term.
In fact, the $\EI$ in Eq.~\eqref{eq:kr:sigma:tau} follows
from an inequality by Mat{\'u}{\v s}~\cite{Matus2007}, which is
precisely the associated $\APXI$ plus an error term; our result shows that
every $\EI$ can be proven in this style.

In Section~\ref{sec:BoundedRelaxation} we consider approximate (and exact) implications that can be proved using \e{Shannon's inequalities} (monotonicity and submodularity, reviewed in
Section~\ref{subsec:information:theory}).
In general, Shannon's inequalities are
sound but incomplete for proving exact and approximate
implications that hold for all probability distributions~\cite{DBLP:journals/tit/ZhangY97,DBLP:journals/tit/ZhangY98}, but they are complete for deriving inequalities that hold for all \e{polymatroids} (defined in Section~\ref{subsec:information:theory})~\cite{Yeung:2008:ITN:1457455}. 
We also prove that every exact implication that holds for all polymatroids relaxes to an approximate
implication with a finite upper bound $\lambda \leq (2^n)!$, and a
lower bound $\lambda \geq 3$; the tightness of these bounds remain open.

In Section~\ref{sec:marketbasket} we show that the Shannon inequalities are sound and complete for implication from  \e{measure-based constraints}~\cite{DBLP:journals/is/SayrafiGG08} that arise in market-basket analysis.
More generally,
in Section~\ref{sec:pn} we restrict the class of {\em  models} used to check an
implication, to probability distributions with exactly 2 outcomes (tuples), each with probability 1/2; we justify
this shortly.  We prove that, under this restriction, the implication
problem has a 1-relaxation.  Restricting the models leads to a
complete but unsound method for checking general implication, however
this method is sound for saturated$+$conditional CIs (as we show in
Section~\ref{sec:PBoundedRelaxations}), and is also sound for deriving implications from Frequent Itemset constraints (as we show in Section~\ref{sec:marketbasket}).

In Section~\ref{sec:imeasure} we extend some of the results in this paper and provide alternative proofs using the {\em I-measure} theory that relates between Shannon's information measures and set-theory. Specifically, we extend the result of Section~\ref{sec:PBoundedRelaxations}, and provide an alternative proof to the result of Section~\ref{sec:marketbasket}. 

\paragraph{Two Consequences.} While our paper is focused on relaxation, our
results have two consequences for the exact implication problem. The
first is a {\em 2-tuple model property}: an exact implication, where
the antecedents are saturated or conditional CIs, holds \added{iff it holds on all
uniform probability distributions with 2 tuples.}
A similar result is
known for MVD$+$FDs~\cite{Sagiv:1981:ERD:322261.322263}. Geiger and
Pearl~\cite{GeigerPearl1993}, building on an earlier result by
Fagin~\cite{DBLP:journals/jacm/Fagin82}, prove that every set of CIs
has an \e{Armstrong model}: a discrete probability distribution that
satisfies only the CIs and their consequences, and no other CI.  The
Armstrong model is also called a \e{global witness}, and, in general,
can be arbitrarily large.  Our result concerns
a {\em local witness}:
for any $\EI$, if it fails on some probability distribution, then it
fails on a 2-tuple uniform distribution.

The second consequence concerns the equivalence between the implication
problem of saturated$+$conditional CIs with that of MVD$+$FDs.  It
is easy to check that the former implies the latter
(Section~\ref{sec:notations}).  Wong et
al.~\cite{DBLP:journals/tsmc/WongBW00} prove the other direction (i.e., if an implication from MVDs+FDs holds then the same implication holds for saturated$+$conditional CIs), relying on the sound and complete
axiomatization of MVDs~\cite{DBLP:conf/sigmod/BeeriFH77}. Our
2-tuple model property implies the other direction 
almost immediately, leading to a much simpler proof in Section~\ref{sec:PBoundedRelaxations}.

\added{This article extends the conference publication by the authors~\cite{DBLP:conf/icdt/KenigS20}. We have added in this article all the proofs and intermediate results that were excluded from the conference paper, as well as examples illustrating the ideas and methods introduced.}

%%%%%%%%%%%%%%%%%%%%%%%%%%%%%%%%%%%%%%%%%%%%%%%%%%%%%%%%%%%%%%%%

\section{Notation and Preliminaries}

\label{sec:notations}

We denote by $[n] = \set{1,2,\ldots,n}$.  If
$\Omega=\set{X_1,\ldots, X_n}$ denotes a set of variables and
$U, V \subseteq \Omega$, then we abbreviate the union $U \cup V$ by
$UV$. 
% For two vectors $u,v \in \real^n$ we denote their inner product by
% $u\dotprod v$.

\subsection{Integrity  Constraints and Conditional Independence}

\label{subsec:ci:ic}

A \e{relation instance} $R$ over signature
$\Omega=\set{X_1,\dots,X_n}$ is a finite set of tuples 
with attributes
$\Omega$.  Let $X, Y, Z \subseteq \Omega$.  We say that the instance
$R$ satisfies the \e{functional dependency} (FD) $X \fd Y$, and write
$R \models X \fd Y$, if for all $t_1, t_2 \in R$, $t_1[X]=t_2[X]$
implies $t_1[Y]=t_2[Y]$.  We say that $R$ satisfies the \e{embedded
  multivalued dependency} (EMVD) $X \mvd Y \mid Z$, and write
$R \models X \mvd Y \mid Z$, if for all $t_1, t_2 \in R$,
$t_1[X]=t_2[X]$ implies $\exists t_3 \in R$ such that
$t_1[XY]=t_3[XY]$ and $t_2[XZ]=t_3[XZ]$.  One can check that
$X \mvd Y \mid Y$ iff $X \fd Y$.  When $XYZ = \Omega$, then we call
$X \mvd Y \mid Z$ a \e{multivalued dependency}, MVD; notice that
$X,Y,Z$ are not necessarily
disjoint~\cite{DBLP:conf/sigmod/BeeriFH77}.

A set of constraints $\Sigma$ {\em implies} a constraint $\tau$, in
notation $\Sigma \Rightarrow \tau$, if for every instance $R$, if
$R \models \Sigma$ then $R \models \impliedCI$.  The implication
problem has been extensively studied in the literature; Beeri et
al.~\cite{DBLP:conf/sigmod/BeeriFH77} gave a complete axiomatization
of FDs and MVDs along with a polynomial time procedure for deciding implication. Herrman~\cite{DBLP:journals/iandc/Herrmann06}
showed that the implication problem for EMVDs is undecidable. In the following we refer to this type of implication as \e{Exact Implication}, abbreviated $\EI$.

Recall that two discrete random variables $X, Y$ are called {\em
  independent} if $p(X=x, Y=y) = p(X=x)\cdot p(Y=y)$ for all outcomes
$x,y$. We say that $X, Y$ are {\em 	conditionally independent} given another discrete random variable $Z$ if $p(X=x, Y=y|Z=z) = p(X=x|Z=z)\cdot p(Y=y|Z=z)$ for all outcomes $x,y$, and $z$.
Fix $\Omega=\set{X_1,\dots,X_n}$
a set of $n$ jointly
distributed discrete random variables with finite domains
$\D_1,\dots,\D_n$, respectively; let $p: \D_1\times \dots \times \D_n\rightarrow [0,1]$ be the probability mass.  
\begin{notation}
\label{notation:Xalpha}
For
$\alpha \subseteq [n]$, denote by $X_\alpha$ the joint random variable
$(X_i: i \in \alpha)$ with domain
$\D_\alpha \defeq \prod_{i \in \alpha} D_i$.	
\end{notation}
We write
$p \models X_\beta \perp X_\gamma | X_\alpha$ when $X_\beta, X_\gamma$
are conditionally independent
given $X_\alpha$. Applying our previous definition, this means that for any triple $(x_\alpha,x_\beta,x_\gamma)$ where $x_\alpha \in \D_\alpha$, $x_\beta \in \D_\beta$ and $x_\gamma \in \D_\gamma$, we have that $p(x_\alpha,x_\beta,x_\gamma)=p(x_\beta|x_\alpha)p(x_\gamma|x_\alpha)$.
In the special case where
$\beta=\gamma$, then $p \models X_\beta\perp X_\beta | X_\alpha$ has the following meaning. Let $x_\alpha\in \D_\alpha$ such that $p(x_\alpha)>0$, and let $x_\beta\in \D_\beta$. Applying the definition of conditional independence directly gives us that $p(x_\beta,x_\beta |x_\alpha)=p(x_\beta|x_\alpha)p(x_\beta|x_\alpha)$.
On the other hand, we know that $p(x_\beta,x_\beta |x_\alpha) \equiv p(x_\beta, |x_\alpha)$. Hence, we get that $p(x_\beta, |x_\alpha)=p(x_\beta|x_\alpha)p(x_\beta|x_\alpha)$. 
This holds only in one of the following conditions: (1) $p(x_\beta|x_\alpha)=0$ or (2) $p(x_\beta|x_\alpha)=1$. In other words, given that $X_\alpha=x_\alpha$ there is exactly one possible assignment $X_\beta=x_\beta$ such that $p(x_\beta|x_\alpha)>0$.  Hence, $X_\alpha$ functionally determines $X_\beta$, and we denote this by $p \models X_\alpha \fd X_\beta$.

An assertion $Y \perp Z | X$ is called a {\em Conditional
  Independence} statement, or a CI; this includes $X \rightarrow Y$ as
a special case.  When $XYZ=\Omega$ we call it \e{saturated}. When $XY \subseteq \Omega$ and $Z=\emptyset$, we call it \e{marginal}. A set of
CIs $\Sigma$ {\em implies} a CI $\tau$, in notation
$\Sigma \Rightarrow \tau$, if every probability distribution that
satisfies $\Sigma$ also satisfies $\tau$.  This implication problem
has also been extensively studied: Pearl and
Paz~\cite{DBLP:conf/ecai/PearlP86} gave a sound but incomplete set of
{\em graphoid axioms}, Studeny~\cite{StudenyCINoCharacterization1990}
proved that no finite axiomatization exists, while Geiger and
Pearl~\cite{GeigerPearl1993} gave a complete axiomatization for
saturated, and marginal CIs.

Lee~\cite{DBLP:journals/tse/Lee87} observed the following connection
between database constraints and CIs. 
The \e{empirical
	distribution} of a relation $R$ is the uniform distribution over its tuples,  in other words, $\forall t\in R$, $p(t) = 1/|R|$.  Then:
\begin{lemC}[\cite{DBLP:journals/tse/Lee87}]\label{lem:MVDMI}
  For all $X,Y,Z \subseteq \Omega$ such that $XYZ = \Omega$.
\begin{align}
  R \models & X \fd Y & \Leftrightarrow &&& p \models X \fd Y & 
\mbox{ and } &&
  R \models & X \mvd Y|Z & \Leftrightarrow &&& p \models (Y \perp Z|X)  \label{eq:mvd:h}
\end{align}
\end{lemC}
\added{As can be seen in the Example of Table~\ref{table:EMVDExample}}, the lemma no longer holds for EMVDs, and
for that reason we no longer consider EMVDs in this paper.  The lemma
immediately implies that if $\Sigma$ is a set of saturated+conditional
CIs and the implication $\Sigma \Rightarrow \tau$ holds for all
probability distributions, then the corresponding implication holds in
databases, where the CIs are interpreted as MVDs or FDs
respectively.  Wong~\cite{DBLP:journals/tsmc/WongBW00} gave a
non-trivial proof for the other direction; we will give a much shorter
proof in Corollary~\ref{corr:MVDToCI}.

\begin{table}[ht]
	\centering
	\added{
	\begin{tabular}{ ccc}
		\toprule
		$X_1$ & $X_2$ & $X_3$  \\ \midrule
		$a$ & $c$ & $e$  \\
		$b$ & $c$ & $e$  \\
		$a$ & $d$ & $e$  \\
		$b$ & $d$ & $e$  \\
		$b$ & $d$ & $f$  \\ \bottomrule
	\end{tabular}
	\vspace{0.2cm}
	\caption{
		\added{
		The relation $R[X_1,X_2,X_3]$ satisfies the EMVD
		$\emptyset \mvd X_1\mid X_2$, 
                yet for the empirical
		distribution,  $I_h(X_1;X_2) \neq 0$
		because $X_1, X_2$ are dependent:
		$p(X_1=a)=2/5 \neq p(X_1=a\mid X_2=c)=1/2$.}}
	\label{table:EMVDExample}
}
\end{table}

\subsection{Background on Information Theory}

\label{subsec:information:theory}

We adopt required notation from the literature on information
theory~\cite{Yeung:2008:ITN:1457455,e13020379}.  For $n > 0$, we
identify vectors in $\real^{2^n}$ with functions
$\pow{[n]}\rightarrow \real$.

\paragraph{Polymatroids.} A function\footnote{Most authors
  consider rather the space $\real^{2^n-1}$, by dropping
  $h(\emptyset)$ because it is always $0$.} $h \in (\realPos)^{2^n}$
  is
called a \emph{polymatroid} if $h(\emptyset)=0$ and it satisfies the
following inequalities, called {\em Shannon inequalities}:
\begin{align}
	\text{Monotonicity: }& h(A)\leq h(B) \text{ for }A\subseteq B \subseteq [n] \label{eq:monotonicity}\\
	\text{Submodularity: }& h(A\cup B)+h(A\cap B) \leq h(A)+h(B) \text{ for all }A,B\subseteq [n] \label{eq:submodularity}
\end{align}
The set of polymatroids is denoted $\Gamma_n \subseteq \left(\realPos\right)^{2^n}$,
and forms a polyhedral cone (reviewed in
Section~\ref{sec:gamman}).  For any polymatroid $h$ and
subsets $A,B,C \subseteq [n]$, we define\footnote{Recall that $AB$
  denotes $A \cup B$.}
\begin{align}
  h(B|A) \eqdef &~h(AB) - h(A) \label{eq:h:cond} \\
  I_h(B;C|A) \eqdef &~h(AB) + h(AC) - h(ABC) - h(A) \label{eq:h:mutual:information}
\end{align}
Then, for all $h\in \Gamma_n$, we have $I_h(B;C|A) \geq 0$ and
$h(B|A) \geq 0$.  The \e{chain rule} for entropies and mutual information, respectively, are the identities:
\begin{align} 
h(BC|D)&=h(B|D)+h(C|BD) 	\label{eq:ChainRuleEnt}\\
	\label{eq:ChainRuleMI}
I_h(B;CD|A)&=I_h(B;C|A)+I_h(B;D|AC)
\end{align}
when the rule being applied is clear from the context, we will only say that we apply the chain rule.

We call $I_h(B;C|A)$ \e{saturated} if $ABC = [n]$, and \e{elemental}
if $|B|=|C|=1$. The conditional entropy $h(B|A)$ is a special case of the mutual information $I_h$,
because
$h(B|A) = I_h(B;B|A)$.

\paragraph{Entropic Functions.} If $X$ is a random variable with
a finite domain $\D$ and probability mass $p$, then $H(X)$ denotes its
entropy
\begin{equation}\label{eq:entropy}
  H(X)\eqdef\sum_{x\in \D}p(x)\log_2\frac{1}{p(x)}
\end{equation}
For a set of jointly distributed random variables
$\Omega=\set{X_1,\dots,X_n}$ we define the function
$h : \pow{[n]} \rightarrow \realPos$ as $h(\alpha) \eqdef H(X_\alpha)$ (see Notation~\ref{notation:Xalpha});
$h$ is called an \e{entropic function}, or, with some abuse, an
\e{entropy}.  The set of entropic functions 
is denoted
$\entropicFunctions_n$.
The quantities $h(B|A)$ and $I_h(B;C|A)$ are called the \e{conditional
  entropy} and \e{conditional mutual information} respectively.  The
conditional independence $p \models B \perp C \mid A$ holds iff
$I_h(B;C|A)=0$, and similarly $p \models A \fd B$ iff $h(B|A)=0$,
thus, entropy provides us with an alternative characterization of
CIs.

\begin{notation}
	\label{notation:CIs}
	We summarize the various ways a conditional independence statement (CI) and related notions are represented in the paper, and the relationships between them. In what follows, we let $p$ be an $n$-variable joint probability distribution over variable set $\Omega$, and we let $H_p$ denote the entropy function corresponding to $p$ (see~\eqref{eq:entropy}). Let $X,Y,Z \subseteq \Omega$ where $X$ and $Y$ are non-empty.
	\begin{enumerate}
		\item The notation $X\bot Y|Z$ implies that for any set of values $x,y,z$ in the domains of $X,Y$ and $Z$ respectively, it holds that $p(X=x,Y=y,Z=z)=p(X=x|Z=z)p(Y=y|Z=z)$.
		\item The statement $X\bot Y|Z$ is equivalent to saying that $I_{H_p}(X;Y|Z)=0$ where $I_{H_p}$ is defined in~\eqref{eq:h:mutual:information}. Note, however, that the notation $I_{h}(X;Y|Z)=0$ is more general since we only assume that $h$ is a polymatroid (i.e., but not necessarily the entropy function associated with a probability distribution).
		\item For a polymatroid $h$ (which may or may not be an entropic function), we denote by the triple $\sigma \eqdef (X;Y|Z)$ a CI statement that may hold either exactly (i.e., $I_h(X;Y|Z)=0$), or approximately (i.e., $I_h(X;Y|Z)\leq \varepsilon$ for some threshold $\varepsilon>0$). We then abbreviate: $h(\sigma)\eqdef I_h(X;Y|Z)$.
		\item When $XYZ=\Omega$, and $p$ is a uniform distribution (i.e., for every $r \in \D_\Omega$ either $p(r)=\frac{1}{M}$ for some $M>0$ or $p(r)=0$), then by Lemma~\ref{lem:MVDMI}, an MVD $Z \mvd X|Y$ holds in $p$ iff $I_{H_p}(X;Y|Z)=0$.
	\end{enumerate}	
\end{notation}
Next, we prove two simple, technical lemmas concerning CIs $(X;Y|Z)$ where the intersection between the variable-sets (e.g., $X\cap Y$, $X\cap Z$, or $Y\cap Z$) may be non-empty. These lemmas will be used later on. 
\begin{lem}
	\label{lem:identicallyZeroMI}
	Let $h\in \Gamma_n$ be an $n$-variable polymatroid, and let $X,Y$  and $Z$ denote subsets of of variables. Then for any CI $(X;Y|Z)$ where $Y\subseteq Z$, it holds that $I_h(X;Y|Z)=0$.
\end{lem}
\begin{proof}
	Since $Y\subseteq Z$, we denote $Z=Z'Y$ where $Z'=Z\setminus Y$. Therefore, by~\eqref{eq:h:mutual:information}:
        \[
		I_h(X;Y|Z'Y)=h(XYZ')+h(Z'Y)-h(Z'Y)-h(XYZ')=0 \qedhere
              \]
\end{proof}
\begin{lem}
		\label{lem:writeAsDisjointMI}
 	Let $h\in \Gamma_n$ be an $n$-variable polymatroid, and let $X,Y$ and $Z$ denote subsets of variables. Then $I_h(X;Y|Z)\equiv h(B_{XY}|Z')+I_h(X';Y'|Z')$ where $X'\eqdef X\setminus YZ$, $Y'\eqdef Y\setminus XZ$, $Z'\eqdef Z\setminus XY$ are pairwise disjoint, and $B_{XY}\eqdef X\cap Y\setminus Z$.
\end{lem}
\begin{proof}
	We define $A_{XY}\eqdef X\cap Y$, $A_{XZ}\eqdef X\cap Z$, $A_{YZ}\eqdef Y\cap Z$, and $A_{XYZ}\eqdef X\cap Y\cap Z$. We let $X'=X\setminus A_{XY}A_{XZ}$, $Y'=Y\setminus A_{XY}A_{YZ}$, and $Z'=Z\setminus A_{XZ}A_{YZ}$. Hence, we can write $I_h(X;Y|Z)$ as:
	{\footnotesize
	\begin{align}		
		&I_h(A_{XY}A_{XZ}X';A_{XY}A_{YZ}Y'|A_{XZ}A_{YZ}Z')& \nonumber \\
		&=I(A_{XZ};A_{XY}A_{YZ}Y'|A_{XZ}A_{YZ}Z')+I(A_{XY}X';A_{XY}A_{YZ}Y'|A_{XZ}A_{YZ}Z') & \text{Chain Rule~\ref{eq:ChainRuleMI}}\nonumber \\
		&=I_h(A_{XY}X';A_{XY}A_{YZ}Y'|A_{XZ}A_{YZ}Z')& \text{Lemma~\ref{lem:identicallyZeroMI}} \nonumber\\
		&=I_h(A_{XY}X';A_{YZ}|A_{XZ}A_{YZ}Z')+I_h(A_{XY}X';A_{XY}Y'|A_{XZ}A_{YZ}Z')& \text{Chain Rule~\ref{eq:ChainRuleMI}}\nonumber \\
		&=I_h(A_{XY}X';A_{XY}Y'|A_{XZ}A_{YZ}Z')& \text{Lemma~\ref{lem:identicallyZeroMI}} \nonumber \\
		&=I_h(A_{XY}X';A_{XY}|Z)+I_h(A_{XY}X';Y'|ZA_{XY})& \text{Chain Rule~\ref{eq:ChainRuleMI}}\nonumber \\
		&=I_h(A_{XY};A_{XY}|Z){+}I_h(X';A_{XY}|A_{XY}Z){+}I_h(X';Y'|ZA_{XY}){+}I_h(A_{XY};Y'|X'ZA_{XY})&\text{Chain Rule~\ref{eq:ChainRuleMI}}\nonumber \\
		&=I_h(A_{XY};A_{XY}|Z)+I_h(X';Y'|ZA_{XY})& \text{Lemma~\ref{lem:identicallyZeroMI}} \nonumber \\
		&=h(A_{XY}|Z)+I_h(X';Y'|ZA_{XY}) & \nonumber \\
		&=h(A_{XYZ}B_{XY}|Z)+I_h(X';Y'|ZA_{XY}) & B_{XY}\eqdef A_{XY}\setminus Z \nonumber \\ 
		&=h(A_{XYZ}|Z)+h(B_{XY}|Z)+I_h(X';Y'|ZA_{XY}) & \text{Chain Rule~\ref{eq:ChainRuleMI}}\nonumber \\
		&=h(B_{XY}|Z)+I_h(X';Y'|ZA_{XY}) & h(A_{XYZ}|Z)=0  \nonumber
	\end{align}
}
By definition, $X'$, $Y'$, $Z$ and $B_{XY}$ are pairwise disjoint, thus proving the claim.
\end{proof}

\paragraph{2-Tuple Relations and Step Functions.} 2-tuple
relations play a key role for the implication problem of MVDs$+$FDs:
if an implication fails, 
then there exists a witness consisting of
only two tuples~\cite{Sagiv:1981:ERD:322261.322263}.  We define a
\e{step function} as the entropy of the empirical distribution of a
2-tuple relation; $R=\set{t_1,t_2}$, $t_1\neq t_2$, and $p(t_1)=p(t_2)=1/2$.  We
denote the step function by $h_U$, where $U \subsetneq \Omega$ is the
set of attributes where $t_1,t_2$ agree. One can check:
\begin{align}
  h_U(W) = &
             \begin{cases}
               0 & \mbox{if  $W \subseteq U$} \\
               1 & \mbox{otherwise}
             \end{cases}
\label{eq:def:step:function}
\end{align}
\added{To see why, observe that, by definition, for every set of attributes $W \subseteq U$ we have that $t_1$ and $t_2$ agree on all attributes in $W$ (i.e., $t_1[W]=t_2[W]$). Hence, $h_U(W)=1 \log 1 =0$. On the other hand, if $W \not\subseteq U$, then $t_1[W]\neq t_2[W]$, and computing the entropy gives us that $h_U(W)=\frac{1}{2} \log 2+ \frac{1}{2} \log 2=1$.}
If we set $U=\Omega$ in (\ref{eq:def:step:function}) then $h_\Omega
\equiv 0$. That is, $h_\Omega(W)=0$ for all $W\subseteq \Omega$, and hence, $h_\Omega$ is equivalent to the function that always returns $0$.
Unless otherwise stated, in this paper we do not consider
$h_\Omega$ to be a step function.  Thus, there are $2^n-1$ step
functions and their set is denoted $\stepfn$.
We will use the following fact extensively in this paper:
$I_{h_U}(Y;Z|X) = 1$ if $X \subseteq U$ and $Y, Z \not\subseteq U$,
and $I_{h_U}(Y;Z|X) = 0$ otherwise.

\begin{figure}
  \centering
 \parbox{0.3\linewidth}{
	\centering
  \begin{tabular}[t]{lllll} \cmidrule[\heavyrulewidth]{1-4}
    $X_1$ & $X_2$ & $U_1$ & $U_2$ & $\pr$ \\ \cmidrule{1-4}
    0 & 0 & 0 & 0 & 1/2 \\
    1 & 1 & 0 & 0 & 1/2 \\ \cmidrule[\heavyrulewidth]{1-4}
  \end{tabular}	
\\
(a)
}
\parbox{0.3\linewidth}{
	\centering
  \begin{tabular}[t]{llll} \cmidrule[\heavyrulewidth]{1-3}
    $X$ & $Y$ & $Z$ & $\Pr$ \\ \cmidrule{1-3}
    0 & 0 & 0 & 1/4 \\
    0 & 1 & 1 & 1/4 \\
    1 & 0 & 1 & 1/4 \\
    1 & 1 & 0 & 1/4 \\ \cmidrule[\heavyrulewidth]{1-3}
  \end{tabular}
\\
(b)
}
\parbox{0.3\linewidth}{
	\centering
  \begin{tabular}{lllll} \cmidrule[\heavyrulewidth]{1-4}
  $A$ & $B$ & $C$ & $D$ & $\Pr$ \\ \cmidrule{1-4}
  0 & 0 & 0 & 0 & $1/2 -\varepsilon$ \\
  0 & 1 & 0 & 1 & $1/2 -\varepsilon$ \\
  1 & 0 & 1 & 0 & $\varepsilon$ \\
  1 & 1 & 0 & 0 & $\varepsilon$ \\ \cmidrule[\heavyrulewidth]{1-4}
\end{tabular}
\\
(c)
}
\caption{Two uniform probability distributions (a),(b); a
  distribution from ~\cite{DBLP:journals/tit/KacedR13} (c).}
  \label{fig:examples}
\end{figure}

We now present two simple polymatroids that correspond to the entropic functions of the distributions in Figure~\ref{fig:examples}.
\begin{exa} \label{ex:entropic} Consider the relational instance
  in Fig.~\ref{fig:examples} (a).  Its entropy is the step function
  $h_{U_1U_2}(W)$, which is 0 for $W \subseteq U_1U_2$ and 1
  otherwise.  $R \models X_1 \fd X_2$ because
  $h(X_2|X_1) = h(X_1X_2) - h(X_1) = 1-1 = 0$, and
  $R \not\models U_1 \fd X_1$ because
  $h(X_1|U_1) = h(X_1U_1)-h(U_1)=1-0\neq 0$.
\end{exa}
\begin{exa}
	\label{ex:entropic2}
  The relational instance $R = \setof{(x,y,z)}{x+y+z \mod 2 = 0}$ in Fig.~\ref{fig:examples} (b) is called the {\em parity function}.\eat{,
  defined as  and its empirical
  distribution.}  Its entropy is $h(X)=h(Y)=h(Z)=1$,
  $h(XY)=h(XZ)=h(YZ)=h(XYZ)=2$. We have that $R \models Y \perp Z$ 
  because
  $I_h(Y;Z)=h(Y)+h(Z)-h(YZ)=1+1-2=0$, but $R \not\models Y \perp Z|X$
  because $I_h(Y;Z|X)=1$. To see this, observe that:
  \begin{align*}
  	I_h(Y;Z|X)&=h(XY)+h(XZ)-h(X)-h(XYZ) \\
  	&=2+2-1-2 \\
  	&=1
  \end{align*}  
\end{exa}

\subsection{Discussion} 

This paper studies exact and approximate implications, expressed as
equalities or inequalities of entropic functions $h$.  For example,
the augmentation axiom for MVDs~\cite{DBLP:conf/sigmod/BeeriFH77}
$A \mvd B | CD \Rightarrow AC \mvd B |D$ is expressed as
$I_h(B;CD|A)=0 \Rightarrow I_h(B;D|AC) = 0$, which holds by the chain
rule (\ref{eq:ChainRuleMI}).  Thus, our golden standard is to prove
that (in)equalities hold for all entropic functions
$h \in \Gamma_n^*$.

\added{Fix a  set $A \subseteq \real^N$ in an $N$-dimensional euclidean space with distance metric $d$. The 
{\em topological closure} of $A$ is defined as:
\begin{align*}
  \cl{A} \defeq & \setof{\sx \in \real^N}{\forall \varepsilon > 0, \exists \bs \in A: d(\bs,\sx) < \varepsilon}
\end{align*}
Equivalently, $\cl{A}$ consists of all limits of convergent sequences
in $A$.  The set $A$ is called {\em topologically closed} if
$A = \cl{A}$.  The topological closure enjoys the following basic
property: if $f : \real^N \rightarrow \real$ is any continuous
function, and $f(\sx) \geq 0$ for all $\sx \in A$, then
$f(\sx) \geq 0$ for $\sx \in \cl{A}$.}
Yeung~\cite{Yeung:2008:ITN:1457455} has proven that,
when $n \geq 3$, then $\entropicFunctions_n$ is not topologically
closed, in other words, $\entropicFunctions_n \subsetneq
\cl{\entropicFunctions_n}$.  The elements of the set
$\cl{\entropicFunctions_n}$ are called {\em almost entropic
  functions}.  Equivalently, 
a function $g$ is \e{almost entropic} if it is the limit of a sequence
of entropic functions.
It follows immediately from our discussion that,
if an {\em inequality} holds for all entropic functions
$h \in \entropicFunctions_n$, then, by continuity,
it also holds for
all almost entropic functions $h \in \cl{\entropicFunctions_n}$.
However, this observation does not extend to {\em implications} of
(in)equalities; Kaced and
Romashchenko~\cite{DBLP:journals/tit/KacedR13} gave an example of an
exact implication 
that holds only for entropic functions but fails for
almost entropic functions.  Thus, when discussing an EI, it matters
whether we assume that it holds for $\entropicFunctions_n$ or for
$\cl{\entropicFunctions_n}$.  The only result in this paper where this
distinction matters are the two main theorems in
Section~\ref{sec:gamman}: the negative result
Theorem~\ref{th:no:approx} holds for both $\entropicFunctions_n$ and
for $\cl{\entropicFunctions_n}$, while the positive result
Theorem~\ref{th:mainConeTheorem} holds only for
$\cl{\entropicFunctions_n}$.  The results in
Section~\ref{sec:PBoundedRelaxations} apply to {\em any} set of
polymatroids $K$ that contains all step functions, i.e.
$\stepfn \subseteq K \subseteq \Gamma_n$, thus they apply to both
$\entropicFunctions_n$ and $\cl{\entropicFunctions_n}$, while those in
Section~\ref{sec:BoundedRelaxation} and Section~\ref{sec:marketbasket} are
stated only for $\Gamma_n$ and only for (the conic closure of)
$\stepfn$ respectively.

\section{Definition of the Relaxation Problem}

\label{sec:problem:def}

We now formally define the relaxation problem.  We fix a set of
variables $\Omega = \set{X_1, \ldots, X_n}$, and consider formulas of
the form $\sigma = (Y;Z|X)$, where $X,Y,Z \subseteq \Omega$, which we
call a \e{conditional independence},
CI; when $Y=Z$ then we write it
as $X \fd Y$ and call it a \e{conditional}.  An \e{implication} is a
formula $\Sigma \implies \impliedCI$, where $\Sigma$ is a set of CIs
called \e{antecedents} and $\tau$ is a CI called \e{consequent}.  For
a CI $\sigma=(B;C|A)$, we define $h(\sigma)\eqdef I_h(B;C|A)$ (see~\eqref{eq:h:mutual:information}), for a
set of CIs $\Sigma$, we define
$h(\Sigma)\eqdef\sum_{\sigma \in \Sigma}h(\sigma)$. 
Fix a set $K$
s.t.  $\stepfn \subseteq K \subseteq \Gamma_n$. We recall that $\stepfn$ is the set of step functions, and that $\Gamma_n$ is the set of polymatroids.

\begin{defi} \label{def:ei:ai} The \e{exact implication} ($\EI$)
  $\Sigma \implies \impliedCI$ holds in $K$, denoted
  $K \models_{\EI} (\Sigma \implies \impliedCI)$ if, for all $h \in K$,
  $h(\Sigma)=0$ implies $h(\tau)=0$.  The \e{$\lambda$-approximate
    implication} ($\lambda$-$\APXI$) holds in $K$, in notation
  $K \models \lambda\cdot h(\Sigma) \geq h(\tau)$,
  if
  $\forall h \in K$, $\lambda\cdot h(\Sigma) \geq h(\tau)$.  The
  \e{approximate implication} holds, in notation
  $K \models_{\APXI} (\Sigma \implies \impliedCI)$, if there exist a
  $\lambda \geq 0$ such that the $\lambda$-$\APXI$ holds.
\end{defi}

We will sometimes consider an equivalent definition for $\APXI$, as
$\sum_{\sigma \in \Sigma} \lambda_\sigma h(\sigma) \geq h(\tau)$,
where $\lambda_\sigma \geq 0$ are coefficients, one for each
$\sigma \in \Sigma$; these two definitions are equivalent, by taking
$\lambda = \max_\sigma \lambda_\sigma$.  Notice that both $\EI$ and $\APXI$
are preserved under subsets of $K$ in the sense that
$K_1 \subseteq K_2$ and $K_2 \models_x (\Sigma \implies \impliedCI)$
implies $K_1 \models_x (\Sigma \implies \impliedCI)$, for
$x \in \set{\EI,\APXI}$.

$\APXI$ always implies $\EI$.  Indeed, $h(\tau) \leq \lambda \cdot h(\Sigma)$
and $h(\Sigma)=0$ imply $h(\tau)\leq 0$, which further implies
$h(\tau)=0$, because $h(\tau)\geq 0$ for every CI $\tau$, and every
polymatroid $h$.  In this paper we study the reverse.

\begin{defi}
  Let $\mathcal{I}$ be a syntactically-defined class of implication
  statements $(\Sigma \Rightarrow \tau)$, and let
  $K \subseteq \Gamma_n$.  We say that $\mathcal{I}$ \e{admits a
    relaxation} in $K$ if every implication statement
  $(\Sigma \Rightarrow \impliedCI)$ in $\mathcal{I}$ that holds
  exactly also holds approximately:
  $K\models_{\EI} \Sigma \Rightarrow \impliedCI$ implies
  $K \models_{\APXI} \Sigma \Rightarrow \impliedCI$.  We say that
  $\mathcal{I}$ admits a \e{$\lambda$-relaxation} in $K$, if every EI admits a
  $\lambda$-$\APXI$.
\end{defi}

\added{ If $K_1 \subseteq K_2$ and $\mathcal{I}$ admits a
  $\lambda$-relaxation in $K_1$ then, in general, it does not
  necessarily admit a $\lambda$-relaxation in $K_2$, nor vice versa.
  However, the following more limited fact holds:

\begin{fact} \label{prop:relax:cl}
  If $\mathcal{I}$ admits a $\lambda$-relaxation in $K$, then it also
  admits a $\lambda$-relaxation in $\cl{K}$.
\end{fact}

\begin{proof}
  Let $\Sigma \implies \tau$ be a CI statement in the syntactic class
  $\mathcal{I}$, and suppose
  $\cl{K} \models_{\EI} (\Sigma \implies \impliedCI)$.  Then
  $K \models_{\EI} (\Sigma \implies \impliedCI)$.  Since $\mathcal{I}$
  admits a $\lambda$-relaxation in $K$, we have $K \models_{\APXI}
  (\Sigma \implies \tau)$, or, equivalently, $\forall h \in K: \lambda
  \cdot h(\Sigma) \geq h(\tau)$.  By continuity, $\forall h \in
  \cl{K}: \lambda h(\Sigma) \geq h(\tau)$, proving the claim.
\end{proof}
}

\begin{exa} 
  Let $\Sigma{=} \set{(A;B|\emptyset),(A;C|B)}$ and
  $\impliedCI{=}(A;C|\emptyset)$. 
  By the chain rule~\eqref{eq:ChainRuleMI}, we have that
  \begin{align*}
  	I_h(A;BC)&=I_h(A;B|\emptyset) + I_h(A;C|B)& \text{ and } \\
  	I_h(A;BC)&=I_h(A;C|\emptyset) + I_h(A;B|C)\geq I_h(A;C|\emptyset)&
  \end{align*}
  Hence, $I_h(A;C|\emptyset) \leq I_h(A;B|\emptyset) + I_h(A;C|B)$.
  Therefore, the exact implication $\entropicPlhdrl_n \models_{\EI}\Sigma\implies \tau$ admits a $1$-$\APXI$.
\end{exa}

\section{Relaxation for FDs and MVDs: Always Possible}\label{sec:PBoundedRelaxations}

In this section we consider the implication problem where the
antecedents are either saturated CIs, or conditionals.  This is a
case of special interest in databases, because the constraints
correspond to MVDs, or FDs.  Recall that
a CI $(B;C|A)$ is
\e{saturated} if $ABC = \Omega$ (i.e., the set of all attributes).
Our main result in this section is:
\begin{thm}\label{thm:MainInGamman}
  Assume that each formula in $\Sigma$ is either saturated or a
  conditional (e.g., $Z\fd X$), and let $\tau$ be an arbitrary CI.  Assume
  $\stepfn \models_{\EI} \Sigma \implies \tau$.  Then:
\begin{enumerate}
	\item \label{item:main:1} $\entropicPlhdrl_n \models \frac{n^2}{4}h(\Sigma) \geq h(\tau)$.
	\item \label{item:main:2} If $\tau$ is a conditional,
          then $\entropicPlhdrl_n \models h(\Sigma) \geq h(\tau)$.
\end{enumerate}
\end{thm}

Before we prove the theorem, we list two important consequences.

\begin{cor}\label{corr:PniffGamman}
  Let $\Sigma$ consist of saturated CIs and/or conditionals, and let
  $\tau$ be any CI. Then $\stepfn \models_{\EI} \Sigma \implies \tau$
  implies $\entropicPlhdrl_n \models_{\EI} \Sigma \implies \tau$. 
\end{cor}
\begin{proof}  
  If $\stepfn \models_{\EI} \Sigma \implies \tau$ then
  $\forall h \in \entropicPlhdrl_n$,
  $h(\tau) \leq \frac{n^2}{4}h(\Sigma)$, thus $h(\Sigma)=0$ implies
  $h(\tau)=0$.
\end{proof}

\added{The corollary implies that if $\Sigma, \tau$ are restricted to
  saturated CIs and/or conditionals, then the Exact Implication
  problem is the same for $\stepfn$ as it is for any other set $K$ where  $\stepfn \subseteq K \subseteq \entropicPlhdrl_n$.
  }

The corollary has an immediate application to the inference problem in
graphical models~\cite{GeigerPearl1993}.  There, the problem is to
check if every probability distribution that satisfies all CIs in
$\Sigma$ also satisfies the CI $\tau$; we have seen that this is
equivalent to
$\entropicFunctions_n \models_{\EI} \Sigma \implies \tau$.  The
corollary states that it is enough that this implication holds on all of the
uniform 2-tuple distributions, i.e.,
$\stepfn \models_{\EI} \Sigma \implies \tau$, because this implies the
(even stronger!) statement
$\entropicPlhdrl_n \models_{\EI} \Sigma \implies \tau$.
Decidability (i.e., for exact implication)
was already known: Geiger and Pearl~\cite{GeigerPearl1993} 
proved that the set of graphoid axioms is sound and complete 
for the case when both
$\Sigma$ and $\tau$ are saturated. Specifically, the equivalence between saturated CIs and MVDs~\cite{GeigerPearl1993} enables the application of the polynomial implication algorithm devised for MVDs~\cite{Beeri:1980:MPF:320613.320614} which, in this setting where both
$\Sigma$ and $\tau$ are saturated, has a run-time complexity of $O(|\Sigma|n^2)$.
Gyssens et
al.~\cite{DBLP:journals/ipl/GyssensNG14} improve this result by dropping any
restrictions on $\tau$.

The second consequence is the following:

\begin{cor}\label{corr:MVDToCI}
Let $\Sigma,\tau$ consist of saturated CIs and/or conditionals.  Then
the following two statements are equivalent:
\begin{enumerate}
\item \label{item:mvd:ci:1} The implication $\Sigma \implies \tau$
  holds, where we interpret $\Sigma, \tau$ as MVDs and/or FDs.
\item \label{item:mvd:ci:2}
  $\entropicPlhdrl_n \models_{\EI} \Sigma \implies \tau$.
\end{enumerate}
\end{cor}
\begin{proof}
  We have shown right after Lemma~\ref{lem:MVDMI} that
  (\ref{item:mvd:ci:2}) implies (\ref{item:mvd:ci:1}).  
  For the
  opposite direction, by Th.~\ref{thm:MainInGamman}, we need only check
  $\stepfn \models_{\EI} \Sigma \implies \tau$, which holds because
  on every uniform probability distribution a saturated CI holds iff
  the corresponding MVD holds, and similarly for conditionals and FDs.
  Since the 2-tuple relation satisfies the implication for MVDs$+$FDs, it
  also satisfies the implication for CIs, proving the claim.
\end{proof}

Wong et al.~\cite{DBLP:journals/tsmc/WongBW00} proved that the
implication for MVDs is equivalent to that of the corresponding
saturated CIs (called there BMVD); they did not consider FDs.  For the
proof in the hard direction, they use the sound and complete
axiomatization of MVDs in~\cite{DBLP:conf/sigmod/BeeriFH77}.  In
contrast, our proof is independent of any axiomatic system, and is
also much shorter.  Finally, we notice that the corollary also implies
that, in order to check an implication between MVDs and/or FDs, it
suffices to check it on all 2-tuple databases: indeed, this is
equivalent to checking $\stepfn \models_{\EI} \Sigma \implies \tau$,
because this implies Item (\ref{item:mvd:ci:2}), which in turn implies
item (\ref{item:mvd:ci:1}).  This rather surprising fact was first
proven in~\cite{Sagiv:1981:ERD:322261.322263}.

The proof of Theorem~\ref{thm:MainInGamman} follows from a series of lemmas, and a Theorem that is of independent interest, that we prove next. Before
proceeding, we note that we can assume w.l.o.g. that $\Sigma$ consists
only of saturated CIs.  Indeed, if $\Sigma$ contains a non-saturated
term, then by assumption it is a conditional, $X \fd Y$, and we will
replace it with two saturated terms: $(Y;Z|X)$ and $XZ \rightarrow Y$,
where $Z = \Omega\setminus XY$.  Denoting $\Sigma'$ the new set of
formulas, we have $h(\Sigma)=h(\Sigma')$, because
$h(Y|X) = I_h(Y;Z|X) + h(Y|XZ)$.  Thus, we will assume w.l.o.g. that
all formulas in $\Sigma$ are saturated.

We say that a CI $(X;Y|Z)$ is \e{elemental} if $|X|=|Y|=1$.
\added{
\begin{lem}
	\label{lem:nsquareDiv4}
	Every CI $\tau = (X;Y|Z)$  can be written as a sum of $m+n_X\cdot n_Y$ elemental terms where $m=|X\cap Y\setminus Z|$, $n_X=|X\setminus YZ|$, and $n_Y=|Y\setminus XZ|$. Furthermore, it holds that $m+n_X\cdot n_Y\leq \frac{n^2}{4}$.
\end{lem}
\begin{proof}
	By Lemma~\ref{lem:writeAsDisjointMI} we have that $h(\tau)\equiv h(B_{XY}|Z)+I_h(X';Y'|Z)$ where $B_{XY}\eqdef X\cap Y\setminus Z$, $X'=X\setminus YZ$, and $Y'=Y\setminus XZ$. By the chain rule, $h(B_{XY}|Z)$ and $I_h(X';Y'|Z)$ can be written as the sum of $m=|B_{XY}|$, and  $|X'|\cdot |Y'|=n_Xn_Y$ elemental terms respectively.
	Since $B_{XY}$, $X'$, and $Y'$ are pairwise disjoint, then $m+n_X+n_Y\leq n$. Therefore:
	\begin{align*}
		m+n_X\cdot n_Y &\leq m+\frac{(n-m)^2}{4}&\\
		&< m+\frac{n^2-2m^2+m^2}{4}& m< n\\
		&=\frac{1}{4}(n^2+m(4-m))\\
		&\leq \frac{n^2}{4}+1
	\end{align*}
Where the last transition follows by observing that $\frac{1}{4}m(4-m)\leq 1$ for all integral $m$ (e.g., for $m\in \set{1,3}$ then $\frac{1}{4}m(4-m)=\frac{3}{4}$, and for $m=2$, $\frac{1}{4}m(4-m)=1$). 
Since the number of elemental terms is integral then $m+n_X\cdot n_Y\leq \frac{n^2}{4}$ as required.
\end{proof}
}

Theorem~\ref{thm:MainInGamman} follows from the next result, which is
also of independent interest. We say that $\sigma$ \e{covers} $\tau$
if all variables in $\tau$ are contained in $\sigma$; for example
$\sigma=(abc;d|e)$ covers $\tau=(cd;be)$.  Then:

\begin{thm} \label{th:mvd:elemental} Let $\tau$ be an elemental
  CI, and suppose each formula in $\Sigma$ covers
  $\tau$.  Then $\stepfn \models_{EI} (\Sigma \implies \tau)$
  implies $\Gamma_n \models h(\tau) \leq h(\Sigma)$.
\end{thm}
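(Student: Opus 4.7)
The plan is to prove the inequality $I_h(x;y|Z)\leq\sum_i I_h(\sigma_i)$ as a Shannon-provable fact, by combining the $I$-measure analysis of step functions with a chain-rule Shannon derivation. First, I would reduce to the case where every $\sigma_i$ is saturated, using the identity from the preceding paragraph: a conditional $X\fd Y$ rewrites as $I_h(Y;\Omega\setminus XY\,|\,X)+h(Y\,|\,\Omega\setminus Y)$, both saturated, and the rewrite preserves $h(\Sigma)$ as well as the set of step functions satisfying $\Sigma$.

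Second, I would translate the step-function EI into a combinatorial atom-inclusion. By Theorem~\ref{thm:YeungUniqueness}, each step function $h_U$ places unit $I$-measure on exactly one atom, namely $a_{\Omega\setminus U}$. Hence the EI asserts that every atom $a_S$ with $\{x,y\}\subseteq S$ and $S\cap Z=\emptyset$ (i.e., every atom of $\iset(\tau)$) lies in some $\iset(\sigma_i)$, i.e., satisfies $S\cap X_i\neq\emptyset$, $S\cap Y_i\neq\emptyset$, $S\cap C_i=\emptyset$ for some $i$. The atomic granularity of this inclusion will let me locate, for each atom of $\iset(\tau)$, a $\sigma_i$ that places $x$ and $y$ on opposite sides.

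Third, I would build the Shannon proof. The target rewrites as the single submodularity instance $I_h(x;y|Z)=h(xZ)+h(yZ)-h(xyZ)-h(Z)$. For each relevant $\sigma_i$ with (WLOG) $x\in X_i$ and $y\in Y_i$, the chain rule~(\ref{eq:ChainRuleMI}) extracts an elemental summand $I_h(x;y|W_i)\leq I_h(\sigma_i)$, where $W_i$ consists of $C_i$ together with prefixes of $X_i\setminus\{x\}$ and $Y_i\setminus\{y\}$ in a chosen ordering. What remains is to shift the conditioning from $W_i$ to $Z$ and aggregate the contributions.

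The main obstacle will be this conditioning shift: $W_i\neq Z$ in general, and the identity $I_h(x;y|Z)-I_h(x;y|WZ)=I_h(x;W|Z)-I_h(x;W|yZ)$ has no fixed sign under the polymatroid axioms alone. To bridge the gap, I would proceed by induction on $|\Omega\setminus(\{x,y\}\cup Z)|$: pick an extra variable $v$, use the atomic covering to restrict attention to the $\sigma_i$'s whose coverage is transverse to $v$, and reduce the EI to one on $\Omega\setminus\{v\}$, where the inductive hypothesis provides a Shannon proof that lifts back via a single submodularity step involving $v$.
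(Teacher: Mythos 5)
Your first two steps track the paper: the reduction to saturated antecedents is exactly the one used there, and recasting the step-function hypothesis as the atom inclusion $\iset(\tau)\subseteq\iset(\Sigma)$ is a faithful reformulation, since $h_U$ puts unit $I$-measure on the single atom indexed by $\Omega\setminus U$. The gap opens at the last sentence of your second step and is never closed. The fact that an atom $a_S$ of $\iset(\tau)$ lies in $\iset(\sigma_i)$ (equivalently, $h_{\Omega\setminus S}(\sigma_i)=1$ for $\sigma_i=(A;B|C)$) only tells you $C\cap S=\emptyset$, $A\cap S\neq\emptyset$, $B\cap S\neq\emptyset$. It does \emph{not} produce a $\sigma_i$ with $x$ and $y$ on opposite sides: both may lie in $A$, with some third variable $u\in S\cap B$ witnessing $B\cap S\neq\emptyset$. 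This is precisely the hard case of the theorem, and your third step assumes it away with ``WLOG $x\in X_i$ and $y\in Y_i$.'' When $x$ and $y$ sit on the same side, no chain-rule extraction from $\sigma_i$ yields an elemental term of the form $I_h(x;y|W_i)$ at all, so the difficulty is not merely the conditioning shift you flag in step four.

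The inductive step you sketch to repair this is not workable as stated: ``reduce the EI to one on $\Omega\setminus\{v\}$'' (i.e., project out a variable) does not preserve either saturation or the step-function implication in any evident way, and ``lifts back via a single submodularity step'' is where the entire content would have to live. The paper's induction is structurally different: it keeps $\Omega$ fixed and inducts on the deficit $|\Omega-Z|$ of $\tau$. In the inductive step it takes the antecedent $\sigma=(A;B|C)$ that fires on the step function $h_Z$, and in the bad case ($x,y\in A$) picks $u\in B\setminus Z$, splits $\sigma$ by the chain rule into $\sigma_1=(xyA'Z_A;uZ_B|C)$ and $\sigma_2=(xyA'Z_A;B'|uCZ_B)$, bounds $I_h(x;u|Z)\leq h(\sigma_1)$, and then --- this is the crux --- proves that the \emph{modified} antecedent set $\Sigma_2=(\Sigma\setminus\{\sigma\})\cup\{\sigma_2\}$ still satisfies $\stepfn\models_{EI}\Sigma_2\implies(x;y|uZ)$, so that the inductive hypothesis applies to $\tau_2=(x;y|uZ)$, whose deficit has dropped by one. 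That re-verification of the exact implication for the redistributed antecedents is the real work of the proof, and nothing in your plan addresses it.
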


Notice that Theorem~\ref{th:mvd:elemental} immediately implies Item (\ref{item:main:1})
of Theorem~\ref{thm:MainInGamman}, because by Lemma~\ref{lem:nsquareDiv4} every $\tau = (Y;Z|X)$ can
be written as a sum of at most $n^2/4$ elemental terms.
In what follows, we prove
Theorem~\ref{th:mvd:elemental}, then use it to prove item
(\ref{item:main:2}) of Theorem~\ref{thm:MainInGamman}.

Finally, we consider whether (\ref{item:main:1}) of
Theorem~\ref{thm:MainInGamman} can be strengthened to a 1-relaxation;
we give in Th.~\ref{thm:DisjointSaturated} below a sufficient
condition, whose proof uses the notion of
I-measure (Section~\ref{sec:imeasure}), and leave open the
question whether 1-relaxation holds in general for implications where
the antecedents are saturated CIs and conditionals.

\begin{defi}\label{def:disjoint}
  We say that two CIs $(X;Y|Z)$ and $(A;B|C)$ are {\em disjoint} if
  at least one of the following four conditions holds: (1)
  $X\subseteq C$, (2) $Y\subseteq C$, (3) $A\subseteq Z$, or (4)
  $B\subseteq Z$.
\end{defi}
If $\tau = (X;Y|Z)$ and $\sigma = (A;B|C)$ are disjoint, then for any
step function $h_W$, it cannot be the case that both $h_W(\tau)\neq 0$
and $h_W(\sigma) \neq 0$.  Indeed, if such a $W$ exists, then
$Z, C \subseteq W$ and, assuming (1) $X \subseteq C$ (the other three
cases are similar), we have $ZX \subseteq W$ thus $h_W(\tau)=0$.  
\def\partitionSigmaTechnicalLemma{Suppose
	$\stepfn \models_{\EI} \Sigma \implies \tau$, where $\tau=(X;Y|Z)$.
	Let $\sigma \in \Sigma$ such that  $\tau, \sigma$ are disjoint (Def.~\ref{def:disjoint}). Then:
	$\stepfn\models_{\EI} \left(\Sigma{\setminus} \set{\sigma}\right)
	\implies \tau$. }
\begin{lem}\label{lem:partitionSigmaTechnicalLemma}
	\partitionSigmaTechnicalLemma
\end{lem}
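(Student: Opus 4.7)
The plan is to prove this by a direct contradiction, leveraging the observation already made immediately after Definition~\ref{def:disjoint}: for two disjoint CIs $\tau$ and $\sigma$, no single step function can give value $1$ to both of them. I would begin by assuming, towards a contradiction, that $\stepfn \not\models_{EI} (\Sigma \setminus \{\sigma\}) \implies \tau$. By Definition~\ref{def:ei:ai} applied to $\stepfn$, this means there exists some step function $h_W$ (with $W \subsetneq \Omega$) such that $h_W(\Sigma \setminus \{\sigma\}) = 0$ and $h_W(\tau) \neq 0$, i.e. $h_W(\tau) = 1$ by the characterization in~\eqref{eq:def:step:function}.

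Next I would invoke the disjointness of $\tau = (X;Y|Z)$ and $\sigma = (A;B|C)$. The fact that $h_W(\tau) = 1$ forces $Z \subseteq W$ and $X, Y \not\subseteq W$. I would then case-split over the four conditions in Definition~\ref{def:disjoint}: if $X \subseteq C$ (case 1), then from $C \subseteq W$ (which would be needed for $h_W(\sigma) = 1$) we would conclude $X \subseteq W$, a contradiction; the cases $Y \subseteq C$, $A \subseteq Z$, and $B \subseteq Z$ are symmetric (using $Z \subseteq W$ together with the constraint $A \not\subseteq W$ or $B \not\subseteq W$ needed for $h_W(\sigma) = 1$). In every case $h_W(\sigma) = 0$ must hold.

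To close the argument, I would note that since $h_W(\sigma) = 0$ and $h_W(\Sigma\setminus\{\sigma\}) = 0$, additivity gives $h_W(\Sigma) = h_W(\sigma) + h_W(\Sigma \setminus \{\sigma\}) = 0$. But the hypothesis $\stepfn \models_{EI} \Sigma \implies \tau$ forces $h_W(\tau) = 0$, contradicting $h_W(\tau) = 1$. Hence $\stepfn \models_{EI} (\Sigma \setminus \{\sigma\}) \implies \tau$, as required.

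No serious obstacle is anticipated: the entire content of the proof is the observation, already noted in the paper, that disjointness rules out simultaneous nonzero values on a step function, and the fact that $h(\Sigma)$ is defined as a sum of nonnegative terms, so setting any subset to zero is preserved when we remove summands. The only care needed is being explicit about the four cases of Definition~\ref{def:disjoint} and using the step-function characterization $I_{h_W}(P;Q|R) = 1$ iff $R \subseteq W$ and $P, Q \not\subseteq W$ in each case.
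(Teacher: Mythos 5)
Your proposal is correct and is essentially identical to the paper's proof: both argue by contradiction from a step function $h_W$ witnessing the failure of $(\Sigma\setminus\set{\sigma})\implies\tau$, use the disjointness of $\sigma$ and $\tau$ to conclude $h_W(\sigma)=0$, and then contradict $\stepfn\models_{EI}\Sigma\implies\tau$. The only difference is that you spell out the four-case disjointness argument inline, whereas the paper establishes that observation once, immediately after Definition~\ref{def:disjoint}, and simply cites it.
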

\begin{proof}
	Let $\Sigma'\eqdef \Sigma\setminus \set{\sigma}$.  Assume by
	contradiction that there exists a step function $h_W$ such that
	$h_W(\Sigma')=0$ and $h_W(\tau)=1$.  Since $\sigma, \tau$ are
	disjoint, $h_W(\sigma)=0$.  Then $h_W(\Sigma)=0$, contradicting the
	assumption $\stepfn \models_{EI} \Sigma \implies \tau$.
\end{proof}

\subsection{Proof of Theorem~\ref{th:mvd:elemental}}
The following holds by the chain rule, and
will be used later on.

\def\chainRuleTechnicalLemma{ Let $\sigma = (A;B|C)$ and
  $\tau = (X;Y|Z)$ be CIs such that $X\subseteq A$, $Y\subseteq B$,
  $C \subseteq Z$ and $Z \subseteq ABC$. Then,
  $\entropicPlhdrl_n \models h(\tau) \leq h(\sigma)$.  }

\begin{lem}\label{lem:chainRuleTechnicalLemma}
	\chainRuleTechnicalLemma
\end{lem}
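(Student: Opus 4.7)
The plan is to prove $I_h(A;B\mid C)\geq I_h(X;Y\mid Z)$ by routing through an intermediate term $I_h(XZ_A;YZ_B\mid C)$, where $Z_A,Z_B$ partition $Z\setminus C$ according to membership in $A$. Concretely, let $Z_0:=Z\setminus C$. The hypothesis $Z\subseteq ABC$ gives $Z_0\subseteq AB$, so I define $Z_A:=Z_0\cap A\subseteq A$ and $Z_B:=Z_0\setminus A\subseteq B$; these are disjoint, $Z_A\cup Z_B=Z_0$, and therefore $CZ_AZ_B=Z$. Since moreover $X\subseteq A$ and $Y\subseteq B$, we also have $XZ_A\subseteq A$ and $YZ_B\subseteq B$.

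Step 1 (intermediate $\geq$ target) is a double application of the chain rule~\eqref{eq:ChainRuleMI}, peeling $Z_A$ off the left argument and then $Z_B$ off the right:
\begin{align*}
I_h(XZ_A;YZ_B\mid C)
&= I_h(Z_A;YZ_B\mid C) + I_h(X;YZ_B\mid CZ_A)\\
&= I_h(Z_A;YZ_B\mid C) + I_h(X;Z_B\mid CZ_A) + I_h(X;Y\mid CZ_AZ_B).
\end{align*}
The last term equals $I_h(X;Y\mid Z)$ because $CZ_AZ_B=Z$, and the first two are non-negative for every polymatroid, giving $I_h(XZ_A;YZ_B\mid C)\geq I_h(X;Y\mid Z)$. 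Step 2 ($h(\sigma)\geq$ intermediate) uses that $XZ_A\subseteq A$ and $YZ_B\subseteq B$, so writing $A=XZ_A\cup A'$ and $B=YZ_B\cup B'$ with $A':=A\setminus XZ_A$, $B':=B\setminus YZ_B$, and expanding $I_h(A;B\mid C)$ by the same chain-rule technique yields $I_h(XZ_A;YZ_B\mid C)$ plus additional non-negative conditional mutual-information terms. Chaining Steps 1 and 2 gives $h(\sigma)=I_h(A;B\mid C)\geq I_h(X;Y\mid Z)=h(\tau)$.

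The only subtle point is selecting a decomposition $Z_0=Z_A\sqcup Z_B$ with $XZ_A\subseteq A$ and $YZ_B\subseteq B$ simultaneously; this is exactly what the containment $Z_0\subseteq AB$, hence $Z\subseteq ABC$, delivers. Once this split is fixed, everything reduces to routine chain-rule bookkeeping producing only Shannon-non-negative quantities on $\entropicPlhdrl_n$.
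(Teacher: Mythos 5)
Your proof is correct and follows essentially the same route as the paper's: both split $Z\setminus C$ between $A$ and $B$ and then expand $I_h(A;B\mid C)$ via the chain rule so that $I_h(X;Y\mid Z)$ appears as one summand among otherwise non-negative conditional mutual informations. The only cosmetic difference is that you organize the expansion in two stages through the intermediate term $I_h(XZ_A;YZ_B\mid C)$ and take $Z_A,Z_B$ disjoint, whereas the paper performs a single six-term expansion with $Z_A=A\cap Z$ and $Z_B=B\cap Z$.
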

\added{
\begin{proof}
	Since $Z \subseteq ABC$, we denote by $Z_A=A\cap Z$, $Z_B=B\cap Z$, and $Z_C=C \cap Z$. Also, we define $A'=A\setminus (Z_A\cup X)$ and $B'=B\setminus (Z_B\cup Y)$. So, we have that: $I(A;B|C)=I(Z_AA'X;Z_BB'Y|C)$.
	By the chain rule, we have that:
	\begin{align*}
		I(Z_AA'X;Z_BB'Y|C)&=I(Z_A;Z_B|C)+I(A'X;Z_B|CZ_A)+I(Z_A;B'Y|Z_BC)\\
		&+I(X;Y|CZ_AZ_B)+I(X;B'|CZ_AZ_BY)+I(A';B'Y|CZ_AZ_BX)
	\end{align*}
	Noting that $Z=CZ_AZ_B$, we get that $I(X;Y|Z)\leq I(A;B|C)$ as required.
\end{proof}
}

We now prove theorem~\ref{th:mvd:elemental}.  We use lower case for
single variables, thus $\tau = (x;y|Z)$ because it is elemental.  We
may assume w.l.o.g. that neither $x$ nor $y$ are in $Z$: $x, y \not\in Z$
(otherwise $I_h(x;y|Z)=0$ and the
lemma holds trivially). The {\em deficit} of an elemental CI
$\tau = (x;y|Z)$ is the quantity $|\Omega - Z|$. We prove by induction
on the deficit of $\tau$ that
$\stepfn \models_{\EI} \Sigma \implies \tau$ implies
$\entropicPlhdrl_n \models h(\tau) \leq h(\Sigma)$.

Assume $\stepfn \models_{\EI} (\Sigma \implies \tau)$, and consider the
step function $h_Z$ at $Z$.  Since $h_Z(\tau) = 1$, there exists
$\sigma \in \Sigma$, written $\sigma = (A;B|C)$, such that $h_Z(\sigma) = 1$;
this means that $C \subseteq Z$, and $A,B \not\subseteq Z$.  In
particular $x, y \not\in C$, therefore $x, y \in AB$, because $\sigma$
covers $\tau$.  If $x \in A$ and $y \in B$ (or vice versa), then
$\entropicPlhdrl_n \models h(\tau) \leq h(\sigma)$ by
Lemma~\ref{lem:chainRuleTechnicalLemma}, proving the theorem.
Therefore, we assume w.l.o.g. that $x,y \in A$ and none is in $B$.
Furthermore, since $B \not\subseteq Z$, there exists $u \in B-Z$.

\paragraph{Base case: $\tau$ is saturated.}  Then $u \not\in xyZ$,
contradicting the assumption that $\tau$ is saturated; in other words,
in the base case, it is the case that $x \in A$ and $y \in B$.

\paragraph{Step:}
Let $Z_A= Z {\cap} A$, and $Z_B=Z {\cap} B$. Since $C \subseteq Z$,
and $\sigma=(A;B|C)$ covers $\tau$, then $Z=Z_AZ_BC$.  We also write
$A=xyA'Z_A$ (since $x,y\in A$) and $B=uB'Z_B$. So, we have that
$\sigma = (A;B|C)=(xyA'Z_A; uB'Z_B|C)$, and we use the
chain rule to define $\sigma_1,\sigma_2$:
\begin{align*}
% \label{eq:quadraticProofl2}
h(\sigma) = & I_h(xyA'Z_A;uB'Z_B|C)=I_h(\underbrace{xyA'Z_A;uZ_B|C}_{\defeq \sigma_1})+I_h(\underbrace{xyA'Z_A;B'|uCZ_B}_{\defeq \sigma_2})
\end{align*}
We also partition $\Sigma$ s.t.  $h(\Sigma) = h(\sigma_1) + h(\Sigma_2)$, where $\Sigma_2 \defeq (\Sigma\setminus \set{\sigma}) \cup \set{\sigma_2}$. Since $\sigma_2$ is a saturated CI (i.e., contains all variables in $\sigma$, which is saturated), then $\Sigma_2$ is saturated as well (i.e., contains only saturated CIs).

\noindent Next, define $\tau' \defeq (x;uy|Z)$ and use the chain rule
to define $\tau_1,\tau_2$:
\begin{equation}
  \label{eq:partition:tau}
h(\underbrace{x;y|Z}_{\tau}) \leq   I_h(\underbrace{x;uy|Z}_{\tau'})=I_h(\underbrace{x;u|Z}_{\defeq \tau_1})+I_h(\underbrace{x;y|uZ}_{\defeq \tau_2})  
\end{equation}
By Lemma~\ref{lem:chainRuleTechnicalLemma},
$\entropicPlhdrl_n \models h(\sigma_1) \geq h(\tau_1)$.  We will prove:
$\stepfn \models_{\EI} \Sigma_2 \implies \tau_2$.  This implies the
theorem, because $\Sigma_2$ is saturated, so by the induction hypothesis, $\entropicPlhdrl_n \models h(\Sigma_2) \geq h(\tau_2)$
(since the deficit of $\tau_2$ is one less than that of $\tau$), and
the theorem follows from
$h(\Sigma)=h(\sigma_1)+h(\Sigma_2) \geq
h(\tau_1)+h(\tau_2)=h(\tau')\geq h(\tau)$.  It remains to prove
$\stepfn \models_{\EI} \Sigma_2 \implies \tau_2$, and we start with a
weaker claim:

\begin{clm}
  $\stepfn \models_{\EI} \Sigma \implies \tau_2$.
\end{clm}

\begin{proof}
  By Lemma~\ref{lem:chainRuleTechnicalLemma} we have that
  $$h(\sigma) = I_h(xyA'Z_A; uB'Z_B|C) \geq I_h(xy;u|Z) =
  I_h(y;u|Z)+I_h(x;u|yZ).$$ Therefore, $\Gamma_n \models_{\EI}\Sigma \implies
  (x;u|yZ)$, and in particular, $\stepfn \models_{\EI} \Sigma \implies (x;u|yZ)$.
  Since $\stepfn \models_{\EI} \Sigma \implies (x;y|Z)$ (i.e., by assumption), then by the chain rule
  we have that $\stepfn \models_{\EI} \Sigma \implies (x;uy|Z) = \tau'$, and the claim
  follows from (\ref{eq:partition:tau}).
\end{proof}

Finally, we prove $\stepfn \models_{\EI} \Sigma_2 \implies
\tau_2$. Assume otherwise, and let $h_W$ be a step function such that
$h_W(\tau_2) = I_{h_W}(x;y|uZ)=1$, and $h_W(\Sigma_2)=0$. This means
that $uZ\subseteq W$.  Therefore $uZ_B \subseteq W$, implying
$I_{h_W}(xyA'Z_A;uZ_B|C) = h_W(\sigma_1) = 0$ (because $uZ_BC \subseteq uZ$).  Therefore,
$h_W(\Sigma) = h_W(\sigma_1)+h_W(\Sigma_2)=0$, contradicting the fact
that $\stepfn \models_{\EI} \Sigma \implies \tau_2$.
A more direct way to see this is to observe that $\sigma_1$ and $\tau$ are disjoint (Definition ~\ref{def:disjoint}). Hence, by Lemma~\ref{lem:partitionSigmaTechnicalLemma} we have that $\stepfn\models_{\EI}\Sigma\setminus\set{\sigma_1}\implies \tau_2$, and the result follows by noting that $\Sigma\setminus\set{\sigma_1}=\Sigma_2$.

\subsection{Proof of Theorem~\ref{thm:MainInGamman} Item~\ref{item:main:2}}
We now prove Item~\ref{item:main:2} of Theorem~\ref{thm:MainInGamman}. That is, we show that if $\tau$ is a conditional $Z \fd X$ and $\stepfn \models_{\EI} \Sigma \implies \tau$, then $\Gamma_n \models h(\Sigma) \geq h(\tau)$.

For the following lemma, we recall that by the chain rule for entropies (see~\eqref{eq:ChainRuleEnt}), we have that for a conditional $\tau=Z\fd uX$ we have that:
\begin{equation}
	\label{eq:useChainEnt}
h(\tau)=h(Z\fd uX)=h(uX|Z)\underbrace{=}_{\eqref{eq:ChainRuleEnt}}h(u|Z)+h(X|uZ)=h(\underbrace{Z\fd u}_{\tau_1})+h(\underbrace{uZ\fd X}_{\tau_2})
\end{equation}

\def\partitionSigmaLemma{ 
Let $\Sigma$ be a set of saturated CIs s.t.
$\stepfn \models_{\EI} \Sigma \implies \tau$.  Suppose
$\tau = (Z \fd uX)$,
and define $\tau_1 = (Z \fd u)$, $\tau_2 = (uZ \fd X)$ (see~\eqref{eq:useChainEnt}); thus,
$h(\tau)=h(\tau_1)+h(\tau_2)$.  Then, there exists $\Sigma_1$ and
$\Sigma_2$ such that: (1) $h(\Sigma)=h(\Sigma_1)+h(\Sigma_2)$. (2) $\Sigma_1$
covers $\tau_1$ and $\stepfn \models_{\EI} \Sigma_1 \implies
\tau_1$. (3) $\Sigma_2$ is saturated and
$\stepfn \models_{\EI} \Sigma_2 \implies \tau_2$.
}
\begin{lem}\label{lem:partitionSigmaLemma}
	\partitionSigmaLemma
\end{lem}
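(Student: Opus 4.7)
The plan is to define the partition by applying the chain rule to each $\sigma \in \Sigma$ according to the location of the variable $u$. We may assume without loss of generality that $u \notin Z$ and $X \cap Z = \emptyset$, since otherwise the relevant parts of $\tau$ are trivial. Write a generic $\sigma \in \Sigma$ as $(A;B|C)$; since $\sigma$ is saturated, $ABC = \Omega$, so $u$ belongs to at least one of $A$, $B$, $C$. We distinguish two cases, treating the roles of $A$ and $B$ symmetrically.

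If $u \in C$, leave $\sigma$ intact and place it in $\Sigma_2$, contributing nothing to $\Sigma_1$; note that in this case $\sigma$ and $\tau_1 = (u;u|Z)$ are disjoint in the sense of Definition~\ref{def:disjoint}, which is consistent with Lemma~\ref{lem:partitionSigmaTechnicalLemma}. If $u \in A \setminus C$ (the case $u \in B \setminus C$ is analogous), let $A_Z \eqdef A \cap Z$ and $A_{\bar Z} \eqdef A \setminus (\{u\} \cup Z)$, so that $A$ decomposes disjointly as $A = \{u\} \cup A_Z \cup A_{\bar Z}$. Apply the chain rule \eqref{eq:ChainRuleMI}:
\[
  h(\sigma) \;=\; I_h(uA_Z;\, B \mid C) \;+\; I_h(A_{\bar Z};\, B \mid u A_Z C),
\]
and set $\sigma_1^\sigma \eqdef (uA_Z;\, B \mid C)$, $\sigma_2^\sigma \eqdef (A_{\bar Z};\, B \mid u A_Z C)$, placing $\sigma_1^\sigma$ in $\Sigma_1$ and $\sigma_2^\sigma$ in $\Sigma_2$.

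Additivity $h(\Sigma) = h(\Sigma_1) + h(\Sigma_2)$ is then immediate from the chain rule applied termwise. Saturation of $\Sigma_2$ is routine: in the first case $\sigma_2^\sigma = \sigma$ is already saturated, and in the second case the variables of $\sigma_2^\sigma$ are $A_{\bar Z} \cup B \cup \{u\} \cup A_Z \cup C = A \cup B \cup C = \Omega$. For the \emph{covers} property of $\Sigma_1$ with respect to $\tau_1 = (u;u|Z)$, the variables of $\sigma_1^\sigma$ are $\{u\} \cup A_Z \cup B \cup C$, and using $Z \subseteq \Omega = ABC$ together with $A \cap Z = A_Z$ we obtain $Z \subseteq A_Z \cup (B \cap Z) \cup (C \cap Z) \subseteq uA_ZBC$. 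This is the crux of the plan: a naive split such as $(u;B|C) + (A\setminus u;\, B \mid uC)$ would generally fail the covers property, because variables in $A \cap Z$ could be absent from $uBC$; adjoining $A_Z$ to the $u$-side is precisely what restores coverage.

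The step-function implications are then verified by a case split on whether $u \in W$. If $h_W(\tau_1) = 1$ then $Z \subseteq W$ and $u \notin W$; the chain rule gives $\tau \implies \tau_1$, so $h_W(\tau) = 1$, and the hypothesis $\stepfn \models_{EI} \Sigma \implies \tau$ furnishes some $\sigma \in \Sigma$ that fires at $W$. Since $C \subseteq W$ and $u \notin W$, the first case ($u \in C$) is excluded, and a direct check using $A_Z \subseteq Z \subseteq W$ shows that the corresponding $\sigma_1^\sigma$ fires at $W$. Symmetrically, if $h_W(\tau_2) = 1$ then $uZ \subseteq W$ and $X \not\subseteq W$, whence $h_W(\tau) = 1$, and the firing $\sigma$ produces a firing $\sigma_2^\sigma$ (since $uA_ZC \subseteq W$ and, from $\sigma$ firing, $A_{\bar Z} \not\subseteq W$ and $B \not\subseteq W$). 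Thus $\stepfn \models_{EI} \Sigma_1 \implies \tau_1$ and $\stepfn \models_{EI} \Sigma_2 \implies \tau_2$. The main obstacle in the plan, as highlighted, is the covers property; all remaining verifications reduce to chain-rule bookkeeping and straightforward step-function arithmetic.
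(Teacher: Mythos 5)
Your proof is correct and follows essentially the same route as the paper's: the same chain-rule split $h(A;B|C)=I_h(uA_Z;B|C)+I_h(A\setminus uA_Z;B|uA_ZC)$ (your $A_{\bar Z}$ is exactly the paper's $A_XA'$), the same coverage argument via $Z=A_Z(B\cap Z)(C\cap Z)$, and the same handling of the $u\in C$ case. The only cosmetic difference is that you verify $\stepfn\models_{EI}\Sigma_1\Rightarrow\tau_1$ and $\stepfn\models_{EI}\Sigma_2\Rightarrow\tau_2$ by a direct step-function case analysis, whereas the paper packages the identical reasoning as repeated applications of Lemma~\ref{lem:partitionSigmaTechnicalLemma}.
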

\begin{proof}
  We partition $\Sigma$ into $\Sigma_1$ and $\Sigma_2$ as follows. For
  every $\sigma=(A;B|C)\in \Sigma$, if $u \in C$ then we place
  $\sigma$ in $\Sigma_2$.  Otherwise, assume w.l.o.g that $u\in A$,
  and we write $A=uA_ZA_XA'$ where $A_Z=A\cap Z$, $A_X=A\cap X$, and
  $A'=A{\setminus}\set{uA_ZA_X}$.  We use the chain rule to define
  $\sigma_1,\sigma_2$:
	\begin{equation}\label{eq:partitionSigma}
	 I_h(A;B|C)=I_h(uA_ZA_XA';B|C)=I_h(\underbrace{uA_Z;B|C}_{\defeq \sigma_1})+I(\underbrace{A_XA';B|uA_ZC}_{\defeq \sigma_2})
	\end{equation}
	We place $\sigma_1$ in $\Sigma_1$, and $\sigma_2$ in
        $\Sigma_2$.  We observe that $\sigma_1$ covers $\tau_1$
        (because $Z=A_ZB_ZC_Z\subseteq A_ZBC$) and $\sigma_2$ is
        saturated.  Furthermore, $h(\Sigma_1)+h(\Sigma_2)=h(\Sigma)$.
        We prove $\stepfn \models_{\EI} \Sigma_1 \implies \tau_1$. By assumption,
        $\stepfn \models_{\EI} \Sigma \implies \tau_1 = (Z\fd u)$.  Let any
        $\sigma_2 = (A;B | C) \in \Sigma_2$; since $u\in C$, by
        Lemma~\ref{lem:partitionSigmaTechnicalLemma} we can remove it,
        obtaining
        $\stepfn \models_{\EI} \Sigma \setminus \set{\sigma_2} \implies \tau_1$;
        repeating this process proves $\stepfn \models_{\EI} \Sigma_1 \implies \tau_1$.
        Finally, we prove $\stepfn \models_{\EI} \Sigma_2 \implies \tau_2$.
        By assumption, $\stepfn \models_{\EI} \Sigma \implies \tau_2 = (uZ \fd X)$.  Let any
        $\sigma_1 = (uA_Z;B|C) \in \Sigma_1$; since $uA_Z \subseteq
        uZ$, by 
        Lemma~\ref{lem:partitionSigmaTechnicalLemma} we can remove it,
        obtaining $\stepfn \models_{\EI} \Sigma\setminus\set{\sigma_1} \implies \tau_2$;
        repeating this process proves $\stepfn \models_{\EI} \Sigma_2 \implies \tau_2$.
\end{proof}

We now complete the proof of Theorem~\ref{thm:MainInGamman} item
\ref{item:main:2}. Let $\tau=(Z\fd X)$, and $\Sigma$ be saturated (recall that by our assumption $\Sigma$ contains only saturated CIs).
We
show, by induction on $|X|$, that if
$\stepfn \models_{\EI} \Sigma \implies \tau$ then
$\entropicPlhdrl_n \models h(\tau) \leq h(\Sigma)$.  If $|X|=1$, then
$X=\set{x}$, $h(x|Z)=I_h(x;x|Z)$ is elemental, and the claim follows
from Th.~\ref{th:mvd:elemental}.  Otherwise, let $u$ be any
variable in $X$, write $\tau = (Z \fd uX')$, and apply
Lemma~\ref{lem:partitionSigmaLemma} to $\tau_1=(Z\fd u)$,
$\tau_2=(Zu\fd X')$, which gives us a partition of $\Sigma$ into
$\Sigma_1, \Sigma_2$ such that $h(\Sigma)=h(\Sigma_1)+h(\Sigma_2)$.  On the one hand,
$\stepfn\models_{\EI} \Sigma_1 \implies \tau_1$, and from
Th.~\ref{th:mvd:elemental} we derive $h(\tau_1)\leq h(\Sigma_1)$
(because $\tau_1$ is elemental, and covered by $\Sigma_1$); on the other hand
$\stepfn\models_{\EI} \Sigma_2 \implies \tau_2$ where $\Sigma_2$ is saturated, which implies, by induction,
$h(\tau_2)\leq h(\Sigma_2)$.
The result follows from
$h(\tau)=h(\tau_1)+h(\tau_2)\leq h(\Sigma_1)+h(\Sigma_2)=h(\Sigma)$,
completing the proof of Theorem~\ref{thm:MainInGamman}.

\subsection{A special case}
\def\DisjointSaturated{ Let $\Sigma$ be a set of saturated, pairwise
	disjoint CI terms (Def.~\ref{def:disjoint}), and $\impliedCI$ be a
	saturated CI.  Then,
	$\stepfn \models_{\EI} (\Sigma \Rightarrow \tau)$ implies
	$\entropicPlhdrl_n \models h(\tau) \leq h(\Sigma)$.  } 
In Theorem~\ref{thm:DisjointSaturated} we show that under the assumption that the set of CIs in $\Sigma$ is pairwise
disjoint (Definition~\ref{def:disjoint}), we also obtain a 1-relaxation.
The rather technical proof that relies on the I-measure is deferred to Section~\ref{sec:imeasure}, where we present the I-measure.

\begin{thm}\label{thm:DisjointSaturated}
	\DisjointSaturated        
\end{thm}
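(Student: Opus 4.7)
The strategy is to prove $h(\tau) \leq h(\Sigma)$ pointwise for every polymatroid $h \in \entropicPlhdrl_n$, working in the $I$-measure formalism of Section~\ref{sec:imeasure}. The central observation is that once the atom-level inclusion $\iset(\tau) \subseteq \bigcup_j \iset(\sigma_j)$ is combined with pairwise disjointness of the $\iset(\sigma_j)$'s, the problem reduces, via set-additivity of $\imeasure$ (Theorem~\ref{thm:YeungUniqueness}), to verifying that for each $j$ the ``leftover'' set $\iset(\sigma_j)\setminus\iset(\tau)$ has non-negative $I$-measure on every polymatroid.

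First I would bootstrap the hypothesis. Since $\Sigma$ consists of saturated CIs and $\tau$ is saturated, Corollary~\ref{corr:PniffGamman} upgrades $\stepfn \models_{EI} \Sigma \Rightarrow \tau$ to $\entropicPlhdrl_n \models_{EI} \Sigma \Rightarrow \tau$, and Lemma~\ref{thm:implicationInclusion} then yields $\iset(\tau) \subseteq \iset(\Sigma) = \bigcup_j \iset(\sigma_j)$. The remark made in the paper immediately after Definition~\ref{def:disjoint} shows that pairwise disjointness of $\sigma_j,\sigma_k$ implies $\iset(\sigma_j) \cap \iset(\sigma_k) = \emptyset$ at the atom level for $j \neq k$, so the cover is actually a partition: $\iset(\tau) = \bigsqcup_j (\iset(\tau) \cap \iset(\sigma_j))$. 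Set-additivity of $\imeasure$ then gives
\[
h(\Sigma) - h(\tau) \;=\; \sum_{j} \imeasure\!\left(\iset(\sigma_j) \setminus \iset(\tau)\right),
\]
so it suffices to prove each summand on the right is $\geq 0$ for every $h \in \entropicPlhdrl_n$.

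To handle each summand, I would first normalize $\sigma_j = (A_j;B_j|C_j)$ by replacing it with $\sigma_j' = (A_j\setminus X;\, B_j\setminus X \mid C_j \cup X)$. Lemma~\ref{lem:chainRuleTechnicalLemma} gives $h(\sigma_j') \leq h(\sigma_j)$; one checks that $\sigma_j'$ remains saturated, that the $\sigma_j'$'s remain pairwise disjoint, and that $\iset(\sigma_j') = \iset(\sigma_j) \cap \isetc(X)$ still covers $\iset(\tau)$ because $\iset(\tau) \subseteq \isetc(X)$ trivially. So it suffices to prove the bound for the primed system, and hence we may assume $X \subseteq C_j$ and $A_j, B_j \subseteq YZ$ for every $j$. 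Under this normalization every atom of $\iset(\sigma_j)$ automatically avoids $X$, so $\iset(\sigma_j) \setminus \iset(\tau) = \iset(\sigma_j) \cap (\isetc(Y) \cup \isetc(Z))$. Since no atom of $\iset(\sigma_j)$ can simultaneously avoid $X$, $Y$, and $Z$ (it would be empty in $\Omega = XYZ$), this union is disjoint, and a direct atom-chase identifies the two pieces as the $I$-measure sets of the conditional mutual informations $((A_j\setminus Y); (B_j\setminus Y)\mid C_j \cup Y)$ and $((A_j\setminus Z); (B_j\setminus Z)\mid C_j \cup Z)$ respectively. Each of these is $\geq 0$ on every polymatroid by Shannon's inequalities, which yields $\imeasure(\iset(\sigma_j) \setminus \iset(\tau)) \geq 0$; summing over $j$ and combining with $h(\sigma_j') \leq h(\sigma_j)$ delivers $h(\tau) \leq h(\Sigma)$.

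The main obstacle I anticipate is the bookkeeping in the last step: verifying that the two leftover atom sets correspond exactly to the two claimed CIs even when $Y \cap Z$, $A_j \cap B_j$, or other unintended overlaps are present, and checking that the normalization preserves saturation and pairwise disjointness. The normalization is engineered precisely to eliminate the ``$S \cap X \neq \emptyset$'' branch of $\iset(\tau)^c$, which would otherwise not fit a single CI pattern and would threaten non-negativity of the leftover $I$-measure.
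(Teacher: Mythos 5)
Your proof is correct, and its skeleton is the paper's: upgrade the hypothesis via Corollary~\ref{corr:PniffGamman}, obtain the atom-level cover $\iset(\tau)\subseteq\iset(\Sigma)$ from Lemma~\ref{thm:implicationInclusion}, use pairwise disjointness (which indeed forces $\iset(\sigma_j)\cap\iset(\sigma_k)=\emptyset$, since e.g.\ $A_j\subseteq C_k$ makes an atom positive on $A_j$ and negative on all of $C_k$ impossible) together with set-additivity of $\imeasure$ to reduce everything to showing $\imeasure(\iset(\sigma)\cap\isetc(\tau))\geq 0$ for each antecedent. Where you genuinely diverge is in how that last non-negativity is established. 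The paper attacks it head-on: it decomposes $\isetc(\tau)$ into three pieces and, after further splitting along $C=C_XC_YC_Z$ and $A=A_XA_YA_Z$ (using saturation of $\sigma$), writes $\imeasure(\iset(\sigma)\cap\isetc(\tau))$ as a sum of five conditional mutual informations, each $\geq 0$ by the Shannon inequalities. You instead first normalize each $\sigma_j$ to $\sigma_j'=(A_j\setminus X;\,B_j\setminus X\mid C_jX)$, paying only $h(\sigma_j')\leq h(\sigma_j)$ via Lemma~\ref{lem:chainRuleTechnicalLemma} (valid because $\sigma_j$ is saturated, so $C_jX\subseteq A_jB_jC_j$), and observe that $\iset(\sigma_j')=\iset(\sigma_j)\cap\isetc(X)$ still covers $\iset(\tau)$ and stays pairwise atom-disjoint; this kills the $\iset(X)$-branch of $\isetc(\tau)$, so the leftover splits into just two disjoint pieces, each the $\iset$-set of a single conditional mutual information, hence $\geq 0$. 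The trade is clear: the paper's argument is self-contained within the I-measure computation but requires the heavier five-term bookkeeping, whereas yours imports the chain-rule lemma (already available, since it underpins Theorem~\ref{th:mvd:elemental}) to get a shorter and more transparent atom analysis; both deliver the same $1$-relaxation. Two small presentational cautions: the remark after Definition~\ref{def:disjoint} is phrased for step functions rather than atoms, so you should note the (easy) translation, and after normalization your final CIs should be read with the primed sets, i.e.\ $(A_j'\setminus Y;\,B_j'\setminus Y\mid C_j'Y)$, which your ``we may assume'' convention does license.
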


%%%%%%%%%%%%%%%%%%%%%%%%%%%%%%%%%%%%%%%%%%%%%%%%%%%%%%%%%%%%%%%%
\def\cubicCorollary{
	Let $\Sigma$ be a set of saturated CIs, and let $\tau=I(X;Y|Z)$ a CI.
	If $\entropicPlhdrl_n \models \Sigma \implies \tau$ then 
	$I_h(X;Y|Z)\leq|X|\cdot|Y|h(\Sigma)\leq n^2\cdot h(\Sigma)$ for any polymatroid $h$.
}
%%%%%%%%%%%%%%%%%%%%%%%%%%%%%%%%%%%%%%%%%%%%%%%%%%%%%%%%%%%%%%%%

\def\MonotonicConeDisjoint{
	Let $\Sigma$ be a set of disjoint Shannon information measures, and $\impliedCI$ be a CI. Then the implication $\entropicPlhdrl_n \models_{EI} \Sigma \implies \impliedCI$ admits unit relaxation for monotonic polymatroids.
}

\section{Relaxation for General CIs}

\label{sec:gamman}

We now extend our discussion from saturated CIs and conditionals
(or, equivalently, FDs and MVDs), to arbitrary
Conditional Independence statements.  We prove two results in this
section.  First, we prove that relaxation fails in general, and,
second, we prove that a weaker form of relaxation holds.

In both results the relaxation problem is considered in
$\cl{\entropicFunctions_n}$.  Recall that our golden standard is to
check whether the relaxation problem holds in $\entropicFunctions_n$.
As we saw, when the constraints are restricted to FDs and MVDs, then
the relaxation problem is the same in $\stepfn$, in
$\entropicFunctions_n$, in $\cl{\entropicFunctions_n}$, and in
$\Gamma_n$.  But for general constraints, these relaxation problems
differ.  Our first result in this section (the impossibility result)
also holds in $\entropicFunctions_n$, by Fact.~\ref{prop:relax:cl},
but we leave open the question whether the second result (weak
relaxation) holds in $\entropicFunctions_n$.

We state formally the two results in this section:

\begin{thm} \label{th:no:approx} There exists $\Sigma, \tau$ with
  four variables, such that
  $\cl{\Gamma_4^*} \models_{\EI} (\Sigma \Rightarrow \impliedCI)$ and
  $\cl{\Gamma_4^*} \not\models_{\APXI} (\Sigma \Rightarrow \impliedCI)$.
\end{thm}

\begin{thm} \label{th:mainConeTheorem} Let $\Sigma, \tau$ be arbitrary
  CIs, and suppose
  $\cl{\Gamma_n^*} \models_{\EI} \Sigma \Rightarrow \tau$.
  Then, for every $\varepsilon > 0$ there exists $\lambda > 0$ such
  that, for all $h \in \cl{\Gamma_n^*}$:
\begin{align}
  h(\tau) \leq & \lambda \cdot h(\Sigma) 
     +  \varepsilon \cdot h(\Omega) \label{ineq:ei:ih}
\end{align}
\end{thm}

We will prove both theorems shortly.  Before we do this, however, we
provide some background and context for these theorems, which requires
us to review the concept of {\em cones}.

\subsection{Cones}\label{sec:cones}

Both theorems~\ref{th:no:approx} and~\ref{th:mainConeTheorem} are best
understood when viewed through the lens of convex analysis, in
particular {\em cones}.  We briefly review cones here, and refer
to~\cite{schrijver-book,DBLP:journals/kybernetika/Studeny93,boyd_vandenberghe_2004}
for more details.  Fix some number $N > 0$.  A set
$K \subseteq \real^N$ is called a \emph{cone}, if for every
$\sx \in K$ and $\theta \geq 0$ we have that $\theta \sx \in K$.  A
set $C \subseteq \real^N$ is called \emph{convex} if, for any two
points $\sx_1,\sx_2 \in C$ and any $\theta \in [0,1]$,
$\theta \sx_1+ (1-\theta)\sx_2 \in C$.  Unless otherwise stated, in
this paper every cone will be assumed to be convex.  The intersection
of a, not necessarily finite, set of convex cones is also a convex
cone.  The {\em conic hull} of a set $C \subseteq \real^N$, in
notation $\conehull{C}$, is the smallest convex cone that contains
$C$, or, equivalently, it is the set of vectors of the form
$\theta_1\sx_1+\dots+\theta_k\sx_k$, where $k\geq 0$,
$\sx_1, \ldots, \sx_k \in C$, and $\theta_i \geq 0, \forall i\in [k]$.

Fix a vector $\su \in \real^N$.  The set
$K = \setof{\sx}{\sx \dotprod \su \leq 0}$ is a convex cone called a
{\em linear half-space}.  A \emph{polyhedral cone} is the intersection
of a finite number of linear half-spaces.  Equivalently, $K$ is
polyhedral if
$K = \setof{\sx}{\sx \dotprod \su_1 \leq 0, \dots, \sx \dotprod \su_r
  \leq 0}$, where $\su_1, \dots, \su_r \in \real^N$ are fixed vectors,
or, also equivalently, $K = \setof{\sx}{\sx^T A \leq 0}$ where
$A \in \real^{N \times r}$ is a matrix.  A cone $K$ is called {\em
  finitely generated} if $K = \conehull{C}$ for some finite set
$C \subseteq \real^N$; equivalently, $K$ is finitely generated
if\footnote{Here, and throughout the paper, the inequality
  $\sz \geq 0$ means component wise inequalities.}
$K = \setof{A \sz}{\sz \in \real^m, \sz \geq 0}$, for some matrix
$A \in \real^{N \times m}$.  Results by Farkas, Minkowski, and Weyl
imply that a cone is finitely generated iff it is
polyhedral~\cite[pp.61]{schrijver-book}. 

As we discussed, an entropic function, or a polymatroid
$h : \pow{[n]} \rightarrow \realPos$ can be seen as a vector
$h \in \real^N$, where $N \defeq 2^n$, in other words
$\entropicFunctions_n, \Gamma_n \subseteq \real^N$.  By definition,
$\Gamma_n$ is a polyhedral cone, hence it is a finitely generated,
convex cone.  Yeung~\cite{Yeung:2008:ITN:1457455} has proven that,
when $n \geq 3$, then $\entropicFunctions_n$ is neither convex, nor a
cone, but its topological closure $\cl{\entropicFunctions_n}$ is
always a convex cone.  When $n \leq 3$,
$\cl{\entropicFunctions_n} = \Gamma_n$ and thus
$\cl{\entropicFunctions_n}$ is finitely generated.  For $n \geq 4$,
$\cl{\entropicFunctions_n}$ is not finitely
generated~\cite{Matus2007}.  A conditional entropy
$h(B|A) = h(AB)-h(A)$ is equal to $\su \dotprod h$ where
$\su \in \real^N$ is the vector having $+1$ on the dimension $AB$,
$-1$ on the dimension $A$, and $0$ everywhere else.  Similarly, the
mutual information $I_h(B;C|A)=h(AB)+h(AC)-h(ABC)-h(A)$ equals
$\sv \cdot h$ where $\sv$ is a vector with two $+1$'s corresponding to dimensions $AB$ and $AC$, two $-1$'s corresponding to dimensions $ABC$ and $A$, and the rest 0.

This discussion justifies phrasing the relaxation problem as follows.
Fix a convex cone $K \subseteq \real^N$, and let
$\sy_0, \sy_1, \ldots, \sy_m$ be $m+1$ vectors in $\real^N$.  Relaxation
asks whether statement~\eqref{eq:statement:1} below implies
statement~\eqref{eq:statement:2}:
\begin{align}
  \forall \sx \in K: &&& \sx \dotprod \sy_1 \leq 0 \wedge \cdots \wedge \sx \dotprod \sy_m \leq 0 \Rightarrow  \sx \dotprod \sy_0 \leq 0\label{eq:statement:1}\\
\exists \theta_1, \ldots, \theta_m, \forall \sx \in K: &&& \sx \cdot \sy_0 \leq
 \theta_1 \sx \dotprod \sy_1 + \cdots + \theta_m \sx \dotprod \sy_m\label{eq:statement:2}
\end{align}
When each $\sy_i$ has the property $\sx \dotprod \sy_i \geq 0$ for all
$\sx \in K$ (as is the case with the vectors defining $h(B|A)$ and
$I_h(B;C|A)$), then the condition $\sx \dotprod \sy_i \leq 0$ is
equivalent to $\sx \dotprod \sy_i = 0$, and
statement~\eqref{eq:statement:1} is an Exact Implication.
Furthermore, we can set all $\theta_i$'s in
statement~\eqref{eq:statement:2} to be equal to
$\lambda \defeq \max_i \theta_i$, because $\sx \dotprod \sy_i \geq 0$
implies
$\sum_i \theta_i \sx \dotprod \sy_i \leq \lambda \sum_i \sx \dotprod
\sy_i$, and statement~\eqref{eq:statement:2} is an Approximate
Implication.  If we view relaxation at this level of generality, then
it is easy to find cases where relaxation holds, and where relaxation
fails:

\begin{thm}
  \label{th:relaxation:in:general:cones} (1) If $K$ is finitely
  generated, then statement~\eqref{eq:statement:1} implies
  statement~\eqref{eq:statement:2}.  (2) There exists a convex cone
  $K \subseteq \real^3$ where statement~\eqref{eq:statement:1} does
  not imply statement~\eqref{eq:statement:2}.
\end{thm}

\begin{proof}
  (1) We give here a quick and simple proof based on Farkas lemma; in
  the next section we give in Theorem~\ref{thm:RelaxationInGamman} a
  slightly more elaborate proof, based on the strong duality property
  (which is a consequence of Farkas' lemma), in order to obtain an
  upper bound on the relaxation coefficient. 

  More precisely, we use here the following version of Farkas
  lemma~\cite[pp.61, Corollary 5.3a]{schrijver-book}.  For any matrix
  $A \in \real^{N \times M}$ and vector $\sy \in \real^N$ the
  following two statements are equivalent:\footnote{The first
    statement is normally given with $\geq 0$ instead of $\leq 0$.  It
    is easy to revert the inequality by replacing $A, \sy$ with
    $-A, -\sy$.}
  \begin{itemize}
  \item $\forall \sx \in \real^N$, $\sx^T A \leq 0$ implies $\sx^T \sy
    \leq 0$.
  \item There exists $\sz \in \real^N$, $\sz \geq 0$ such that
    $A \sz = \sy$.
  \end{itemize}
  If $K$ is finitely generated, then it is polyhedral, hence
  $K = \setof{\sx}{\sx^T \su_1 \leq 0 \wedge \cdots \wedge \sx^T \su_r
    \leq 0}$.  Let $A$ be the $N \times (r+m)$ matrix whose columns
  are the vectors $\su_1, \ldots, \su_r, \sy_1, \ldots, \sy_m$.  Then,
  statement~\eqref{eq:statement:1} can be written equivalently as:
  \begin{align*}
    \forall \sx \in \real^N: && \sx^T A \leq 0 \Rightarrow & \sx^T \sy \leq 0
  \end{align*}
  Farkas lemma implies the existence of a vector $\sz \in \real^{r+m}$
  such that $\sz \geq 0$ and $A \sz = \sy$.  Denoting its components
  as $\sz = (\gamma_1, \ldots, \gamma_r, \theta_1, \ldots, \theta_m)$,
  we have $\sy = (\sum_j \gamma_j \su_j)+(\sum_i \theta_i \sy_i)$ and
  therefore:
  \begin{align*}
    \forall \sx \in K: && \sx \dotprod \sy = & (\sum_j \gamma_j \sx \dotprod \su_j)+(\sum_i \theta_i \sx \dotprod \sy_i) \leq  \sum_i \theta_i \sx \dotprod \sy_i
  \end{align*}
  since $\sx \dotprod \su_j \leq 0$, proving the claim.

  (2) Let $K \subseteq \real^3$ be the following cone:
  \begin{align*}
    K= \setof{(x_1,x_2,x_3)}{x_1 \geq 0, x_3 \geq 0, x_1x_3 \geq x_2^2}
  \end{align*}
  Equivalently, $K$ is the {\em positive semidefinite
    cone}~\cite{boyd_vandenberghe_2004}, i.e. the set of semi-positive
  definite $2 \times 2$ matrices:
  \begin{align*}
    K\defeq & \setof{A \defeq \left(\begin{array}{cc} x_1 & x_2 \\ x_2 & x_3 \end{array}\right)}{\forall \su \in \real^2: \su^T A \su \geq 0}
  \end{align*}
  It is immediate to check that $K$ is a convex cone.
  Then $K$ satisfies the following Exact Implication:
  \begin{align*}
    &&&\forall (x_1,x_2,x_3) \in K: \ \ x_1\leq 0 \Rightarrow x_2 \leq 0
  \end{align*}
  because $x_1 \leq 0$ is equivalent to $x_1=0$, implying
  $x_2^2 \leq 0$ thus $x_2 = 0$.  However, $K$ does not satisfy the
  corresponding Approximate Implication, more precisely the following
  is false:
  \begin{align*}
    &&& \exists \lambda > 0, \forall (x_1,x_2,x_3) \in K: \ \  x_2 \leq \lambda x_1
    && \mbox{(this is false)}
  \end{align*}
  Indeed, for every choice of $\lambda > 0$, choose
  $0 < x_1 < 1/\lambda$, and let $x_2 = 1$, $x_3 = 1/x_1$.  Then
  $(x_1,x_2,x_3) \in K$, yet $x_2 > \lambda x_1$.
\end{proof}

\subsection{Proof of Theorem~\ref{th:no:approx}}

For $n\leq 3$, the set $\cl{\entropicFunctions_n}$ is a polyhedral
cone, and relaxation holds, by
Theorem~\ref{th:relaxation:in:general:cones} (1).  Thus, we need a
counterexample with $n=4$ jointly distributed random variables.  The cone
$\cl{\entropicFunctions_4}$ is a subset of $\real^{16}$, hence the
counterexample needed to prove Theorem~\ref{th:no:approx} will be more
complex than that used to prove
Theorem~\ref{th:relaxation:in:general:cones} (2).  For that purpose,
we adapt an example by Kaced and Romashchenko \cite[Inequality
$(\mathcal{I}5')$ and Claim 5]{DBLP:journals/tit/KacedR13}, built upon
an earlier example by Mat{\'u}{\v s}~\cite{Matus2007}.  

Let $\Sigma$ and $\tau$ be the following:
\begin{align}
\Sigma = & \set{(C;D|A), (C;D|B), (A;B), (B;C|D)} &
\tau = &  (C;D) \label{eq:kr:sigma:tau}
\end{align}
We first prove that, for any $\lambda \geq 0$, there exists
an entropic function $h$ such that:
\begin{align}
I_h(C;D) > & \lambda \cdot (I_h(C;D|A) + I_h(C;D|B) +  I_h(A;B) + I_h(B;C|D))
\label{eq:no:ineq}
\end{align}
Indeed, consider the distribution shown in Fig.~\ref{fig:examples} (c)
(from~\cite{DBLP:journals/tit/KacedR13}), where
$0 < \varepsilon < 1/2$.  We will prove that for this distribution, the following identities hold:
\begin{align}
  I_h(C;D) = & \varepsilon + O(\varepsilon^2) \label{eq:i:1} \\
  I_h(C;D|A)= & I_h(C;D|B)=I_h(A;B)=0 \label{eq:i:2} \\
  I_h(B;C|D) = &  O(\varepsilon^2) \label{eq:i:3} 
\end{align}
These three identities prove Eq.(\ref{eq:no:ineq}), by choosing
$\varepsilon$ small enough.  These equalities were stated
in~\cite[Claim 5]{DBLP:journals/tit/KacedR13}, but no proof was
provided; for completeness, we include here their proofs.  In general,
if $X,Y$ are two joint random variables, then the conditional entropy
$h(Y|X)$ satisfies the following equality:
$h(Y|X) = \sum_{x} p(X=x) h(Y|X=x)$, where $x$ ranges over the
outcomes of the random variable $X$, $h(Y|X=x)$ is the standard
entropy of the random variable $Y$ conditioned on $X=x$, and, by
convention, $p(X=x) h(Y|X=x)=0$ when $p(X=x)=0$.  Similarly,
$I_h(Y;Z|X)=\sum_x p(X=x) I_h(Y;Z|X=x)$.  Furthermore, $Y, Z$ are
independent random variables iff $I_h(Y;Z)=0$.  Therefore, the expressions in
Eq.~\eqref{eq:i:2} become:
\begin{align*}
 I_h(C;D|A) = &(1-2\varepsilon)I_h(C;D|A=0) + 2\varepsilon I_h(C;D|A=1) = 0 + 0=0 \\
 I_h(C;D|B) = & \frac{1}{2} I_h(C;D|B=0) + \frac{1}{2} I_h(C;D|B=1) = 0+0 = 0 \\
 I_h(A;B) = & 0
\end{align*}
In the first line we used the fact that when $A=0$ then $C$ is
constant and thus $C, D$ are independent, and when $A=1$ then $D$ is
constant.  The same argument applies to the second line.  For the last
line it suffices to check that $A,B$ are independent:
$p(A=1)=2\varepsilon$, $p(B=1)=1/2$, and
$p(A=1,B=1)=\varepsilon=p(A=1)\cdot p(B=1)$.  (It is standard that the
other combinations of values for $A$ and $B$ follow, for example
$p(A=1,B=0)=p(A=1)-p(A=1,B=1)=p(A=1)-p(A=1)p(B=1)=p(A=1)(1-p(B=1))=p(A=1)p(B=0)$.)
This proves Eq.~\eqref{eq:i:2}.

For the proof of identities~\eqref{eq:i:1} and ~\eqref{eq:i:3}, we
start by listing explicitly the probability distribution of $C,D$,
then expand $I_h(B;C|D) = p(D=0)I_h(B;C|D=0)+p(B;C|D=1)$ and list
explicitly the probability distribution of $B,C$ conditioned on $D=0$
(since $I_h(B;C|D=1)=0$):

\null\hfill
\begin{tabular}{lll}  \cmidrule[\heavyrulewidth]{1-2}
  $C$ & $D$ & \\ \cmidrule{1-2}
  $0$   & $0$   & $1/2$ \\
  $0$   & $1$   & $1/2-\varepsilon$ \\
  $1$   & $0$   & $\varepsilon$ \\ \cmidrule[\heavyrulewidth]{1-2}
\end{tabular}
\hfill
\begin{tabular}{llll}  \cmidrule[\heavyrulewidth]{1-2}
  $B$ & $C$ & \\ \cmidrule{1-2}
  $0$   & $0$   & $(1/2-\varepsilon)/(1/2+\varepsilon)$ & $\defeq 1-\varepsilon_1-\varepsilon_2$\\
  $0$   & $1$   & $\varepsilon/(1/2+\varepsilon)$ & $\defeq \varepsilon_1$ \\
  $1$   & $0$   & $\varepsilon/(1/2+\varepsilon)$ & $\defeq \varepsilon_2$\\
  \cmidrule[\heavyrulewidth]{1-2}
\end{tabular}
\hfill \null 
\newline where $\varepsilon_1 = \varepsilon_2 = O(\varepsilon)$ (we
use different symbols $\varepsilon_1, \varepsilon_2$, even though they
are equal, to make it easier to follow the calculations below).  These
two distributions are quite different (the first has the probability
mass split almost $1/2$ and $1/2$, the second has the entire mass
concentrated on the first outcome) and we need different techniques to
compute the mutual information $I_h(\cdots)$.  We begin with
$I_h(C;D)$:
\begin{align*}
 I_h(C;D) = & h(D) - h(D|C) = h(D) - p(C=0)h(D|C=0) = h(D)-(1-\varepsilon)h(D|C=0)
\end{align*}
For any number $x \in (0,1)$, denote by $B(x)$ a Bernoulli random
variable $X$, with outcomes $p(X=0)=1-x$ and $p(X=1)=x$.  We denote by
$f(x)$ its entropy, and $f'(x), f''(x)$ its derivatives:
\begin{align*}
  f(x) \defeq & -(x \log x + (1-x) \log (1-x)) = - (x \ln x + (1-x)  \ln(1-x)) \log e\\
  f'(x) = & - (\ln x - \ln(1-x)) \log e\\
  f''(x) = & - (\frac{1}{x} + \frac{1}{1-x}) \log e
\end{align*}
Notice that $f'(1/2)=0$, and $f''(1/2) = - 4\log e \approx - 5.77$.
We use Taylor's expansion to compute $f$ in a vicinity of $1/2$.  For
$0 < \theta < 1/2$, there exists $\tau \in (0,\theta)$ such that:
\begin{align*}
  f(1/2-\theta) = & f(1/2) + \theta f'(1/2) + \frac{\theta^2}{2}f''(1/2-\tau)
= 1-O(\theta^2)
\end{align*}
The distribution of $D$ is simply $B(1/2-\varepsilon)$, and the
distribution of $D|_{C=0}$ is
$B\left(\frac{1/2-\varepsilon}{1-\varepsilon}\right)=B(1/2-\delta)$
where $\delta = O(\varepsilon)$, therefore:
\begin{align*}
  I_h(C;D) = & f(1/2-\varepsilon) - (1-\varepsilon)f(1/2-\delta) \\
 = & 1-O(\varepsilon^2) - (1-\varepsilon)(1-O(\delta^2)) \\
 = & \varepsilon + O(\delta^2)-O(\varepsilon^2)
\end{align*}
which proves~\eqref{eq:i:1}.  

Next, we compute $I_h(B;C|D=0)$, for which we apply the formula
\begin{align*}
I(X;Y)= & \sum_{x,y} p(X=x,Y=y) \log\frac{p(X=x,Y=y)}{p(X=x)p(Y=y)}
\end{align*}
We denote by $p_0(-) \defeq p(- | D=0)$, therefore:
\begin{align*}
  I_h(B;C|D=0)
 = &
(1-\varepsilon_1-\varepsilon_2) \log \frac{1-\varepsilon_1-\varepsilon_2}{(1-\varepsilon_2)(1-\varepsilon_1)}+
\varepsilon_1 \log \frac{\varepsilon_1}{(1-\varepsilon_2)\varepsilon_1}+
\varepsilon_2 \log \frac{\varepsilon_2}{\varepsilon_2 (1-\varepsilon_1)}\\
 = & (1-\varepsilon_1-\varepsilon_2) \log \frac{1-\varepsilon_1-\varepsilon_2}{(1-\varepsilon_2)(1-\varepsilon_1)}+
\varepsilon_1 \log \frac{1}{(1-\varepsilon_2)}+
\varepsilon_2 \log \frac{1}{(1-\varepsilon_1)}\\
 = & - \left((1-\varepsilon_1-\varepsilon_2)\ln (1-\varepsilon_0) +
\varepsilon_1 \ln(1-\varepsilon_2)+
\varepsilon_2 \ln(1-\varepsilon_1)
\right) \log e
\end{align*}
where:
\begin{align*}
\varepsilon_0 = & 1 -
\frac{(1-\varepsilon_2)(1-\varepsilon_1)}{1-\varepsilon_1-\varepsilon_2}=\frac{\varepsilon_1\varepsilon_2}{1-\varepsilon_1-\varepsilon_2} =
O(\varepsilon_1\varepsilon_2)
\end{align*}
Finally, we use the fact that $- \ln (1 - x) =
O(x)$ and obtain:
\begin{align*}
  I_h(B;C|D=0) = & (1-\varepsilon_1-\varepsilon_2)O(\varepsilon_0)+\varepsilon_1 O(\varepsilon_2)+\varepsilon_2O(\varepsilon_1)
= O(\varepsilon_1\varepsilon_2)
\end{align*}
which proves equation~\eqref{eq:i:3}

Next, we prove
$\cl{\Gamma_n^*} \models_{\EI} (\Sigma \Rightarrow \impliedCI)$.  This
follows from an inequality initially proven by Mat{\'u}{\v
  s}~\cite{Matus2007}, then adapted
by~\cite{DBLP:journals/tit/KacedR13}.  For completeness, we review
here that inequality, starting with the statement of Theorem 2
in~\cite{Matus2007}, which asserts that for every entropic vector
$h \in \entropicFunctions_5$ over 5 variables, and every natural
number $k \geq 1$:
\begin{align*}
  k\left(\square_{13,24} + \Delta_{34|5} + \Delta_{45|3}\right) +  \Delta_{35|4} + \frac{k(k-1)}{2}\left(\Delta_{24|3}+\Delta_{34|2}\right)\dotprod h \geq & 0
\end{align*}
where, using our notation:
\begin{align*}
 \Delta_{YZ|X} \dotprod h \defeq & I_h(Y;Z|X) &
 \square_{AB,CD}\dotprod h \defeq & I_h(C;D|A) + I_h(C;D|B) + I_h(A;B) - I_h(C;D)
\end{align*}
Substituting $A=1, B=3, C=4, D=2, E=5$ in Mat{\'u}{\v s}'s inequality
and dividing by $k$, we obtain the following (which is Eq.(ii) in
Theorem 2 of~\cite{DBLP:journals/tit/KacedR13}):
\begin{align*}
 & \left(I_h(C;D|A) + I_h(C;D|B) + I_h(A;B) - I_h(C;D)\right) \\
  + & I_h(B;C|E) + I_h(C;E|B) + \frac{1}{k}I_h(B;E|C) + \frac{k-1}{2}\left( I_h(C;D|B) + I_h(B;C|D)\right)\geq 0
\end{align*}
Finally, we set $E=D$ to obtain the following inequality, for all
$h \in \entropicFunctions_n$ and $k \geq 1$:
\begin{align}
  I_h(C;D) \leq & I_h(C;D|A) + \frac{k\vplus3}{2}I_h(C;D|B) + I_h(A;B) + \frac{k{+}1}{2}I_h(B;C|D)
              +  \frac{1}{k}I_h(B;D|C)\label{eq:MatusUnConditionalImplication}
\end{align}

By continuity, the inequality also holds for
$\cl{\entropicFunctions_n}$ too.  We can prove now that the Exact
Implication $\Sigma \Rightarrow \tau$ holds in
$\cl{\entropicFunctions_n}$.  Assume
$I_h(C;D|A)=I_h(C;D|B)=I_h(A;B)=I_h(B;C|D)=0$.  Then
inequality~\eqref{eq:MatusUnConditionalImplication} implies
$0 \leq I_h(C;D) \leq \frac{1}{k}I_h(B;D|C)$, for any $k \geq 1$.
Letting $k \rightarrow \infty$, implies $I_h(C;D)=0$.

It is interesting to observe that inequality
(\ref{eq:MatusUnConditionalImplication}) is almost a relaxation of the
implication (\ref{eq:kr:sigma:tau}): the only extra term is the last
term, which can be made arbitrarily small by increasing $k$.  Our
second result generalizes this.

\subsection{Proof and Discussion of Theorem~\ref{th:mainConeTheorem}}

The proof of Theorem~\ref{th:mainConeTheorem} follows from a more
general statement about cones:
\def\mainConeTheorem{ Let $K\subseteq \R^N$ be a topologically closed,
  convex cone, and let $\sy_0, \sy_1,\dots,\sy_m$ be $m+1$ vectors in
  $\real^N$. The following are equivalent:
  \begin{align}
                               & \forall \sx \in K: & \sx\cdot \sy_1 \leq 0, \dots, \sx\cdot \sy_m  \leq 0 \Rightarrow  \sx \cdot \sy_0 \leq 0 \label{eq:th:main:1}\\
    \forall \varepsilon > 0, \exists \theta_1, \ldots, \theta_m  \geq 0, &\forall \sx \in K, &\sx \dotprod \sy_0 \leq \theta_1 \sx \dotprod \sy_1 + \cdots + \theta_m \sx \dotprod \sy_m + \varepsilon ||\sx||_\infty \label{eq:th:main:2}
  \end{align}
}
\begin{thm}\label{thm:mainConeTheorem}
\mainConeTheorem
\end{thm}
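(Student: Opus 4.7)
The proof proceeds by recasting both conditions as geometric statements about a dual cone, using the machinery listed as items~(\ref{item:cone:2})--(\ref{item:cone:1}). Let $L \defeq \conehull{\set{-y_1,\dots,-y_m}}$, which is finitely generated and therefore closed by~(\ref{item:cone:3}), and let $H \defeq \set{x \in \real^N \mid x \cdot y \leq 0}$. Note that $\set{x \mid x \cdot y_i \leq 0,\ \forall i} = L^*$, and $H$ is the half-space whose dual is the ray $H^* = \set{-\lambda y \mid \lambda \geq 0}$. Condition~(a) is exactly the inclusion $K \cap L^* \subseteq H$; since all three are closed, convex cones, this is equivalent (by duality) to $H^* \subseteq (K \cap L^*)^*$, hence to $-y \in (K \cap L^*)^*$.

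The direction (b)~$\Rightarrow$~(a) is routine: if $x \in K$ satisfies $x\cdot y_i \leq 0$ for every $i$ and $\theta_i \geq 0$, then $\sum_i \theta_i x\cdot y_i \leq 0$, so (b) gives $x\cdot y \leq x \cdot e \leq \|x\|_1 \|e\|_\infty < \varepsilon \|x\|_1$; letting $\varepsilon \to 0$ yields $x \cdot y \leq 0$.

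For the harder direction (a)~$\Rightarrow$~(b), I apply property~(\ref{item:cone:1}) to the two closed convex cones $K$ and $L^*$:
\[
(K \cap L^*)^* = \cl{\conehull{K^* \cup L^{**}}} = \cl{\conehull{K^* \cup L}},
\]
where the second equality uses $L^{**}=L$ because $L$ is closed and convex by~(\ref{item:cone:3}) and~(\ref{item:cone:2}). Thus (a) gives $-y \in \cl{\conehull{K^* \cup L}}$. Every element of $\conehull{K^* \cup L}$ has the form $k - \sum_{i=1}^m \theta_i y_i$ with $k \in K^*$ and $\theta_i \geq 0$. Therefore, for every $\varepsilon > 0$ there exist such $k$ and $\theta_i$ with
\[
\Bigl\| -y - \bigl(k - \textstyle\sum_i \theta_i y_i\bigr) \Bigr\|_\infty < \varepsilon.
\]
Setting $e \defeq y + k - \sum_i \theta_i y_i$ (so $\|e\|_\infty < \varepsilon$) and taking the inner product with any $x \in K$ gives
\[
x \cdot y = -x\cdot k + \sum_i \theta_i\, x\cdot y_i + x\cdot e \;\leq\; \sum_i \theta_i\, x\cdot y_i + x\cdot e,
\]
since $x\cdot k \geq 0$ by $x \in K$ and $k \in K^*$. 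This is exactly~(b).

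\textbf{Main obstacle.} The only non-trivial ingredient is that the closure in $\cl{\conehull{K^* \cup L}}$ cannot in general be removed, which is precisely why an error term $e$ is required rather than an exact relaxation; this matches the negative Theorem~\ref{th:no:approx} and the warm-up example in Section~\ref{sec:DansExampleSec5}. Everything else is a mechanical application of the duality identities in Theorem~\ref{th:basic}; no compactness or separation arguments beyond those packaged in properties~(\ref{item:cone:2})--(\ref{item:cone:1}) are needed.
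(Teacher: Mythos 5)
Your proof is correct and follows essentially the same route as the paper: encode (a) as $-y\in (K\cap L^*)^*$, encode (b) as $-y\in\cl{\conehull{K^*\cup L}}$, and identify the two sets via the duality identity $(K_1\cap K_2)^*=\cl{\conehull{K_1^*\cup K_2^*}}$ together with closedness of finitely generated cones. The only differences are cosmetic (you take $L$ to be the conic hull of $\set{-y_1,\dots,-y_m}$ so that $L^{**}=L$ directly, and you write out the easy direction explicitly), so there is nothing substantive to add.
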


We first show that Theorem~\ref{thm:mainConeTheorem} implies
Theorem~\ref{th:mainConeTheorem}. For this purpose we take
$K \defeq \cl{\entropicFunctions_n}$, which is a closed, convex
cone~\cite{Yeung:2008:ITN:1457455}. Let
$\Sigma = \set{(B_1;C_1|A_1), \ldots, (B_m;C_m|A_m)}$, and
$\tau = (B_0;C_0|A_0)$.  We define the vectors $\sy_i$ such that
$\sy_i \dotprod h = I_h(B_i;C_i|A_i)$, and notice that the conditions
$\sy_i \dotprod h \leq 0$ and $\sy_i \dotprod h = 0$ are equivalent,
because $I_h(B_i;C_i|A_i) \geq 0$ always holds.  If the Exact
Implication $\Sigma \Rightarrow \tau$ holds, then
condition~\eqref{eq:th:main:1} holds.  This implies
condition~\eqref{eq:th:main:2}, and inequality~\eqref{ineq:ei:ih} in
Theorem~\ref{th:mainConeTheorem} follows by setting
$\lambda \defeq \max_i \theta_i$, and observing that
$||h||_\infty = h(\Omega)$ (i.e., this is because $\max_{i\in N}h_i=h_N= h(\Omega)$).

In the rest of this section we prove
Theorem~\ref{thm:mainConeTheorem}.  While we only need the
implication~\eqref{eq:th:main:1} $\Rightarrow$~\eqref{eq:th:main:2},
it helps to observe that the reverse holds too.  Indeed, assuming
$\sx \dotprod \sy_i \leq 0$ for $i=1,m$, the
condition~\eqref{eq:th:main:2} implies that, for every
$\varepsilon > 0$: $\sx \dotprod \sy_0 \leq \varepsilon ||\sx||$.  By
taking $\varepsilon \rightarrow 0$ we obtain
$\sx \dotprod \sy_0 \leq 0$.

To prove the implication~\eqref{eq:th:main:1}
$\Rightarrow$~\eqref{eq:th:main:2} we need to review some properties
of cones.  For any set $C \subseteq \real^N$, its \e{dual},
$C^* \subseteq \real^N$ is the following set:
\begin{equation} \label{eq:dualCone}
	C^* \defeq \set{\sy \mid \forall \sx \in C, \sx\dotprod \sy \geq 0}
\end{equation}
It is immediate to check that $C^*$ is a topologically closed, convex
cone
(because $C^*$ is the intersection of, possibly infinitely many,
linear half-spaces, each of which is a topologically closed, convex
cone), formally: $C^* = \cl{C^*}$. We warn that the $*$ in $\entropicFunctions_n$ does {\em
  not} represent the dual; the notation $\entropicFunctions_n$ for
entropic functions is by now well established, and we adopt it here
too, despite its clash with the standard notation for the dual cone.

We need the following basic  properties of cones:

\begin{enumerate}[(A)]
\item \label{item:cone:3} If $L$ is a finite set, then $\conehull{L}$
  is topologically closed.
\item \label{item:cone:2} For any set $K$, $\cl{\conehull{K}}=K^{**}$.
\item \label{item:cone:1} If $K_1$ and $K_2$ are closed, convex cones
  then:  $(K_1 \cap K_2)^* = \left(\cl{\conehull{K_1^* \cup K_2^*}}\right)$.
\end{enumerate}

Item~\ref{item:cone:3} follows immediately from the definition:
$\conehull{\set{\sx_1, \ldots, \sx_r}} = \setof{\sum_i \theta_i
  \sx_i}{\theta_1, \ldots, \theta_i \in [0,1]}$.
Item~\ref{item:cone:2} is well-known and non-trivial, see for
example~\cite[Fact 6]{DBLP:journals/kybernetika/Studeny93}.

Item~\ref{item:cone:1} is perhaps less well-known, and we include the
proof here.  We will use repeatedly the anti-monotonicity of $(-)^*$:
$C_1 \subseteq C_2$ implies $C_1^* \supseteq C_2^*$.  We first prove
the inclusion:
\begin{align}
  \left(\cl{\conehull{K_1^* \cup K_2^*}}\right) \subseteq & (K_1 \cap K_2)^*
\label{eq:some:label:here:1}
\end{align}
as follows: for $i\in \set{1,2}$, $K_i \supseteq K_1 \cap K_2$, which
implies $K_i^* \subseteq (K_1 \cap K_2)^*$, thus
$K_1^* \cup K_2^* \subseteq (K_1 \cap K_2)^*$.
Eq.~\eqref{eq:some:label:here:1} follows from the fact that
$(K_1 \cap K_2)^*$ is both a convex cone and topologically closed.
Next, we prove:
\begin{align}
(K_1 \cap K_2)^* \subseteq &   \left(\cl{\conehull{K_1^* \cup K_2^*}}\right)
\label{eq:some:label:here:2}
\end{align}
For $i\in \set{1,2}$, we have that
$K_i^* \subseteq \cl{\conehull{K_1^* \cup K_2^*}}$, and therefore it holds that
$K_i^{**} \supseteq \left(\cl{\conehull{K_1^* \cup K_2^*}}\right)^*$.
Since $K_1$ and $K_2$ are closed convex cones then by
item~\ref{item:cone:2} it holds that $K_1^{**}=K_1$ and
$K_2^{**}=K_2$. Therefore, for $i\in \set{1,2}$ we have that
$K_i \supseteq \left(\cl{\conehull{K_1^* \cup K_2^*}}\right)^*$.  From
the above we get that
$K_1 \cap K_2 \supseteq \left(\cl{\conehull{K_1^* \cup
      K_2^*}}\right)^*$, which implies
$(K_1 \cap K_2)^* \subseteq \left(\cl{\conehull{K_1^* \cup
      K_2^*}}\right)^{**}$. By item item~\ref{item:cone:2} we have
that
$\left(\cl{\conehull{K_1^* \cup
      K_2^*}}\right)^{**}=\cl{\conehull{K_1^* \cup K_2^*}}$, proving
Eq.~\eqref{eq:some:label:here:2}.

We now have the tools needed to prove
Theorem~\ref{thm:mainConeTheorem}.  Assume
condition~\eqref{eq:th:main:1} holds, and let $\varepsilon > 0$ be
some number.  Denote by
$L \defeq \set{-\sy_1,-\sy_2, \ldots, -\sy_m}$.  The LHS of the
implication~\eqref{eq:th:main:1} defines the set
$\setof{\sx}{\sx \in K, \sx\dotprod \sy_1 \leq 0, \ldots, \sx \dotprod
  \sy_m \leq 0} = \setof{\sx}{\sx \in K \cap L^*}$.  Thus,
implication~\eqref{eq:th:main:1} states
$x \in K \cap L^* \Rightarrow \sx \dotprod \sy_0 \leq 0$,, which is
equivalent to ${-}\sy_0 \in (K \cap L^*)^*$.  We rewrite this set as:
\begin{align*}
    (K \cap L^*)^* = & \cl{\conehull{K^* \cup L^{**}}} & \text{Item \ref{item:cone:1}}\\
        = & \cl{\conehull{K^* \cup \cl{\conehull{L}}}}  & \text{Item \ref{item:cone:2}}\\
        = & \cl{\conehull{K^* \cup \conehull{L}}}  & \text{Item \ref{item:cone:3}}\\
        = & \cl{\conehull{K^* \cup L}}  & \text{Definition of $\conehull{-}$} 
\end{align*}
Thus, $-\sy_0 \in \cl{\conehull{K^* \cup L}}$.  By the definition of
the topological closure, for all $\delta > 0$, there exists a vector
in $\conehull{K^* \cup L}$ that is $\delta$-close to $-\sy_0$; in
particular, we choose $\delta = \varepsilon/N$.  Equivalently, there
exists a vector $\se \in \real^N$ such that
$||\se||_\infty < \varepsilon/N$ and:
\begin{align*}
-\sy_0+\se \in & \conehull{K^* \cup L}
\end{align*}
By definition of the cone hull, there exists $\su \in K^*$ and
$\theta_1, \ldots, \theta_m \geq 0$ such that:
\begin{align*}
-\sy_0+\se = & \su - \theta_1 \sy_1 - \cdots - \theta_m \sy_m
\end{align*}
or, equivalently:
\begin{align*}
\theta_1 \sy_1 + \cdots + \theta_m \sy_m -\sy_0+\se = & \su \in K^*
\end{align*}
By the definition of $K^*$, it follows that, for every $\sx \in K$:
\begin{align*}
  \theta_1 \sx \dotprod \sy_1 + \cdots + \theta_m \sx \dotprod \sy_m -\sx \dotprod \sy_0+\sx \dotprod \se\geq & 0
\end{align*}
Statement~\eqref{eq:th:main:2} follows by observing that
$|\sx \dotprod \se| \leq N ||\sx||_\infty \cdot ||\se||_\infty <
\varepsilon ||\sx||_\infty$.
$\qedhere$

\section{Restricted Axioms: Shannon Inequalities} \label{sec:BoundedRelaxation}
\added{The results in the previous section are mostly negative: for
  general constraints, relaxation fails, with the only exception of MVDs
  and FDs, where relaxation holds.  In this section we prove that
  relaxation holds in general for constraints that
  can be inferred using only Shannon inequalities
  (monotonicity~\eqref{eq:monotonicity}, and
  submodularity~\eqref{eq:submodularity}).  Equivalently, this means
  interpreting the constraints over the set of all polymatroids, $\Gamma_n$.
}

\added{While the golden standard is the implication problem in  $\cl{\entropicFunctions_n}$,
a study of the implication problem in $\Gamma_n$}
is important for several reasons.  First, by restricting to
Shannon inequalities we obtain a sound, but in general incomplete
(w.r.t. $\cl{\entropicFunctions_n}$)
method for deciding implications. The incompleteness stems from the fact that the 
set of entropic functions (and their limit points) $\closure(\entropicFunctions_n)$ obey additional inequalities (and hence, implications) beyond those that result from the Shannon inequalities. These are called \e{non-Shannon-Type inequalities}~\cite{DBLP:journals/tit/ZhangY97,DBLP:journals/cominfsys/MakarychevMRV02}, and as their name suggests, they are not implied by the Shannon inequalities.
For example,  Mat{\'u}{\v s}'s
inequality~(\ref{eq:MatusUnConditionalImplication}) is a non-Shannon
inequality. It holds only in $\closure(\entropicFunctions_n)$ but fails in $\Gamma_n$.
Second, while generally incomplete, Shannon's inequalities are
complete for characterizing the inequalities that hold under certain
syntactic restrictions.  In particular, it follows from our results in
Section~\ref{sec:PBoundedRelaxations} that they are complete for FDs
and MVDs, and it follows from results
in~\cite{DBLP:journals/corr/abs-2105-14463} that they are complete for
marginal CIs\footnote{Marginal CIs have the form $(X;Y)$ (i.e., no
  conditioning)}: in both cases, relaxation holds, with a factor
$\lambda = n^2/4$.

We have already seen in Theorem~\ref{th:relaxation:in:general:cones}
that  exact implication of CIs relax over $\Gamma_n$.  We start by
proving an upper bound on the coefficient of the relaxation.

\begin{thm}
	\label{thm:RelaxationInGamman}
        Let $\Sigma, \tau$ be arbitrary CIs.  If
        $\Gamma_n \models_{\EI} \Sigma \Rightarrow \tau$, then
        $\Gamma_n \models h(\tau) \leq (2^n)! \cdot h(\Sigma)$.  In
        other words, CIs admit relaxation over $\Gamma_n$ with
        coefficient $\lambda \leq (2^n)!$.
\end{thm}
\added{
\begin{proof}
  We start by proving a lemma:

\begin{lem} \label{lemma:factorial}
  Consider a system of linear equations, $Ax = b$, where $A$ is an
  $M \times N$ matrix with $\texttt{rank}(A)=M$.  Suppose all entries
  in $A$ and $b$ are $-1$, $0$, or $1$.  Then there exists a solution
  $x$ such that, for all $i=1,N$, $|x_i| \leq M!$
\end{lem}

\begin{proof}
  Let $A_0$ be a sub-matrix of $A$ consisting of $M$ linearly
  independent columns.  W.l.o.g. we can assume that these are the
  first $M$ columns, and write $A$ as $A = [A_0 | A_1]$.  The equation
  $Ax=b$ becomes $A_0x_0 + A_1x_1 = b$, where $x_0$ represent the
  first $M$ coordinates of $x$, and $x_1$ are the remaining $N-M$
  coordinates.  We set $x_1=0$, and solve $A_0x_0=b$ using Cramer's
  rule: for all $i=1,M$, $(x_0)_i = \Delta_i / \Delta$, where
  $\Delta = \det(A_0) \neq 0$ and $\Delta_i$ is the determinant of the
  matrix $A_0$ where the $i$'th column is replaced with $b$.  Each
  determinant is the sum of $M!$ terms, and each term is $-1$, $0$, or
  $1$.  It follows that $|\Delta_i| \leq M!$, hence
  $|(x_0)_i| \leq M!/1 = M!$.
\end{proof}

Next, we briefly review two facts from linear programming, see
e.g.~\cite{schrijver-book}.  Let $A \in \R^{M \times N}$,
$b \in \R^M$, $c \in \R^N$.  Then:
  \begin{itemize}
  \item The strong duality theorem states that the primal LP and the
    dual LP have the same optimal values:
    \begin{align*}
      \max \setof{c^T x}{x \in \R^N, x \geq 0, A x  \leq b}=
& \min \setof{y^T b}{y \in \R^M, y \geq 0, y^T A \geq c^T}
    \end{align*}
  \item If the primal linear program has a finite optimal solution,
    then it has an optimal solution $x$ that is a vertex of the
    polytope $\setof{x\in \R^N}{x \geq 0, A x \leq b}$.  In
    particular, if $r \defeq \texttt{rank}(A)$, then there exists an
    optimal solution $x$ of the primal LP that satisfies $A_0 x = b$,
    where $A_0$ is some $r \times N$ sub-matrix consisting of $r$
    independent rows.  As a consequence, if all entries in $A,b$ are
    $-1$ or $0$ or $+1$, then, by Lemma~\ref{lemma:factorial}, there
    exists an optimal solution $x$ satisfying $|x_i| \leq r!$, for
    every coordinate $i=1,N$.
  \end{itemize}

  We prove now Theorem~\ref{thm:RelaxationInGamman}.  Fix a set of $n$
  variables, $\Omega$, and let $\Gamma_n$ be the set of all
  polymatroids over variables $\Omega$.  $\Gamma_n$ is defined by
  Shannon's inequalities, monotonicity~\eqref{eq:monotonicity}, and
  submodularity~\eqref{eq:submodularity}, and it is known that it
  suffices to take only the {\em elemental} inequalities,
  i.e. inequalities of the following form
  (see~\cite[(14.12)]{Yeung:2008:ITN:1457455}):
  \begin{align*}
    \forall X \in \Omega: &&   h(\Omega-X) - h(\Omega) \leq &0 & \mbox{monotonicity}\\
    \forall X, Y \in \Omega, W \subseteq \Omega - \set{X,Y}: && h(W)-h(WX)-h(WY)+h(WXY)\leq & 0 & \mbox{submodularity}
  \end{align*}
  There are $n$ elemental monotonicity constraints, and
  $\frac{n(n-1)}{2}2^{n-2}=n(n-1)2^{n-3}$ submodularity constraints.
  The total number of Shannon inequalities is $n+n(n-1)2^{n-3}$.
  Equivalently, we can write:
  \begin{align}
    \Gamma_n = & \setof{h}{h \in \R^{2^n}, h\geq 0, A_S h \leq 0} \label{eq:def:gamman:lp:constraint}
  \end{align}
  where $A_S$ is the $\left(n+n(n-1)2^{n-3}\right) \times 2^n$ matrix
  corresponding to all Shannon inequalities.  Similarly, the
  constraints $\Sigma$ on $h$ can be defined by some $m \times 2^n$
  matrix $A_\Sigma$, in other words:
  \begin{align*}
    \setof{h}{h \models \Sigma} = & \setof{h}{h \in \R^{2^n}, A_\Sigma h \leq 0}
  \end{align*}
  $A_\Sigma$ has one row for each constraint in $\Sigma$.  For
  example, if $\Sigma$ contains the CI $Z_1;Z_2|V$, then one row in
  the matrix $A_\Sigma$ corresponds to the assertion
  $I_h(Z_1;Z_2|V) \leq 0$, or, equivalently,
  $-h(V)+h(Z_1V)+h(Z_2V)-h(Z_1Z_2V) \leq 0$.  Let
  $M \defeq \left(n+n(n-1)2^{n-3}\right) + m$ and
  $N \defeq 2^n$, and let $A$ be the following $M \times N$ matrix:
  \begin{align*}
    A \defeq & \left[
               \begin{array}{c}
                 A_S \\ \hline
                 A_\Sigma
               \end{array}
\right]
  \end{align*}
  Finally, define $b \in \R^M$ the 0-vector, $b^T = (0,0,\ldots, 0)$,
  and let $c_\tau \in \R^{N}$ be the vector corresponding to the
  constraint $\tau$.  More precisely, assume $\tau= (X;Y|W)$.  Then
  $c_\tau^T h$ is the linear expression $-h(W)+h(WX)+h(WY)-h(WXY)$.

  With these notations, we claim that the Exact Implication
  $\Gamma_n \models_{\EI} (\Sigma \implies \tau)$ holds iff the
  optimal solution of the linear program below has value zero:
  \begin{align*}
    \max \setof{c_\tau^T h}{h \in \R^{2^n}, h\geq 0, Ah \leq b}
  \end{align*}
  Recall that $b = 0$.  Indeed, the constraints
  $h \in \R^{2^n}, h\geq 0, Ah \leq b$ define all polymatorids
  $h \in \Gamma_n$ that satisfy $\Sigma$, while
  $c_\tau^T h = I_h(X;Y|W)$ is the value of consequent $\tau$.  We
  always have
  $\max \setof{c_\tau^T h}{h \in \R^{2^n}, h\geq 0, Ah \leq b} \geq
  0$, because the zero polymatroid $h\defeq 0$ is a feasible solution
  to the LP.  Moreover, if the Exact Implication holds, then every
  feasible solution is a polymatroid that satisfies $\Sigma$, hence it
  satisfies $I_h(X;Y|W) \leq 0$, in other words $c_\tau^T h \leq 0$,
  proving tha the optimal solution to the LP is $\leq 0$.  We conclude
  that, if the Exact Implication holds, then the optimal is $=0$,
  which proves the claim.

  Next, assume that the exact implication holds, thus the primal LP
  has optimal value 0.  By the strong duality theorem, the dual LP
  also has optimal value 0:
  \begin{align*}
    \max \setof{c_\tau^T h}{h \in \R^{2^n}, h\geq 0, Ah \leq b} = & \min \setof{y^T b}{y \in \R^M, y \geq 0, y^T A \geq c_\tau^T} = 0
  \end{align*}
  Since $b=0$, we have $y^Tb=0$, hence, when the exact implication
  holds, then
  $\min \setof{0}{y \in \R^M, y \geq 0, y^T A \geq c_\tau^T}=0$.
  Equivalently, this asserts that the constraint
  $y \geq 0, y^T A \geq c_\tau$ has a feasible solution $y$: otherwise
  the optimal of the dual solution is $\min \emptyset = \infty$.

  We observe now that every feasible solution $y$ to the dual LP
  represents a relaxation of the implication problem.  Indeed, lets
  write $y$ as $y = \left[\begin{array}{c}y_S \\ \hline y_\Sigma\end{array}\right]$, 
  where $y_S$ consists of the first $\left(n+n(n-1)2^{n-3}\right)$
  coordinates, and $y_\Sigma$ of the last $m$ coordinates of $y$.
  Since $y$ is feasible, we have
  $y^T A = y^T_S A_S + y^T_\Sigma A_\Sigma \geq c_\tau^T$.  Let
  $h \in \Gamma_n$ be any polymatroid; since $h \geq 0$, we have:
  \begin{align*}
    c_\tau^T h \leq & (y^T_S A_S + y^T_\Sigma A_\Sigma)h 
= y^T_S (A_Sh) + y^T_\Sigma (A_\Sigma h) \leq y^T_\Sigma (A_\Sigma h)
  \end{align*}
  In the last inequality we used the fact that $A_Sh \leq 0$, by
  definition of $\Gamma_n$ in Eq.~\eqref{eq:def:gamman:lp:constraint}.
  The $m$ coordinates of the vector $A_\Sigma h \in \R^m$ are the
  values $I_h(Z_1;Z_2|V)$, for all $(Z_1;Z_2|V) \in \Sigma$.  Thus,
  $y^T_\Sigma (A_\Sigma h)$ can be written as a positive linear
  combination of the constraints in $\Sigma$, and the inequality
  $c_\tau^T h \leq y^T_\Sigma (A_\Sigma h)$ is precisely an
  Approximate Implication.  If $r \defeq \texttt{rank}(A)$, then by
  our discussion above, we can find a feasible solution $y$ whose
  coordinates are $\leq r! \leq (2^n)!$, since
  $r \leq \min(M,N) = N = 2^n$.  In other words,
  $y \leq (2^n)! (1 \ 1 \ \cdots 1)^T$, where $(1 \ 1 \ \cdots 1)^T$
  is the all-1 vector. Thus, we have:
  \begin{align*}
    I_h(X;Y|W) =  c_\tau^T h \leq & y^T_\Sigma (A_\Sigma h) \leq (2^n)!(1\ 1\ \cdots 1)^T (A_\Sigma h) \leq (2^n)! \sum_{(Z_1;Z_2|V) \in \Sigma} I_h(Z_1;Z_2|V)
  \end{align*}
  This completes the proof of the theorem.
\end{proof}
}

Theorem~\ref{thm:RelaxationInGamman} gives us a very crude upper bound
on the relaxation factor for $\Gamma_n$. Next, we we show a lower bound the factor $\lambda$.  We prove a lower
bound of 3:

\def\thmNoUAI{ 
	The following inequality holds for all polymatroids
        $h\in \Gamma_n$:
        \begin{align}
          h(Z)\leq I_h(A;B|C)+I_h(A;B|D)+I_h(C;D|E)+I_h(A;E)+3h(Z|A)+2h(Z|B)\label{eq:NoUAI} 
        \end{align}
        but the inequality fails if any of the coefficients $3,2$ are
        replaced by smaller values.  In particular, denoting
        $\tau, \Sigma$ the terms on the two sides of
        Eq.(\ref{eq:NoUAI}), the exact implication
        $\Gamma_n \models_{\EI} \Sigma \implies \tau$ holds\footnote{This means that if $h(\Sigma)=0$ where $\Sigma=\set{(A;B|C),(A;B|D),\dots,h(Z|B)}$, then $h(\tau)=h(Z)=0$. That is, $Z$ is deterministic.}, and does
        not have a 1-relaxation.}
\begin{thmC}[\cite{DBLP:journals/corr/abs-0910-0284}] \label{thm:NoUAI}
	\thmNoUAI
\end{thmC}
\added{
\begin{proof}
	We make use the following inequality that was proved in Lemma 1 in~\cite{DBLP:journals/corr/abs-0910-0284}. 
	\begin{equation}\label{eq:noUAILemmaDFZ}
		h(Z|R)+I_h(R;S|T)\geq I_h(Z;S|T)
	\end{equation}
	We apply~\eqref{eq:noUAILemmaDFZ} three times:
	\begin{enumerate}
		\item $h(Z|A)+I_h(A;B|C)\geq I_h(Z;B|C)$
		\item $h(Z|A)+I_h(A;B|D)\geq I_h(Z;B|D)$
		\item $h(Z|A)+I_h(A;E)\geq I_h(Z;E)$ 	
	\end{enumerate}
	Plugging back into the formula we get that: 
	\begin{align}
		&I_h(A;B|C)+I_h(A;B|D)+I_h(C;D|E)+I_h(A;E)+3h(Z|A)+2h(Z|B) \nonumber \\
		&\geq  I_h(Z;B|C)+I_h(Z;B|D)+I_h(Z;E)+I_h(C;D|E)+2h(Z;B) \label{eq:NoUAIEqA}
	\end{align}
	We now apply this identity twice more:
	\begin{enumerate}
		\item $h(Z|B)+I_h(Z;B|C)\geq I_h(Z;Z|C)=h(Z|C)$ 
		\item $h(Z|B)+I_h(Z;B|D)\geq I_h(Z;Z|D)=h(Z|D)$
	\end{enumerate}
	Plugging back into~\eqref{eq:NoUAIEqA} we get that:
	\begin{align*}
		&I_h(Z;B|C)+I_h(Z;B|D)+I_h(Z;E)+I_h(C;D|E)+2h(Z|B) \\
		& \geq h(Z|C)+h(Z|D)+I_h(Z;E)+I_h(C;D|E) \\
		&=h(ZC){-}h(C){+}h(ZD){-}h(D)+h(Z){+}h(E){-}h(ZE){+}h(CE){+}h(DE){-}h(E){-}h(CDE)\\
		&=I_h(Z;E|C)+h(ZEC)+I_h(Z;E|D)+h(ZED)+h(Z)-h(ZE)-h(CDE)\\
		&=I_h(Z;E|C)+I_h(Z;E|D)+I_h(C;D|ZE)+h(CDZE)-h(CDE)+h(Z)\\
		&=I_h(Z;E|C)+I_h(Z;E|D)+I_h(C;D|ZE)+h(Z|CDE)+h(Z)\\
		&\geq h(Z) \qedhere
	\end{align*}
\end{proof}

	We remark that inequality~\eqref{eq:NoUAI} can be verified using known
        tools for testing whether an inequality holds for all polymatroids
        (e.g., ITIP\footnote{\url{https://user-www.ie.cuhk.edu.hk/~ITIP/}}, and
        XITIP\footnote{\url{http://xitip.epfl.ch/}}). 
	It is still open whether the coefficients $3$ and $2$ (for $h(Z|A)$ and $h(Z|B)$, respectively) are tight. To show that, we would need to present a polymatroid for which~\eqref{eq:NoUAI} holds with equality. Using XITIP, we were able to test that the coefficients $3$ and $2$ could not be reduced even by $0.0001$. While this does not rule out the possibility of having the inequality hold for some coefficient $3-\varepsilon$ for a small-enough $\epsilon$, it does allow us to conclude that the exact implication corresponding to~\eqref{eq:NoUAI}  does not have a 1-relaxation.
}

\section{Restricted Models: Positive I-Measure}
\label{sec:marketbasket}

\label{sec:pn}
While relaxation fails in its most general setting, we have
  seen that relaxation holds if we either restrict the type of
  constraints to FDs and MVDs, or if we restrict the implications to
  those that can be inferred from Shannon's inequalities.  In this
  section we consider a different restriction: we will restrict the
  types of models, or databases, over which the constraints are
  interpreted: more precisely we restrict the set of entropic
  functions to the step functions, $\stepfn$, or, equivalently, to
  their conic hull, which we denote by
  $\positiveConen \defeq \conehull{\stepfn}$.  In Section~\ref{sec:imeasure} we show that  these entropic functions are precisely those with a
  {\em positive I-measure}, a notion introduced by
  Yeung~\cite{DBLP:journals/tit/Yeung91,Yeung:2008:ITN:1457455}.  In
  this section, we prove that all EIs admit a 1-relaxation over
  entropic functions with positive I-measure (i.e. over
  $\positiveConen$):

\begin{thm}\label{thm:PositiveConeUnitRelaxationCI}
  Every implication $\Sigma \Rightarrow \tau$ admits a 1-relaxation
  over $\positiveConen$, where $\Sigma, \tau$ are  arbitrary CIs.  In
  other words, if $\positiveConen \models_{\EI} (\Sigma \implies
  \tau)$ then $\forall h \in \positiveConen$, $h(\tau) \leq
  h(\Sigma)$.
\end{thm}

The golden standard for the semantics of constraints is to
  interpret them over $\entropicFunctions_n$, hence the reader may
  wonder what we gain by restricting them to (the conic hull of) the
  step functions, or, equivalently, what we can learn by checking
  implications only on the uniform 2-tuple distributions.  We have two
  motivations.  First, checking an implication only on the uniform
  2-tuple distributions leads to a complete, but unsound procedure for
  checking implication over $\entropicFunctions_n$.  In other words,
  by testing an implication on all uniform 2-tuple distributions we
  can detect if an implication {\em fails}, thus, the procedure is
  complete.  Of course, the procedure is not sound, because an
  implication may hold on all step functions but fail in general.  For
  a simple example, the inequality $I_h(X;Y|Z) \leq I_h(X;Y)$ holds
  for all step functions, but fails on the ``parity function'' in
  Fig.~\ref{fig:examples} (b).  The second motivation is more
  interesting.  It turns out that restricting the models to uniform
  2-tuple distributions leads to a sound and complete procedure in
  some important special cases.  We saw one such case in
  Section~\ref{sec:PBoundedRelaxations}: when the constraints are
  restricted to FDs and MVDs, then in order to check an implication in
  $\entropicFunctions_n$, it suffices to check that it holds on all
  uniform 2-tuple distributions.  We will present in this section a
  second case: checking \e{differential constraints} in market basket
  analysis~\cite{Sayrafi:2005:DC:1065167.1065213}.

In Section~\ref{sec:imeasure} we use the I-measure theory to characterize the conic hull of step functions $\positiveConen$, and provide an alternative proof to the main results of this section: Theorems~\ref{thm:PositiveConeUnitRelaxationCI} and~\ref{thm:PositiveConeUnitRelaxation}. \eat{ that every exact implication that holds for all functions in $\positiveConen$ has a 1-relaxation.
}

\subsection{I-Measure Constraints}

Recall the following definition in information theory:

\begin{defi} \label{def:conditional:multivariate:mi} Fix a set
  $\Omega$ of $n$ variables, and let $h : 2^\Omega \rightarrow \R$ be
  any function.  Let $W, Y \subseteq \Omega$ be two sets of variables.
  Then {\em conditional multi-variate mutual information} is defined
  as:
  \begin{align}
    I_h(Y | W) \defeq & - \sum_{Z: W \subseteq Z  \subseteq W \cup Y}(-1)^{|Z-W|}h(Z)
\label{eq:cond:mvar:mi}
  \end{align}
\end{defi}

The quantity $I_h(Y|W)$ is usually written as
$I_h(y_1; \cdots ;y_m | W)$, where
$Y = \set{y_1, \ldots, y_m}$.  In the literature, $I_h(Y|W)$ is
defined only for entropic functions $h$, but we remove this
restriction here.  When $m=2$ then this is precisely the mutual
information $I_h(y_1;y_2|W) = -h(W)+h(y_1W)+h(y_2W)-h(y_1y_2W)$, and
when $m=1$ then it becomes the conditional
$I_h(y|W) = -h(W)+h(yW) = h(y|W)$.  Notice that
$I_h(Y|W)= I_h(Y-W | W)$.  When $Y\cup W = \Omega$, then we say that
$I_h(Y|W)$ is {\em saturated}.

% We define:

\begin{defi} \label{def:i:measure:constraint} Fix a set $\Omega$ of
  $n$ variables.  An {\em I-measure constraint statement} is a formula
  of the form $Y|W$, where $W, Y \subseteq \Omega$.  We call the
  constraint \e{saturated} if $Y \cup W = \Omega$.  We say that a
  function $h : 2^\Omega \rightarrow \R$ \e{satisfies} the constraint
  statement, if $I_h(Y|W) =0$.  An \e{implication} is a formula
  $\Sigma \implies (Y|W)$, where $\Sigma$ is a set of I-measure
  constraints and $Y|W$ is an I-measure constraint.  Fix a set $K$
  s.t.  $\stepfn \subseteq K \subseteq \Gamma_n$.  The \e{exact
    implication} $\Sigma \implies (Y|W)$ holds in $K$, in notation
  $K \models_{\EI} \Sigma \implies (Y|W)$, if $\forall h \in K$,
  $\bigwedge_{X|V \in \Sigma} I_h(X|V)=0$ implies $I_h(Y|W)=0$.  The
  \e{$\lambda$-approximate implication} holds in $K$, if
  $\forall h \in K$,  $I_h(Y|W) \leq \lambda \sum_{X|V \in \Sigma} I_h(X|V)$. 
  We say that the exact implication problem
  admits a \e{$\lambda$-relaxation} in $K$, for some $\lambda > 0$, if
  every exact implication $K \models_{\EI} \Sigma \implies (Y|W)$
  implies the $\lambda$-approximate implication $\forall h \in K$,
  $I_h(Y|W) \leq \lambda \sum_{(X|V) \in \Sigma}I_h(X|V)$.
\end{defi}

We prove our main result in this section:

\def\PositiveConeUnitRelaxation{ Exact implications of I-measure
  constraints admit a 1-relaxation in $\positiveConen$.  More
  precisely, if $\Sigma \models (Y|W)$ is an implication of I-measure
  constraints and
  $\positiveConen \models_{\EI} \Sigma \implies (Y|W)$, then
  $\forall h \in \positiveConen$,
  $I_h(Y|W) \leq \sum_{X|V \in \Sigma} I_h(X|V)$.  }
\begin{thm}\label{thm:PositiveConeUnitRelaxation}
	\PositiveConeUnitRelaxation
\end{thm}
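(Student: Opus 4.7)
The plan is to reduce the claim, by Theorem~\ref{th:step:positive}, to the cone $\Delta_n$ of positive polymatroids, and then argue entirely in terms of the $I$-measure. Every $h\in \positiveConen=\Delta_n$ has a non-negative signed measure $\imeasure$ on the atoms $\mathcal{A}$ of the field $\mathcal{F}_n$ (Theorem~\ref{thm:YeungUniqueness}), so the proof will mirror that of Theorem~\ref{thm:PositiveCone1Relaxation} but for the broader class of I-measure constraints.

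First I would attach to every I-measure constraint $\sigma = -\sum_i d_h(W_i)$ (with the $W_i\subsetneq \Omega$ understood as distinct, so each $W_i$ contributes one term) a set of atoms $\iset(\sigma)\eqdef \bigcup_i \set{a_{W_i}}\subseteq \mathcal{A}$, where $a_W$ is the unique atom whose I-measure equals $-d_h(W)$ (this identification is the same one used in the proof of Theorem~\ref{thm:PositiveCone1Relaxation} and recorded in the paragraph preceding the theorem). By set-additivity of $\imeasure$ one then has $h(\sigma)=\imeasure(\iset(\sigma))$, and for $\Sigma$ a set of I-measure constraints, $h(\Sigma)=\sum_{\sigma\in\Sigma}\imeasure(\iset(\sigma))$. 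Define $\iset(\Sigma)\eqdef \bigcup_{\sigma\in\Sigma}\iset(\sigma)$.

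Next I would prove the atomic-inclusion analogue of Lemma~\ref{thm:implicationInclusion}: if $\positiveConen\models_{EI}\Sigma\Rightarrow\tau$, then $\iset(\tau)\subseteq \iset(\Sigma)$. The argument is identical to that of Lemma~\ref{thm:implicationInclusion}: if some atom $b\in\iset(\tau)\setminus\iset(\Sigma)$ existed, then by Theorem~\ref{thm:YeungBuildMeasure} one may realize a positive polymatroid whose $I$-measure is $1$ on $b$ and $0$ elsewhere; this $h\in\positiveConen$ satisfies $h(\Sigma)=0$ while $h(\tau)\ge 1$, contradicting the EI. Having established $\iset(\tau)\subseteq\iset(\Sigma)$, each atom $a\in\iset(\tau)$ lies in at least one $\iset(\sigma)$, so non-negativity of $\imeasure$ on $\positiveConen$ yields
\[
h(\tau)=\sum_{a\in\iset(\tau)}\imeasure(a)\le \sum_{\sigma\in\Sigma}\sum_{a\in\iset(\sigma)}\imeasure(a)=\sum_{\sigma\in\Sigma}\imeasure(\iset(\sigma))=h(\Sigma),
\]
which is exactly the $1$-relaxation.

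The only real obstacle is bookkeeping around the multiplicity conventions: one has to be sure that the identification of an I-measure constraint with its set of atoms is legitimate (i.e., that the $W_i$'s are not being counted with multiplicities that would invalidate $h(\sigma)=\imeasure(\iset(\sigma))$). This is exactly what happens for the differential constraints of Sayrafi and Van Gucht and for the conditional mutual informations $I_h(Y;Z|X)$ and $h(Y|X)$ exhibited in the paragraph above the theorem, so the stated result applies to them directly. Everything else is a straightforward reuse of (i) the equality $\positiveConen=\Delta_n$, (ii) set-additivity of $\imeasure$, and (iii) the atom-counterexample argument of Lemma~\ref{thm:implicationInclusion}.
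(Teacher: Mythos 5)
Your proof is correct, but it is not the argument the paper attaches to Theorem~\ref{thm:PositiveConeUnitRelaxation}: the paper's displayed proof is deliberately I-measure-free (it is introduced as ``an alternative proof that does not rely on the I-measure''), whereas yours is the I-measure route that the paper only gestures at in the preceding paragraph via Theorems~\ref{th:step:positive} and~\ref{thm:PositiveCone1Relaxation}. Concretely, the paper argues directly with densities: for each $W_i$ occurring in $\tau$ there must exist some $\sigma\in\Sigma$ containing $W_i$ among its terms, for otherwise the step function $h_{W_i}$ (whose density is $-1$ at $W_i$ and $0$ elsewhere) satisfies $h(\Sigma)=0$ yet $h(\tau)=1$; non-negativity of all terms $-d_h(W)$ on $\positiveConen$ then yields $h(\tau)\le h(\Sigma)$. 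Your argument is the same one in I-measure clothing: the single-atom positive measure you obtain from Theorem~\ref{thm:YeungBuildMeasure} is, by Lemma~\ref{lemma:StepFunctionThm} and the uniqueness in Theorem~\ref{thm:YeungUniqueness}, exactly that step function, and $-d_h(W)$ is precisely $\imeasure(a_W)$ for the atom $a_W$ with negated set $W$. What your route additionally requires --- and you correctly supply --- is the extension of Lemma~\ref{thm:implicationInclusion} from mutual-information terms to arbitrary I-measure constraints, since the lemma as stated covers only CIs; in exchange, you obtain the theorem as a corollary of the same machinery that proves Theorem~\ref{thm:PositiveCone1Relaxation} rather than as a separate ad hoc argument. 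Both proofs rest on the same implicit convention that the $W_i$'s within a single constraint are distinct, which is exactly the bookkeeping point you flag, so no genuine gap remains.
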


Notice that Theorem~\ref{thm:PositiveConeUnitRelaxationCI} follows
immediately as a special case, since every CI is, in particular, an
I-measure constraint.  

To prove theorem~\ref{thm:PositiveConeUnitRelaxation}, we need to
prove two lemmas.  In both lemmas below we will use a simple property.
For any two sets $A, B$:
\begin{align}
  \sum_{C: A \subseteq C \subseteq B} (-1)^{|C-A|} = 
&
  \begin{cases}
    1 & \mbox{ when $A=B$} \\
    0 & \mbox{ otherwise}
  \end{cases}
\label{eq:sum:minus:one}
\end{align}
To prove the identity of~\eqref{eq:sum:minus:one}, it suffices to show that
$\sum_{C: A \subseteq C \subseteq B} (-1)^{|C-A|}=0$ when
$A \subsetneq B$: fix any element $b \in B - A$, and notice that the
sets $C$ that don't contain $b$ are in 1-1 correspondence with those
that do contain $b$, via the mapping $C \mapsto C \cup \set{b}$.
Since the two sets $C$ and $C \cup \set{b}$ have different parities,
their terms cancel out, $(-1)^{|C-A|}+(-1)^{|C\cup\set{b}-A|}=0$, and
the entire sum is zero.

Recall from Section~\ref{sec:notations} that $h_Z$
denotes the ``step function at $Z$'', where $Z \subsetneq \Omega$ (see
the definition in Eq.~\eqref{eq:def:step:function}).  By definition,
their conic hull $\positiveConen \defeq \conehull{\stepfn}$ consists
of all functions of the form $h = \sum_{Z \subsetneq \Omega} d_Z h_Z$,
where $d_Z \geq 0$, for $Z \subsetneq \Omega$ are arbitrary
coefficients.  These coefficients are precisely the saturated
conditional multi-variate mutual information.  This follows from a
more general lemma:

\begin{lem} \label{lemma:basis} Consider the vector space
  $\mathbf{V} = \setof{h}{h : 2^{\Omega} \rightarrow \R,
    h(\emptyset)=0}$.  Then, the $2^n-1$ step functions
  $\setof{h_Z}{Z \subsetneq \Omega}$ are basis for $\mathbf{V}$.
  Moreover, for every function $h \in \mathbf{V}$, denoting by
  $d_Z \defeq I_h(\Omega-Z | Z)$, for all $Z \subsetneq \Omega$ (where
  $I_h$ is defined in Eq.~\eqref{eq:cond:mvar:mi}), we have:
  $h = \sum_{Z \subsetneq \Omega} d_Z h_Z$.  In other words, the
  projections of $h$ on the basis consisting of the step functions are
  precisely the saturated conditional multi-variate mutual
  informations.  
\end{lem}

\begin{proof}
  We prove the following claim: for every $W \subseteq \Omega$, it holds that $$h(W) = \sum_{Z: Z \subsetneq \Omega} d_Z h_Z(W),$$
  where $d_Z \defeq I_h(\Omega-Z|Z)$.  The claim implies that the step
  functions span the entire vector space
  $\setof{h}{h: 2^\Omega \rightarrow \R, h(\emptyset)=0}$.  This
  immediately implies that they are linearly independent, hence are a
  basis for $\mathbf{V}$.  Indeed, the dimensionality of $\mathbf{V}$
  is $2^n-1$, and this coincides with the number of step functions:
  since they span the entire vector space, they must be linearly
  independent.  It remains to prove the claim. 

  To prove the claim, for any $W \subseteq \Omega$, define
  $f(W) \defeq h(\Omega)-h(W)$.  We want to express $f(W)$ as
  $f(W) = \sum_{Z: W \subseteq Z\subseteq \Omega} g(Z)$, for some function
  $g : 2^\Omega \rightarrow \R$.  Such a function $g$ is the unique
  M\"obius inverse of $f$.  Recall that the M\"obius inversion formula
  states that the two expressions below are equivalent:
  \begin{align}
    \forall W: && f(W) = & \sum_{Z: W \subseteq Z\subseteq \Omega} g(Z) & \forall W: && g(W) = & \sum_{Z: W \subseteq Z\subseteq \Omega} (-1)^{|Z-W|}f(Z) \label{eq:mobius}
  \end{align}
  Since $f(\emptyset) = h(\Omega)$, the first equality implies:
  \begin{align*}
    h(\Omega) = & f(\emptyset) = \sum_{Z: Z \subseteq \Omega} g(Z)
  \end{align*}
  Therefore, the left equation in ~\eqref{eq:mobius} can be rewritten
  as:
  \begin{align}
    h(W) = & h(\Omega)-f(W) = \sum_{Z:Z \subseteq \Omega} g(Z) - \sum_{Z: W \subseteq Z \subseteq \Omega} g(Z) = \sum_{Z: W \not\subseteq Z} g(Z)
\label{eq:here:is:h}
  \end{align}
  On the other hand, we can express $g(W)$ in terms of the conditional
  multi-variate information, by expanding $f(Z) = h(\Omega)-h(Z)$ in
  the right equation in~\eqref{eq:mobius}, then using Eq.~\eqref{eq:sum:minus:one}: 
  \begin{align*}
    g(W) = & \left(\sum_{Z: W \subseteq Z} (-1)^{|Z-W|}\right)h(\Omega)-\left(\sum_{Z: W \subseteq Z} (-1)^{|Z-W|}h(Z)\right) \\
     = & 
             \begin{cases}
               h(\Omega)-h(\Omega) = 0 & \mbox{ if $W = \Omega$} \\
               0 + I_h(\Omega-W|W) & \mbox{ if $W \subsetneq \Omega$}
             \end{cases}
  \end{align*}
  We compute now $h(W)$ from~\eqref{eq:here:is:h}:

  \begin{align*}
    h(W) = & \sum_{Z: W \not\subseteq Z} g(Z) =  \sum_{Z: W \not\subseteq Z} I_h(\Omega-Z|Z)
             =  \sum_{Z \subsetneq \Omega} I_h(\Omega-Z|Z) h_Z(W)
  \end{align*}
  where the last equality holds because $h_Z(W) = 1$ when
  $W \not\subseteq Z$ and $h_Z(W)=0$ otherwise.  This proves the
  claim, as required.
\end{proof}

The lemma implies that, if $h \in \positiveConen$, then
$I_h(\Omega-Z|Z) \geq 0$ for all $Z \subsetneq \Omega$.  Indeed, by
definition of the conic hull $\positiveConen = \conehull{\stepfn}$,
the function $h$ can be written as  $h = \sum_{Z: Z \subsetneq \Omega}
d_Z h_Z$, where $d_Z \geq 0$.  By the lemma, we have $d_Z =
I_h(\Omega-Z|Z)$, proving the claim. 

The second lemma is:

\begin{lem} \label{lemma:i:i} For any two sets $W, Y \subseteq \Omega$
  s.t. $Y \neq \emptyset$, the following identity holds:
  \begin{align*}
    I_h(Y|W) = & \sum_{V: W \subseteq V \subseteq \Omega -Y} I_h(\Omega - V|V)
  \end{align*}
\end{lem}

\begin{proof}
  We expand $I_h$ in the RHS according to its definition:
  \begin{align*}
\text{RHS} = 
& \sum_{V: W \subseteq V \subseteq \Omega -Y} I_h(\Omega - V|V) = 
 - \sum_{V: W \subseteq V \subseteq \Omega -Y} \left( \sum_{T: V \subseteq T \subseteq \Omega}(-1)^{|T-V|}h(T)\right)\\
= & - \sum_{T: W \subseteq T} h(T) 
\left(\sum_{V: (W \subseteq V \subseteq \Omega -Y) \wedge (V \subseteq T)} (-1)^{|T-V|}\right)\\
= & - \sum_{T: W \subseteq T} h(T) 
\left(\sum_{V: (W \subseteq V \subseteq (T \cap (\Omega -Y))} (-1)^{|T-V|}\right)
  \end{align*}
By Eq.~\eqref{eq:sum:minus:one}, the inner sum is $=1$ when $W = T
\cap (\Omega-Y)$ and $=0$ otherwise.  The condition  $W = T
\cap (\Omega-Y)$ is equivalent to $W \subseteq T \subseteq W \cup Y$
and thus we obtain:
\begin{align*}
  \text{RHS} = & - \sum_{T: W \subseteq T \subseteq W \cup Y}  (-1)^{|T-W|} h(T)
\end{align*}
The latter expression is equal to $I_h(Y|W)$ by Def.~\ref{def:conditional:multivariate:mi}.
\end{proof}

We will now prove Theorem~\ref{thm:PositiveConeUnitRelaxation}.
Consider an exact implication
$\positiveConen \models_{\EI} \Sigma \implies (Y|W)$.  By
Lemma~\ref{lemma:i:i},
$I_h(Y|W) = \sum_{V: W \subseteq V \subseteq \Omega -Y}
I_h(\Omega-V|V)$.  We claim that, for every set $V$ s.t.
$W \subseteq V \subseteq \Omega -Y$, there exists a constraint
$X|U \in \Sigma$ such that $U \subseteq V \subseteq \Omega-X$.  In
other words, if we expand $I_h(X|U)$ according to
Lemma~\ref{lemma:i:i}, then one of the terms will be
$I_h(\Omega-V|V)$.  To prove the claim, we consider the step function
at $V$, $h_V$, and use the fact that the exact implication must hold
for $h_V$.  We notice that $I_{h_V}(\Omega-V|V) = 1$ and
$I_{h_V}(\Omega-U|U) = 0$ for $U \neq V$: this follows by considering
the expansion of $h_V$ given by Lemma~\ref{lemma:basis},
$h_V =\sum_{U: U \subsetneq \Omega} d_U h_U$, where
$d_U = I_{h_V}(\Omega - U|U)$ and noting that, since $h_V$ is part of
the basis, we have $I_{h_V}(\Omega-V|V)=1$ and $I_{h_V}(\Omega-U|U)=0$
for $U\neq V$.  In particular, $I_{h_V}(Y|W) = 1$, in other words, the
constraint $Y|W$ does not hold in $h_V$.  Since the exact implication
holds, it must be the case that $\Sigma$ does not hold for $h_V$
either, hence there exists an I-constraint $X|U \in \Sigma$ such that
$I_{h_V}(X|U)>0$.  When expanding it according to Lemma~\ref{lemma:i:i},
$I_{h_V}(X|U) = \sum_{T: U \subseteq T \subseteq \Omega-X}
I_{h_V}(\Omega-T|T)$.  Since $I_{h_V}(\Omega-T|T)=0$ for all
$T \neq V$, we conclude that one of the terms must be
$I_{h_V}(\Omega-V|V)$, proving the claim.

We use claim to prove the theorem.  Let $h$ be any function in
$\positiveConen$.  To prove the inequality
$I_h(Y|W) \leq \sum_{X|U \in \Sigma} I_h(X|U)$, we expand both sides:
\begin{align*}
  I_h(Y|W) = & \sum_{V: W \subseteq V \subseteq \Omega - Y} I_h(\Omega-V|V) \\
  \sum_{(X|U) \in \Sigma} I_h(X|U) = & \sum_{T:  U \subseteq T \subseteq \Omega-X, (X|U) \in \Sigma} I_h(\Omega-T|T)
\end{align*}
We have shown that every term $I_h(\Omega-V|V)$ in the first sum
occurs at least once as a term $I_h(\Omega-T|T)$ in the second sum.
Since all terms are $\geq 0$, it follows that
$I_h(Y|W) \leq \sum_{(X|U) \in \Sigma} I_h(X|U)$, as required.

\subsection{\added{Differential Constraints in Market Basket Analysis}
}
We end this section by describing the tight connection between
  I-measure constraints and differential constraints for Market Basket
  Analysis, introduced by Sayrafi and Van
  Gucht~\cite{Sayrafi:2005:DC:1065167.1065213}.

Consider a set of items $\Omega=\set{X_1,\dots,X_n}$, and a set of
baskets $\B = \set{b_1, \ldots, b_N}$ where every basket is a subset
$b_i \subseteq \Omega$. The \e{support function}
$f^{\B}:2^\Omega \rightarrow \mathbb{N}$ assigns to every subset
$W\subseteq \Omega$ the
number of baskets in $\B$ that contain the set $W$:
\begin{align}
  f^{\B}(W) = & |\setof{i}{i\in [n], W \subseteq b_i}| \label{eq:support:function}
\end{align}
The function $f^{\B}$ is anti-monotone: $W_1 \subseteq W_2$ implies
$f^{\B}(W_1) \geq f(W_2)$.  Similarly define $d^{\B}(W)$ the number of
baskets in $\B$ that are equal to the set $W$:
\begin{align*}
  d^{\B}(W) = & |\setof{i}{i\in [n], W = b_i}|
\end{align*}
Then the following two identities are easily verified:
\begin{align}
  \forall W \subseteq \Omega: && f^{\B}(W) = & \sum_{Z: W \subseteq Z} d^{\B}(Z) &
   d^{\B}(W) \defeq & \sum_{Z: W \subseteq Z} (-1)^{|Z-W|}f^{\B}(Z) \label{eq:mobius:f:d}
\end{align}
The first identity follows immediately from the definitions of
$f^{\B}$ and $d^{\B}$.  The second is M\"obius' inversion function.
Building on the identity~\eqref{eq:mobius:f:d}, Sayrafi and Van Gucht
define the \e{density} of any function $f$ as follows:

\begin{defi}
  Let $f:2^\Omega \rightarrow \R$ be any function.  Its \e{density} is
  the unique function $d_f: 2^\Omega \rightarrow \R$ is defined as:
  \begin{align*}
    \forall W \subseteq \Omega: && d_f(W) \defeq & \sum_{Z: W \subseteq Z} (-1)^{|Z-W|}f(Z)
  \end{align*}
\end{defi}
When $f$ is the support function $f^{\B}$ associated to a set of
baskets $\B$, then its density $d_f$ is precisely the function
$d^{\B}$, because it satisfies the left equation
in~\eqref{eq:mobius:f:d}.

For any function $d : 2^\Omega \rightarrow \R$ and any two sets of
items $W, Y \subseteq \Omega$, we denote by
$d(Y|W) \defeq \sum_{V: W \subseteq V \subseteq W\cup Y} d_f(V)$.

A {\em differential constraint} is an expression $Y|W$.  A set of
baskets $\B$ \e{satisfies} the differential constraint\footnote{The
  notation used by Sayrafi and Van
  Gucht~\cite{Sayrafi:2005:DC:1065167.1065213} is $W \rightarrow Y$.
  They actually define slightly more general differential constraints,
  where $Y$ is allowed to be a set of sets of items instead of a set
  of items.  To simplify the presentation, in this paper we only
  consider $Y$ as a set of items.}, if $d^{\B}(Y|W)=0$.  Sayrafi and
Van Gucht~\cite{Sayrafi:2005:DC:1065167.1065213} define an
implication, to be an assertion $\Sigma \implies (Y|W)$, where
$\Sigma$ is a set of differential constraints.  In addition to that,
we also define here an approximate implication:

\begin{defi} \label{def:differential:constraints} Fix a set of items
  $\Omega$.  An \e{implication} is a formula $\Sigma \implies (Y|W)$
  where $\Sigma$ is a set of differential constraints, and $Y|W$ is
  one differential constraint.  The \e{exact implication}
  $\Sigma \implies (Y|W)$ holds, in notation
  $\models_{\EI} (\Sigma \implies (Y|W))$ if, for every set of baskets
  $\B$, $\bigwedge_{(X|V)\in \Sigma} d^{\B}(X|V)=0$ implies
  $d^{\B}(Y|W)=0$.  The \e{$\lambda$-approximate implication} holds,
  if, for every set of baskets $\B$,
  $d^{\B}(Y|W) \leq \lambda \sum_{(X|V) \in \Sigma} d^{\B}(X|V)$.  We
  write $\models_{\APXI} (\Sigma \implies (Y|W))$ if there exists a finite
  $\lambda \geq 0$ such that the $\lambda$-approximate implication
  holds.
\end{defi}

In order to check any implication, it suffices to check whether it
holds for \e{singleton} sets of baskets, $\B = \set{b}$.  This follows
from the linearity of the functions $f^{\B}$ and $d^{\B}$: if
$\B = \set{b_1, \ldots, b_N}$, then
$f^{\B}(W) = \sum_{i=1,N} f^{\set{b_i}}(W)$ and
$d^{\B}(W) = \sum_i d^{\set{b_i}}(W)$.  An Exact Implication
$\Sigma \implies (Y|W)$ holds, iff for every singleton sets of baskets
$\B = \set{b}$, $\bigwedge_{(X|V) \in \Sigma} d^{\set{b}}(X|V)=0$
implies $d^{\set{b}}(Y|W)=0$, and similarly for a
$\lambda$-approximate implication.

With this observation, we can now explain the connection to
information theory. For any $b \subseteq \Omega$, we
denote as usual $h_b$ the step function at $b$ when $b \neq \Omega$,
and denote $h_b = 0$ (the constant function $0$) when $b = \Omega$.
Then:

\begin{thm}
  Fix a set $\Omega$ of $n$ items, and let $\B = \set{b}$ be a set
  consisting of a single basket, $b \subseteq \Omega$.  The following
  hold:
  \begin{itemize}
  \item For all $W \subseteq \Omega$: $f^{\set{b}}(W) = 1 - h_b(W)$.
  \item For all $W \subseteq \Omega$:
    $d^{\set{b}}(W) = I_{h_b}(\Omega-W|W)$.
  \item Every exact implication $\models_{\EI} \Sigma \implies W$
    relaxes to a 1-approximate implication:
    $d^{\set{b}}(W)\leq \sum_{V \in \Sigma}d^{\set{b}}(V)$.
  \end{itemize}
\end{thm}

\begin{proof}
  The first item follows immediately from the definitions:
  $f^{\set{b}}(W)=1$ iff $W \subseteq b$, iff $h_b(W) = 0$.  The
  second item follows also from the definitions:
  \begin{align*}
    d^{\set{b}}(W) = & \sum_{Z: W \subseteq Z} f^{\set{b}}(W) =
\sum_{Z: W \subseteq Z} (1-h_b(W)) =
                       \begin{cases}
                         1 - h_b(\Omega)=0=I_{h_b}(0|\Omega) & \mbox{ if $W=\Omega$}\\
                         0 + I_{h_b}(\Omega-W|W) & \mbox{ otherwise}
                       \end{cases}
  \end{align*}
  (We used Eq.~\eqref{eq:sum:minus:one}.)  Finally, the last item
  follows immediately from the second item and from
  Theorem~\ref{thm:PositiveConeUnitRelaxation}.
\end{proof}

\begin{exa}
  Consider five items $\Omega = \set{A,B,C,D,E}$.  The differential
  constraint $C|AB$ asserts that ``every customer who bought items
  $A,B$ also bought item $C$''.  Indeed, if a set of baskets $\B$
  satisfies the constraint $C|AB$ then
  $d^{\B}(C|AB) = d^{\B}(AB)+d^{\B}(ABD)+d^{\B}(ABE)+d^{\B}(ABDE)=0$;
  this simply says that no basket exists that contains both $A$ and
  $B$ but does not contain $C$.  For a second example, consider the
  differential constraint $CD|AB$.  It asserts ``every customer who
  bought items $A,B$ also bought either $C$ or $D$, or both'', because
  $d^{\B}(CD|AB)= d^{\B}(AB)+d^{\B}(ABE)=0$ implies that no basket
  exists that contains $AB$ and none of $C$ or $D$.

  Consider now the following implication:
  $(CD | \emptyset) \implies (CD|AB)$.  Obviously, the exact
  implication holds, and the 1-approximate implication holds too,
  because
  $d^{\B}(CD|AB) = d^{\B}(AB)+d^{\B}(ABE) \leq d^{\B}(CD|\emptyset) =
  d^{\B}(\emptyset)+d^{\B}(A)+d^{\B}(B)+d^{\B}(E)+d^{\B}(AB)+d^{\B}(AE)+d^{\B}(BE)+d^{\B}(ABE)$.
\end{exa}

\def\propgpolymatroid{
	Let $f$ be the support function associated with a set of baskets $\B$ over a set of items $U=\set{U_1,\dots,U_n}$. Then the function $g:2^U \rightarrow \mathbb{N} \cup \set{0}: g(X)\eqdef |\B|-f(X)$ is a polymatroid.
}

\section{Additional Results and Proofs that rely on the I-Measure}
\label{sec:imeasure}
\eat{\subsubsection{The I-measure}\label{sec:imeasure}}
The I-measure, developed by Yeung~\cite{DBLP:journals/tit/Yeung91,Yeung:2008:ITN:1457455}, is a theory that establishes a one-to-one correspondence between Shannon's information measures (i.e., entropy, conditional entropy, mutual information, and conditional mutual information) and set theory. In Section~\ref{sec:ProofFromSec4}, we use the I-measure to prove Theorem~\ref{thm:DisjointSaturated}. In Section~\ref{sec:Pn1Relaxations} we provide an alternative proof to Theorems~\ref{thm:PositiveConeUnitRelaxationCI} and~\ref{thm:PositiveConeUnitRelaxation}.
We begin by briefly describing the main ideas and theorems of the I-measure theory 
that we use in the remainder of this section; for further details and examples, see chapter 3 of~\cite{Yeung:2008:ITN:1457455}.

Let $h\in \entropicPlhdrl_n$ denote a polymatroid defined over the variables $\set{X_1,\dots,X_n}$. As mentioned, the I-Measure theory formulates a one-to-one correspondence between Shannon's information measures and set theory. Hence, every variable $X_i$ of the polymatroid, is associated with a set $\iset(X_i)$.
The universal set is $\Lambda \eqdef \bigcup_{i=1}^n\iset(X_i)$, the union of all sets associated with the variables. The complement of the set $\iset(X_i)$ is $\isetc(X_i)\eqdef \Lambda\setminus \iset(X_i)$. 
Let $\I \subseteq [n]$. We recall that $X_\I$ is the joint random variable
$X_\I\eqdef (X_i: i \in \I)$ (see Notation~\ref{notation:Xalpha}), and we denote $\iset(X_\I)\eqdef \bigcup_{i\in \I}\iset(X_i)$.
\begin{defi}\label{def:field}
	The field $\mathcal{F}_n$ generated by sets $\iset(X_1),\dots,\iset(X_n)$ is the collection of sets which can be obtained by any sequence of usual set operations (union, intersection, complement, and difference) on $\iset(X_1),\dots,\iset(X_n)$.
\end{defi}

The \e{atoms} of $\mathcal{F}_n$ are sets of the form $\bigcap_{i=1}^nY_i$, where $Y_i$ is either $\iset(X_i)$ or $\isetc(X_i)$. We denote by $\mathcal{A}$ the atoms of $\mathcal{F}_n$.
Note that by our choice of the universal set, $\Lambda \eqdef \bigcup_{i=1}^n\iset(X_i)$, the atom $\bigcap_{i=1}^n\isetc(X_i)$ degenerates to the empty set. This is because:
\begin{equation}
\bigcap_{i=1}^n\isetc(X_i)=\left(\iset(X_1)\cup \cdots \cup \iset(X_n)\right)^c=\Lambda^c=\emptyset
\end{equation}
\eat{We consider only atoms in which at least one set appears in positive form (i.e., the atom $\bigcap_{i=1}^n\isetc(X_i)\eqdef \emptyset$ is defined to be empty).} Hence, there are $2^n-1$ non-empty atoms in $\mathcal{F}_n$. \eat{and $2^{2^n-1}$
sets in $\mathcal{F}_n$ expressed as the union of its atoms.}

We call a function $\measure:\mathcal{F}_n \rightarrow \real$ \e{set additive} if, for every pair of disjoint sets $A$ and $B$, it holds that $\measure(A\cup B)=\measure(A)\vplus\measure(B)$.
A real function $\measure$ defined on $\mathcal{F}_n$ is called
a \e{signed measure} if it is set additive, and $\measure(\emptyset)=0$.

The $I$-measure $\imeasure$ on $\mathcal{F}_n$ is defined by
\begin{equation}
	\label{eq:IMeasureDef}
\imeasure(\iset(X_\I))=H(X_\I)
\end{equation}
where $\I \subseteq \set{1,\dots,n}$, and $H$ is the entropy (see~\eqref{eq:entropy}).
\begin{table}[]
	\centering
	\begin{tabular}{cc}	
		\toprule
		Information Measures & $\imeasure$ \\
                \midrule
		$H(X)$	& $\imeasure(\iset(X))$ \\
		$H(XY)$	& $\imeasure\left(\iset(X)\cup\iset(Y)\right)$ \\
		$H(X|Y)$	& $\imeasure\left(\iset(X)\cap \isetc(Y)\right)$
                \\
		$I_H(X;Y)$ & $\imeasure\left(\iset(X)\cap \iset(Y)\right)$  \\
		$I_H(X;Y|Z)$	&  $\imeasure\left(\iset(X)\cap \iset(Y) \cap
                \isetc(Z)\right)$ \\ \bottomrule
	\end{tabular}
	\vspace{0.2cm}
	\caption{Information measures and associated I-measure
        }
	\label{tab:ImeasureSummary}
\end{table}
Theorem~\ref{thm:YeungUniqueness}~\cite{DBLP:journals/tit/Yeung91,Yeung:2008:ITN:1457455} establishes the one-to-one correspondence between Shannon's information measures and $\imeasure$.
\begin{thmC}[\cite{DBLP:journals/tit/Yeung91,Yeung:2008:ITN:1457455}] \label{thm:YeungUniqueness}
	$\imeasure$ is the unique signed measure on $\mathcal{F}_n$ which is consistent with all Shannon's information measures (i.e., entropies, conditional entropies, mutual information, and conditional mutual information). 	
\end{thmC}

Let $X,Y,Z \subseteq \Omega$ denote three subsets of variables; $Z$ may be empty. We refer to objects of the form $(X)$ and $(X|Z)$ as entropy and conditional entropy terms, respectively. Likewise, we refer to triples of the form $(X;Y|Z)$ as conditional mutual information terms. Collectively, we refer to these as information-theoretic terms. Let $\sigma=(X;Y|Z)$ be a mutual information term.
We denote by $\iset(\sigma)=\iset(X)\cap\iset(Y)\cap\isetc(Z)$ the (I-Measure) set associated with $\sigma$ (see~\cite{Yeung:2008:ITN:1457455} for intuition and examples). Table~\ref{tab:ImeasureSummary} summarizes the extension of the I-measure $\imeasure$ to the rest of the Shannon measures (i.e., conditional entropy, mutual information, and conditional mutual information). For a set of mutual information and entropy terms
$\Sigma$, we let:
\[
\iset(\Sigma)\eqdef\bigcup_{\sigma \in \Sigma}\iset(\sigma).
\]

\begin{thmC}[\cite{Yeung:2008:ITN:1457455}]\label{thm:YeungBuildMeasure}
Let $\Omega=\set{X_1,\dots,X_n}$, and let $\mathcal{F}_n$ be the field generated by sets $\iset(X_1),\dots, \iset(X_n)$. Let $\mu_+: \mathcal{F}_n \rightarrow \real_+$ be any set-additive function that assigns non-negative values to the atoms of $\mathcal{F}_n$. Then the function $H: 2^\Omega \rightarrow \real_+$, defined as $H(X_\alpha)=\mu_+(\iset(X_\alpha))$ is a polymatroid (i.e., meets the polymatroid inequalities~\eqref{eq:monotonicity} and~\eqref{eq:submodularity}).
\end{thmC}	

\def\implicationInclusion{
	Let $\Sigma$ denote a set of CIs. If ~$\entropicPlhdrl_n \models_{\EI} \Sigma \Rightarrow \impliedCI$ then $\iset(\impliedCI) \subseteq \iset(\Sigma)$.
}
In the following Lemma, we apply Theorem~\ref{thm:YeungBuildMeasure} to characterize exact implication in the polymatroid cone $\entropicPlhdrl_n$.
\begin{lem}\label{thm:implicationInclusion}
	\implicationInclusion
\end{lem}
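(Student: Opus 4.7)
I will argue by contrapositive: assume $\iset(\impliedCI) \not\subseteq \iset(\Sigma)$ and construct a polymatroid $h \in \entropicPlhdrl_n$ that satisfies every $\sigma \in \Sigma$ with value $0$ but gives $\impliedCI > 0$, contradicting $\entropicPlhdrl_n \models_{EI} \Sigma \Rightarrow \impliedCI$.

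\textbf{Step 1: Locate a distinguishing atom.} Both $\iset(\impliedCI)$ and $\iset(\Sigma)$ are elements of the field $\mathcal{F}_n$, so each decomposes uniquely as a disjoint union of atoms of $\mathcal{F}_n$. If $\iset(\impliedCI) \not\subseteq \iset(\Sigma)$, then there exists a nonempty atom $A \in \mathcal{A}$ with $A \subseteq \iset(\impliedCI)$ and $A \not\subseteq \iset(\Sigma)$. Because distinct atoms are disjoint, the latter is equivalent to $A \cap \iset(\Sigma) = \emptyset$, and hence $A \cap \iset(\sigma) = \emptyset$ for every $\sigma \in \Sigma$.

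\textbf{Step 2: Build a polymatroid supported on $A$.} Using Theorem~\ref{thm:YeungBuildMeasure}, define a signed measure $\imeasure$ on $\mathcal{F}_n$ by assigning $\imeasure(A) = 1$ and $\imeasure(A') = 0$ for every other nonempty atom $A' \in \mathcal{A} \setminus \set{A}$; this is legal since any nonnegative assignment to atoms is realizable. Define $h:\pow{[n]} \to \real$ by $h(X_\I) \eqdef \imeasure(\iset(X_\I))$ for each $\I \subseteq [n]$, with $h(\emptyset)=0$. Since $\imeasure$ is a nonnegative additive set function, $h$ is monotone, and for any $\I,\J \subseteq [n]$ one has the modular identity $\imeasure(\iset(X_\I) \cup \iset(X_\J)) + \imeasure(\iset(X_\I) \cap \iset(X_\J)) = \imeasure(\iset(X_\I)) + \imeasure(\iset(X_\J))$, which in particular gives submodularity. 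Hence $h \in \entropicPlhdrl_n$.

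\textbf{Step 3: Evaluate $\Sigma$ and $\impliedCI$ on $h$.} By Theorem~\ref{thm:YeungUniqueness}, every mutual information / conditional entropy term $\sigma$ satisfies $\sigma(h) = \imeasure(\iset(\sigma))$. For each $\sigma \in \Sigma$, the set $\iset(\sigma)$ is a union of atoms, none of which is $A$ (by Step 1), so $\sigma(h) = \sum_{A' \subseteq \iset(\sigma)} \imeasure(A') = 0$. On the other hand, since $A \subseteq \iset(\impliedCI)$, we get $\impliedCI(h) = \imeasure(\iset(\impliedCI)) \geq \imeasure(A) = 1 > 0$.

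\textbf{Conclusion and main obstacle.} Thus $h \in \entropicPlhdrl_n$ witnesses $\Sigma(h) = 0$ while $\impliedCI(h) > 0$, contradicting $\entropicPlhdrl_n \models_{EI} \Sigma \Rightarrow \impliedCI$. The only delicate point is verifying that the constructed $h$ is indeed a polymatroid; this reduces to the fact that a nonnegative measure on $\mathcal{F}_n$ induces, via $h(X_\I) = \imeasure(\iset(X_\I))$, a monotone modular (and therefore submodular) function, which is precisely what Theorems~\ref{thm:YeungUniqueness} and~\ref{thm:YeungBuildMeasure} give us for free.
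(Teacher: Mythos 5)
Your proof is correct and follows essentially the same route as the paper's: assume $\iset(\impliedCI) \not\subseteq \iset(\Sigma)$, use Theorem~\ref{thm:YeungBuildMeasure} to place unit mass on a single atom of $\iset(\impliedCI)\setminus\iset(\Sigma)$ and zero elsewhere, and observe that the resulting polymatroid annihilates $\Sigma$ but not $\impliedCI$. Your version is in fact slightly more careful than the paper's (you identify the witnessing atom explicitly and verify that the induced set function is a polymatroid), but the argument is the same.
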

\begin{proof}
	Assume, by contradiction, that $\iset(\impliedCI) \not\subseteq \iset(\Sigma)$, and let $b\in \iset(\impliedCI){\setminus}\iset(\Sigma)$. By Theorem~\ref{thm:YeungBuildMeasure} the $I$-measure $\imeasure$ can take the following non-negative values: 
	\begin{equation*}
		\imeasure(a)=
		\begin{cases}
			1 & \text{if } a=b \\
			0 & \text{otherwise}
		\end{cases}
	\end{equation*}
	It is evident that $\imeasure(\Sigma)=0$ while $\imeasure(\impliedCI)= 1$, which contradicts the implication. \eat{ $\entropicPlhdrl_n \models_{EI} \Sigma \Rightarrow \impliedCI$.} 
\end{proof}

\subsection{Proof from Section~\ref{sec:PBoundedRelaxations}}
\label{sec:ProofFromSec4}
In this section we formally prove Theorem~\ref{thm:DisjointSaturated} from Section~\ref{sec:PBoundedRelaxations} that relies on the I-measure. For brevity, we denote the intersection of the sets corresponding to variables $A$ and $B$ (e.g., $\iset(A)\cap \iset(B)$) by $\iset(A)\iset(B)$.
\label{sec:IMeasureProofs}
\begin{reptheorem}{\ref{thm:DisjointSaturated}}
\DisjointSaturated
\end{reptheorem}
	\begin{proof}
	From Corollary~\ref{corr:PniffGamman}
	we have that $\entropicPlhdrl_n \models_{\EI} \Sigma \implies \tau$.
	Let $\impliedCI \eqdef (A;B|C)$ be the saturated implied CI, and let $\sigma=(X;Y|Z) \in \Sigma$ be a saturated CI. 
	Recalling that $\iset(\tau)=\isetc(C)\iset(A)\iset(B)$ (see Table~\ref{tab:ImeasureSummary})\footnote{And recall that $\iset(A)\cap\iset(B)$ is denoted $\iset(A)\iset(B)$} 
	we get that:
	\begin{align}
		\isetc(\impliedCI)&=\left(\isetc(C)\iset(A)\iset(B)\right)^c \nonumber \\
		&=(\isetc(C))^c \cup \isetc(A) \cup \isetc(B) \nonumber \\
		&=\iset(C) \cup \isetc(C)\isetc(A)\cup \isetc(C)\iset(A)\isetc(B) \label{eq:satdisjProof1}
	\end{align}
	Furthermore, since $\sigma=(X;Y|Z)$ is saturated, we get that $C=C_XC_YC_Z$ where $C_X=C\cap X$, $C_Y=C \cap Y$, and $C_Z=C\cap Z$. From this, we get that:
	\begin{equation}\label{eq:satdisjProof2}
		\iset(C)=\iset(C_X)\cup \isetc(C_X)\iset(C_Y) \cup \isetc(C_X)\isetc(C_Y)\iset(C_Z)
	\end{equation}
	%Now, consider any $\sigma = I(X;Y|Z)$ in $\Sigma$.
	Then, from~\eqref{eq:satdisjProof1}, and the set-additivity of $\imeasure$, we get that:
	\begin{align}
		&\imeasure\left(\iset(\sigma) \cap \isetc(\impliedCI)\right) \nonumber\\
		&=\imeasure(\iset(\sigma)\cap \iset(C))+\imeasure(\iset(\sigma)\cap \isetc(C)\isetc(A))+\imeasure(\iset(\sigma)\cap \isetc(C)\iset(A)\isetc(B)) \label{eq:satdisjProof3}
	\end{align}
	We consider each one of the terms in~\eqref{eq:satdisjProof3} separately. For the first term, by~\eqref{eq:satdisjProof2}, we have that:
	\begin{align}
		&\imeasure(\iset(\sigma)\cap \iset(C)) \nonumber\\
		&=\imeasure(\iset(\sigma) \iset(C_X))+\imeasure(\iset(\sigma) \isetc(C_X)\iset(C_Y))+\imeasure(\iset(\sigma) \isetc(C_X)\isetc(C_Y)\iset(C_Z)) \nonumber \\
		&=\imeasure(\iset(C_X)\iset(Y)\isetc(Z))+\imeasure(\iset(X{\setminus} C_X)\iset(C_Y)\isetc(ZC_X)) \nonumber \\
		&\qquad+\imeasure(\iset(X{\setminus} C_X)\iset(Y{\setminus} C_Y)\iset(C_Z)\isetc(ZC_XC_Y)) \label{eq:satdisjProofRev1}\\
		&=\imeasure(\iset(C_X)\iset(Y)\isetc(Z))+\imeasure(\iset(X{\setminus} C_X)\iset(C_Y)\isetc(ZC_X)) \label{eq:satdisjProof4}\\
		&=I_h(C_X;Y|Z)+I_h(X{\setminus} C_X;C_Y|ZC_X) \label{eq:satdisjProof5}
	\end{align}
	where transition~\eqref{eq:satdisjProofRev1} follows from substituting $\iset(\sigma)=\iset(X)\iset(Y)\isetc(Z)$, and
	transition~\eqref{eq:satdisjProof4}
	is because $C_Z \subseteq Z$ and thus $\iset(C_Z)\isetc(ZC_XC_Y)=\emptyset$, and $\imeasure(\emptyset)=0$.
	We now consider the second term of~\eqref{eq:satdisjProof3}.
	\begin{align}
		\imeasure(\iset(\sigma)\cap \isetc(C)\isetc(A)) 
		&=\imeasure(\iset(X)\iset(Y)\isetc(Z)\isetc(C)\isetc(A)) \nonumber \\
		&=\imeasure(\iset(X)\iset(Y)\isetc(ZCA)) \nonumber\\
		&=I_h(X;Y|ZCA)\label{eq:satdisjProof6}
	\end{align}
	Finally, we consider the third term of~\eqref{eq:satdisjProof3}. Since $\sigma=(X;Y|Z)$ is saturated, we get that $A=A_XA_YA_Z$ where $A_X=A\cap X$, $A_Y=A\cap Y$, and $A_Z=A\cap Z$. Therefore:
	\begin{equation}\label{eq:MVDDisjointProof}
		\iset(A)=\iset(A_X)\cup \isetc(A_X)\iset(A_Y)\cup \isetc(A_X)\isetc(A_Y)\iset(A_Z)
	\end{equation}
	From~\eqref{eq:MVDDisjointProof}, and the set-additivity of $\imeasure$, we get that:
	\begin{align}
		&\imeasure(\iset(\sigma)\cap \isetc(C)\iset(A)\isetc(B)) \nonumber\\ 
		&=\imeasure(\iset(X)\iset(Y)\isetc(Z)\isetc(C)\iset(A)\isetc(B)) \nonumber \nonumber\\
		&=\imeasure(\iset(X)\iset(Y)\iset(A)\isetc(ZCB)) \nonumber \\
		&=\imeasure(\iset(A_X)\iset(Y)\isetc(ZCB))+
		\imeasure(\iset(X{\setminus} A_X)\iset(A_Y)\isetc(ZCBA_X)) \nonumber \\
		&\qquad +\imeasure(\iset(X{\setminus} A_X)\iset(Y{\setminus}A_Y)\iset(A_Z)\isetc(ZCBA_XA_Y)) \label{eq:satdisjProofRev2}\\
		&=\imeasure(\iset(A_X)\iset(Y)\isetc(ZCB))+\imeasure(\iset(X{\setminus} A_X)\iset(A_Y)\isetc(ZCBA_X)) \label{eq:satdisjProof7}\\
		&=I_h(A_X;Y|ZCB)+I_h(X{\setminus} A_X;A_Y|ZCBA_X) \label{eq:satdisjProof8}
	\end{align}
	where transition~\eqref{eq:satdisjProofRev2} follows from substituting the formula for $\iset(A)$ (see~\eqref{eq:MVDDisjointProof}), and
	where transition~\eqref{eq:satdisjProof7} is because $A_Z \subseteq Z$, and thus $\iset(A_z)\cap \isetc(ZCBA_X)=\emptyset$, and $\imeasure(\emptyset)=0$.
	By~\eqref{eq:satdisjProof5}, \eqref{eq:satdisjProof6}, and~\eqref{eq:satdisjProof8} we get that:
	\begin{align}
		\imeasure(\iset(\sigma)\cap \isetc(\tau))&=I_h(C_X;Y|Z)+I_h(X{\setminus}C_X;C_Y|ZC_X) \nonumber \\
		&+I_h(X;Y|ZCA)+I_h(A_X;Y|ZCB)+I_h(X{\setminus}A_X;A_Y|ZCBA_X) \label{eq:satdisjProof9}
	\end{align}
	Therefore, by~\eqref{eq:satdisjProof9}, we get that: $\imeasure\left(\iset(\sigma) \cap \isetc(\impliedCI)\right) \geq 0$
	for every $\sigma \in \Sigma$.
	By Lemma~\ref{thm:implicationInclusion} we have that $\iset(\impliedCI) \subseteq \iset(\Sigma)=\bigcup_{\sigma\in \Sigma}\iset(\sigma)$. And since $\Sigma$ is pairwise disjoint
	then:
	\[
	\imeasure(\iset(\impliedCI))=\imeasure(\bigcup_{\sigma \in \Sigma}\iset(\sigma)\cap \iset(\tau))\underbrace{=}_{\Sigma \text{ is pairwise disjoint}}=\sum_{\sigma \in \Sigma}\imeasure(\iset(\impliedCI)\cap \iset(\sigma))
	\]	
	The result then follows from noting that due to the set-additivity of $\imeasure$ we have:
	$$h(\Sigma)=\sum_{\sigma \in \Sigma}\imeasure(\iset(\sigma))=\underbrace{\sum_{\sigma\in \Sigma}\imeasure(\iset(\sigma)\cap \iset(\tau))}_{\substack{=\imeasure(m(\tau))=h(\tau)\\ \text{ because }\Sigma \text{ is disjoint}}}+\underbrace{\sum_{\sigma\in \Sigma}\imeasure(\iset(\sigma)\cap \isetc(\tau))}_{\geq 0 \text{ (by~\eqref{eq:satdisjProof9})}}.\qedhere$$
\end{proof}

\added{
	\subsection{Exact implications in the cone $\positiveConen$ admit a 1-relaxation} \label{sec:Pn1Relaxations}
	We recall the signed measure $\imeasure$ (Theorem~\ref{thm:YeungUniqueness}) defined over the field $\mathcal{F}_n$ (Definition~\ref{def:field}) whose atoms we denote by $\mathcal{A}$. 
	In particular, $\imeasure$ can be negative.
	
		\begin{exa}
			\label{ex:NegativeIMeasure}
		Consider the parity function on three binary random variables. That is, $Z=X \oplus Y$ where $X,Y$ are uniformly distributed binary variables. We first observe that the variables are pairwise independent. Clearly, $X$ and $Y$ are pairwise independent. It is also easy to see that $P(Z=z)=\frac{1}{2}$ for $z\in \set{0,1}$. Observe that $X$ and $Z$ are also independent:		
		\begin{align*}			
			 P(X=x,Z=z)&=P(x,z,Y=z\oplus x)+P(x,z,Y\neq z\oplus x)\\
			 &=P(z|x,Y=z\oplus x)P(x,Y=z\oplus x)+P(z|x,Y\neq z\oplus x)P(x,Y\neq z\oplus x)\\
			 &=P(z|x,Y=z\oplus x)P(x,Y=z\oplus x)\\
			 &=1 \cdot P(x,Y=z\oplus x) \\
			 &=\frac{1}{4} \\
			 &=P(x)\cdot P(z)		 
		 \end{align*}	 
		 Further, every variable is functionally determined by the other two (e.g., $X=0,Z=0 \implies Y=0$). Let $h_P$ denote the entropic function associated with $P$. Since $X$ and $Y$ are independent, then $I_{h_P}(X,Y)=0$. On the other hand, given $Z=z$ variables $X,Y$ are not independent.
		 We can see this formally:
		 \begin{align}
		 	I_{h_P}(X;Y|Z)&=h_P(XZ)+h_P(YZ)-h_P(Z)-h_P(XYZ) \nonumber \\
		 	&=h_P(X)+h_P(Y)+h_P(Z)-(h_P(Z|XY)+h(XY)) \nonumber \\
		 	&=h_P(X)+h_P(Y)+h_P(Z)-h_P(XY) \nonumber \\
		 	&=h_P(Z) \nonumber \\
		 	&=2\cdot \frac{1}{2}\log 2= \log 2 > 0 \label{eq:I(X;Y|Z)>0}
		 \end{align}
		We now see how this is related to the i-measure $\imeasure$. By definition of $\imeasure$ we have that (see Table~\ref{tab:ImeasureSummary}):
		\begin{align*}
			0=I_{h_P}(X,Y)&=\imeasure(\iset(X)\cap \iset(Y)) \\
			&=\imeasure(\iset(X)\cap \iset(Y)\cap \iset(Z))+\imeasure(\iset(X)\cap \iset(Y)\cap \isetc(Z))		
		\end{align*}
	    Therefore,
	    \begin{align*}
			\imeasure(\iset(X)\cap \iset(Y)\cap \iset(Z))&=-\imeasure(\iset(X)\cap \iset(Y)\cap \isetc(Z))& \\
			&=-I_{h_P}(X;Y|Z)& \text{Table~\ref{tab:ImeasureSummary} } \\
			&= -\log 2& \eqref{eq:I(X;Y|Z)>0} \\
			&<0
		\end{align*}		
	\end{exa}

	If $\imeasure(a) \geq 0$ for all atoms $a \in \mathcal{A}$ then it is called a \e{positive measure}. A polymatroid is said to be \e{positive} if its I-measure is positive, and we denote by $\Delta_n$ the cone of positive polymatroids. 
	
	Theorem~\ref{thm:YeungBuildMeasure} states
	that any I-measure assigning non-negative values to its atoms is a polymatroid.
	As a corollary, we get, in Lemma~\ref{thm:implicationInclusion}, a set-theoretic characterization of exact implication in the polymatroid cone.
	The result of Lemma~\ref{thm:implicationInclusion}, combined with the assumption of a non-negative i-measure, gives us a 1-relaxation
	in the cone of positive polymatroids $\Delta_n$ (Theorem~\ref{thm:PositiveCone1Relaxation}). 	
}
	\begin{thm}
		\label{thm:PositiveCone1Relaxation}
		Exact implication in the cone of positive polymatroids admits a 1-relaxation.
		\[
		\text{If } \Delta_n \models_{\EI} \Sigma \Rightarrow \tau \text{ then } \Delta_n \models h(\Sigma) \geq h(\tau).
		\]	
	\end{thm}
	\begin{proof}
		By Lemma~\ref{thm:implicationInclusion} we have that $\iset(\impliedCI) \subseteq \iset(\Sigma)$. 
		Hence, $\iset(\impliedCI)= \cup_{\sigma \in \Sigma}\left(\iset(\impliedCI)\cap \iset(\sigma)\right)$. By the assumption that the I-measure $\imeasure$ is non-negative, we get that:
		\begin{equation}\label{eq:imeasureOfCInPositiveCone}
			\imeasure(\iset(\impliedCI))= \imeasure\left( \cup_{\sigma \in \Sigma}\left(\iset(\impliedCI)\cap \iset(\sigma)\right)\right) \leq \sum_{\sigma \in \Sigma}\imeasure((\iset(\impliedCI)\cap \iset(\sigma))
		\end{equation}
		By the set-additivity of $\imeasure$ we get that:
		
		\begin{equation}\label{eq:inclusionInSigma}
			h(\Sigma)\eqdef\sum_{\sigma\in \Sigma}\imeasure(\iset(\sigma))=\sum_{\sigma\in \Sigma}\imeasure(\iset(\sigma)\cap \iset(\impliedCI))+\sum_{\sigma\in \Sigma}\imeasure(\iset(\sigma)\cap \isetc(\impliedCI))
		\end{equation}
		From~\eqref{eq:inclusionInSigma},~\eqref{eq:imeasureOfCInPositiveCone}, and non-negativity of $\imeasure$ we get that:
		\begin{align*}
			\imeasure(\iset(\impliedCI))&\leq \sum_{\sigma \in \Sigma}\imeasure(\iset(\sigma))-\sum_{\sigma\in \Sigma}\imeasure(\iset(\sigma)\cap \isetc(\impliedCI))\\
			&\leq \sum_{\sigma\in \Sigma}\imeasure(\iset(\sigma)).
		\end{align*}
		Since, by Theorem~\ref{thm:YeungUniqueness}, $\imeasure$ is the unique signed measure consistent with all Shannon information measures, the result follows.
	\end{proof}

	\paragraph{\textbf{$\positiveConen$ coincides with the cone of step functions}}\label{sec:IMeasureStepFunctions}
	\eat{
		\dan{We should condense this section to a single theorem, namely
			Th.~\ref{th:step:positive}, which I would restate by saying this:
			``$\Delta_n$ consists precisely of the entropic functions with a
			non-negatie I-measure''.  We don't need an entire section for this.
			The proof can go to the appendix.}
		
		Then, in Theorem~\ref{th:step:positive} we prove that the conic hull of step functions $\positiveConen=\conhull(\stepfn)$ coincides with the cone of positive polymatroids $\Delta_n$ (i.e., $\positiveConen=\Delta_n$).
		In particular, since $\positiveConen \subseteq \Delta_n$, we get a 1-relaxation in $\positiveConen$.
	}
	
	We describe the structure of the I-measure of a step function, and prove that the conic hull of step functions $\positiveConen=\conehull{\stepfn}$ and positive polymatroids $\Delta_n\subseteq \Gamma_n$ coincide. That is, $\Delta_n=\positiveConen$.
	We let $U \subseteq [n]$, and let $s_U$ denote the step function at $U$.
	In the rest of this section we prove Theorem~\ref{th:step:positive}.
	
	\def\PositiveAndStepConesEqual{
		It holds that $\Delta_n=\positiveConen$.
	}
	\begin{thm} \label{th:step:positive}
		\PositiveAndStepConesEqual
	\end{thm}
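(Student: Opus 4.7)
The plan is to establish the two inclusions $\positiveConen \subseteq \Delta_n$ and $\Delta_n \subseteq \positiveConen$ by exhibiting an explicit bijection between the $2^n-1$ step functions $\{h_U : U \subsetneq \Omega\}$ and the $2^n-1$ non-empty atoms of the field $\mathcal{F}_n$. Concretely, to each step function $h_U$ I will associate the atom $a_{\bar U} \eqdef \bigcap_{i \in \bar U} \iset(X_i) \cap \bigcap_{i \in U} \isetc(X_i)$, where $\bar U \eqdef \Omega \setminus U$, and show that the I-measure of $h_U$ places weight $1$ on $a_{\bar U}$ and weight $0$ on every other atom.

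\textbf{Step 1 (the I-measure of a step function).} Fix $U \subsetneq \Omega$. The candidate measure $\measure$ defined by $\measure(a_{\bar U}) = 1$ and $\measure(a) = 0$ for every other non-empty atom extends uniquely by set-additivity to $\mathcal{F}_n$. For any non-empty $W \subseteq \Omega$, the set $\iset(X_W) = \bigcup_{i \in W}\iset(X_i)$ decomposes as the disjoint union of atoms $a_I$ with $I \cap W \neq \emptyset$, and hence $\measure(\iset(X_W)) = 1$ iff $\bar U \cap W \neq \emptyset$ iff $W \not\subseteq U$. Comparing with~\eqref{eq:def:step:function}, this is exactly $h_U(W)$. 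By the uniqueness statement in Theorem~\ref{thm:YeungUniqueness}, $\measure = \imeasure^*$ for $h_U$. In particular $\imeasure^* \geq 0$ on all atoms, so $h_U \in \Delta_n$. Since $\Delta_n$ is a cone (non-negativity of I-measures is preserved by conic combinations, which follows from set-additivity), this gives $\positiveConen = \conehull{\stepfn} \subseteq \Delta_n$.

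\textbf{Step 2 (reverse inclusion).} Let $h \in \Delta_n$, and let $\imeasure^*$ denote its I-measure, so $\alpha_U \eqdef \imeasure^*(a_{\bar U}) \geq 0$ for every $U \subsetneq \Omega$. Define $h' \eqdef \sum_{U \subsetneq \Omega} \alpha_U h_U$. By Step~1 and set-additivity of I-measures, the I-measure of $h'$ equals $\sum_U \alpha_U \cdot \mathbf{1}_{a_{\bar U}} = \imeasure^*$ on each atom, and hence $h'$ and $h$ have the same I-measure. By the uniqueness clause of Theorem~\ref{thm:YeungUniqueness}, $h$ and $h'$ agree on every Shannon measure, and in particular $h(W) = h'(W)$ for all $W \subseteq \Omega$. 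Therefore $h = \sum_U \alpha_U h_U \in \positiveConen$, establishing $\Delta_n \subseteq \positiveConen$ and completing the proof.

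\textbf{Main obstacle.} The only non-trivial point is Step~1, namely identifying the unique atom on which the I-measure of a step function is supported; once this is done, the rest follows by additivity and Theorem~\ref{thm:YeungUniqueness}. Note also that the bijection relies crucially on excluding $U = \Omega$ (equivalently, excluding the empty atom), matching the convention in Section~\ref{sec:imeasure} that $h_\Omega \equiv 0$ is not counted as a step function.
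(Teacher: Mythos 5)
Your proposal is correct and follows essentially the same route as the paper: you first identify the I-measure of a step function $h_U$ as the indicator of the single atom in which exactly the variables outside $U$ appear in positive form (this is the paper's Lemma~\ref{lemma:StepFunctionThm}, proved via the uniqueness statement of Theorem~\ref{thm:YeungUniqueness}), and then obtain $\Delta_n \subseteq \positiveConen$ by expanding any nonnegative I-measure as a conic combination of step functions indexed by the atoms, exactly as in the paper's argument with $\vneg(a)$. The only differences are cosmetic (you verify the measure of $\iset(X_W)$ in one step rather than splitting into $W \subseteq U$ and $W \not\subseteq U$, and you phrase the reverse inclusion via an auxiliary $h'$ rather than summing the maps $f(a)$).
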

	For proving this Theorem, we require Lemma~\ref{lemma:StepFunctionThm} that characterizes the i-measure function for step functions. In particular, this lemma shows that step functions are positive polymatroids (i.e., have positive i-measures).
	\def\StepFunctionThm{
		Let $s_U$ be the step function at $U \subseteq [n]$, and $R=[n]\setminus U$.
		The unique I-measure for any step function $s_U$ is:
		\begin{equation}\label{eq:StepFunctionThm}
			\imeasure(a)=
			\begin{cases}
				1 & \text{if }a=\left(\cap_{X\in R}\iset(X)\right)\cap \left(\cap_{X\in U}\isetc(X)\right)\\
				0 & \text{otherwise}
			\end{cases}
		\end{equation}
	}	
	\begin{lem}\label{lemma:StepFunctionThm}
		\StepFunctionThm
	\end{lem}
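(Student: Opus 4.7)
The plan is to invoke the uniqueness portion of Theorem~\ref{thm:YeungUniqueness}: since the $I$-measure is the \emph{unique} signed measure on $\mathcal{F}_n$ that is consistent with the joint entropies $h(X_\I) = s_U(X_\I)$ on all nonempty $\I \subseteq [n]$, it suffices to exhibit one set-additive measure $\measure$ with $\measure(\iset(X_\I)) = s_U(X_\I)$ for every nonempty $\I$, and conclude $\imeasure = \measure$. The candidate is the measure $\measure$ defined by the right-hand side of \eqref{eq:StepFunctionThm}: assign value $1$ to the single atom
\[
a^{*} \eqdef \Bigl(\bigcap_{X \in R} \iset(X)\Bigr) \cap \Bigl(\bigcap_{X \in U} \isetc(X)\Bigr),
\]
and value $0$ to every other atom of $\mathcal{F}_n$. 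Note that $a^{*}$ is a legitimate (nonempty) atom, because our standing convention excludes $U = \Omega$, so $R \neq \emptyset$ and $a^{*}$ has at least one set in positive form.

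The verification is then a direct set-theoretic computation. First I would expand $\iset(X_\I) = \bigcup_{i \in \I}\iset(X_i)$ as a union of atoms: an atom $a = \bigcap_{i=1}^{n} Y_i$ (with $Y_i \in \{\iset(X_i), \isetc(X_i)\}$) is contained in $\iset(X_\I)$ if and only if at least one index $i \in \I$ has $Y_i = \iset(X_i)$. Since $\measure$ is set-additive and concentrated on the single atom $a^{*}$, we get
\[
\measure(\iset(X_\I)) \;=\; \measure(a^{*}) \cdot \mathbf{1}[a^{*} \subseteq \iset(X_\I)] \;=\; \mathbf{1}[a^{*} \subseteq \iset(X_\I)].
\]
By construction, $a^{*}$ has positive form precisely on the variables in $R = [n] \setminus U$, so $a^{*} \subseteq \iset(X_\I)$ iff $\I \cap R \neq \emptyset$, equivalently $\I \not\subseteq U$, equivalently $X_\I \not\subseteq X_U$.

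Comparing against the definition \eqref{eq:def:step:function} of the step function, $s_U(X_\I) = 0$ when $X_\I \subseteq X_U$ and $s_U(X_\I) = 1$ otherwise, we see that $\measure(\iset(X_\I)) = s_U(X_\I)$ for all nonempty $\I \subseteq [n]$. Thus $\measure$ is a signed (in fact non-negative) measure on $\mathcal{F}_n$ consistent with all the joint entropies of $s_U$, and Theorem~\ref{thm:YeungUniqueness} forces $\imeasure = \measure$, which is exactly \eqref{eq:StepFunctionThm}. There is no real obstacle here; the only thing to keep in mind is to write out the atomic decomposition of $\iset(X_\I)$ carefully so that the ``iff $\I \not\subseteq U$'' equivalence is transparent.
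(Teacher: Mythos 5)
Your proposal is correct and follows essentially the same route as the paper's proof: define the candidate measure concentrated on the single atom $a^{*}$, check that it reproduces the joint entropies of $s_U$ (value $0$ on $\iset(X_\I)$ when $\I \subseteq U$, value $1$ otherwise), and then invoke the uniqueness statement of Theorem~\ref{thm:YeungUniqueness}. The only cosmetic difference is that you phrase the verification as a single ``iff $\I \not\subseteq U$'' equivalence, whereas the paper treats the two cases $S \subseteq U$ and $T \not\subseteq U$ separately.
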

	\begin{proof}
		We define the atom $a^*$ by:
		\[
		a^*\eqdef\left(\cap_{X\in R}\iset(X)\right)\cap \left(\cap_{X\in U}\isetc(X)\right)
		\]
		Let $S \subseteq U$. We show that $h(S)=\imeasure(\iset(S))=0$.
		By definition, $\iset(S)=\cup_{Y\in S}\iset(Y)$. Therefore, every atom $a\in \iset(S)$ has at least one set $\iset(Y)$, where $Y\in S$, that appears in positive form.
		In particular, $a^*\notin \iset(S)$. Therefore, by~\eqref{eq:StepFunctionThm}, we have that $h(S)=\imeasure(\iset(S))=0$ as required.
		
		Now, let $T{\not\subseteq} U$, and let $X {\in} T{\setminus} U$. 
		Since $\iset(X)$ appears in positive form in $a^*$, and since $\iset(X) {\subseteq} \iset(T)$, then we have that $h(T){=}1$.
		Therefore, $\imeasure$ as defined in~\eqref{eq:StepFunctionThm} is the I-measure for $s_U$.
		By Theorem~\ref{thm:YeungUniqueness}, the I-measure $\imeasure$ is \e{unique}, and is therefore the only I-measure for $s_U$.
	\end{proof}
	
	\def\PositiveAndStepConesEqual{
		It holds that $\Delta_n=\positiveConen$.
	}
	By Lemma~\ref{lemma:StepFunctionThm}, every step function has a non-negative
	I-measure. As a consequence, any conic combination of step functions has a positive I-measure, proving that $\positiveConen \subseteq \Delta_n$.
	
	Now, let $\imeasure$ denote any non-negative I-measure.
	For every atom $a$, we let $\vneg(a)$ denote the set of variables whose sets appear in negated form (e.g., if $a=\left(\cap_{X\in R}\iset(X)\right)\cap \left(\cap_{X\in U}\isetc(X)\right)$, then $\vneg(a)=U$).
	Let $\mathcal{A}$ denote the atoms of $\mathcal{F}_n$.
	We define the mapping $f: \mathcal{A} \mapsto \positiveConen$ as follows.
	\[
	f(a)\eqdef\imeasure(a)\cdot s_{\vneg(a)}
	\]
	where $s_{\vneg(a)}$ is the step function at $\vneg(a)$.
	Since, by Lemma~\ref{lemma:StepFunctionThm}, the I-measure of $s_{\vneg(a)}$ is $1$ at atom $a$ and $0$ everywhere else we get that:
	\[
	\imeasure=\sum_{a\in \mathcal{A}}f(a)=\sum_{a\in \mathcal{A}}\imeasure(a)\cdot s_{\vneg(a)}
	\]
	proving that $\imeasure \in \positiveConen$, and that $\Delta_n \subseteq \positiveConen$.
This concludes the proof of Theorem~\ref{th:step:positive}.
 % added - new
\section{Discussion and Future Work}

\paragraph{Number of Repairs.}
A natural way to measure the degree of a
constraint in a relation instance $R$ is by the number of repairs
needed to enforce the constraint on $R$.  In the case of a key
constraint, $X \fd Y$, where $XY=\Omega$, our information-theoretic
measure is naturally related to the number of repairs, as follows.  If
$h(Y|X) = c$, where $h$ is the entropy of the empirical distribution
on $R$, then one can check $|R|/|\Pi_X(R)| \leq 2^c$.  Thus, the
number of repairs $|R|-|\Pi_X(R)|$ is at most $(2^c-1)|\Pi_X(R)|$.  We
leave for future work an exploration of the connections between the number
of repairs and information theoretic measures.

\paragraph{Small Model Property.} We have proven in
Sec.~\ref{sec:PBoundedRelaxations} that several classes of
implications (including saturated CIs, FDs, and MVDs) have a ``small
model'' property: if the implication holds for all uniform, 2-tuple
distributions, then it holds in general.  In other words, it suffices
to check the implication on the step functions $\stepfn$.  One
question is whether this small model property continues to hold for
other tractable classes of implications in the literature.  For
example, Geiger and Pearl~\cite{GeigerPearl1993} give an
axiomatization (and, hence, a decision procedure) for \e{marginal}
CIs.  However, marginal CIs do not have the same small model property.
Indeed, the implication
$(X \perp Y), (X \perp Z) \Rightarrow (X \perp YZ)$
holds for all
uniform 2-tuple distributions (because
$I_h(X;YZ) \leq I_h(X;Y) + I_h(X;Z)$ holds for all step functions),
however it fails for the ``parity distribution'' in
Fig.\ref{fig:examples}(b).  We leave for future work an investigation
of the small model property for other classes of constraints.

\paragraph{Bounds on the Factor $\lambda$.} In the early stages of this work
we conjectured that all CIs in $\Gamma_n$ admit a 1-relaxation, until
we discovered the counterexample in Th.~\ref{thm:NoUAI}, where
$\lambda = 3$.  On the other hand, the only general upper bound is
$(2^n)!$.  None of them is likely to be tight.  We leave for future
work the task of finding tighter bounds for $\lambda$.

\bibliographystyle{alphaurl} 
\bibliography{main}

\end{document}